\newtheorem{definition}{Definition}
\newtheorem{lemma}[definition]{Lemma}
\newtheorem{theorem}[definition]{Theorem}
\newtheorem{example}[definition]{Example}
\mathchardef\mhyphen="2D
\newcommand{\eq}{\ensuremath{\mathrm{Eq}}}
\newcommand{\T}{\ensuremath{T}}
\newcommand{\F}{\ensuremath{F}}
\newcommand{\neqq}{\mathrm{Ne}}
\newcommand{\clone}[1]{\ensuremath{[#1]}}
\newcommand{\cclone}[1]{\ensuremath{\langle #1 \rangle}}
\newcommand{\pol}{\ensuremath{\mathrm{Pol}}}
\newcommand{\ppol}{\ensuremath{\mathrm{pPol}}}
\newcommand{\inv}{\ensuremath{\mathrm{Inv}}}
\newcommand{\pcclone}[1]{\ensuremath{\langle #1 \rangle_{\not
      \exists}}}
\newcommand{\pro}{\ensuremath{\mathrm{Proj}}}
\newcommand{\cc}[3]{\ensuremath{\cloneFont{{#1}}^{#2}_{#3}}}
\newcommand{\nandn}[2]{\ensuremath{\mathrm{NA}^{#2}_{\scriptscriptstyle
      #1}}}
\newcommand{\oddn}[2]{\ensuremath{#2\mhyphen\mathrm{ODD}^{\scriptscriptstyle
      #1}}}
\newcommand{\evenn}[2]{\ensuremath{#2\mhyphen\mathrm{EVEN}^{\scriptscriptstyle
      #1}}}
\newcommand{\ra}{\ensuremath{n}}
\newcommand{\overbar}[1]{\mkern
  0.8mu\overline{\mkern-0.8mu#1\mkern-0.8mu}\mkern 0.8mu}
\newcommand{\cloneFont}[1]{\mathsf{#1}}
\newcommand{\ar}{\ensuremath{\mathrm{ar}}}
\newcommand{\domain}{\ensuremath{\mathrm{dom}}}
\newcommand{\zeroes}{\ensuremath{\bm{0}}}
\newcommand{\ones}{\ensuremath{\bm{1}}}
\newcommand{\cols}{\ensuremath{U}}
\newcommand{\functions}[1]{\ensuremath{\cloneFont{OP}_{#1}}}
\newcommand{\pfunctions}[1]{\ensuremath{\cloneFont{pOP}_{#1}}}
\newcommand{\relations}[1]{\ensuremath{\cloneFont{REL}_{#1}}}
\newcommand{\problemFont}[1]{\textsc{#1}}
\newcommand{\CSP}{\protect\ensuremath\problemFont{CSP}}
\newcommand{\SAT}{\protect\ensuremath\problemFont{SAT}}
\newcommand{\USAT}{\protect\ensuremath{\problemFont{U-SAT}}}
\newcommand{\UCSP}{\protect\ensuremath{\problemFont{U-CSP}}}
\newcommand{\UKSAT}[1]{\protect\ensuremath{\problemFont{UNIQUE-$#1$-SAT}}}
\newcommand{\UNSAT}{\protect\ensuremath{\problemFont{UNSAT}}}
\newcommand{\UNCSP}{\protect\ensuremath{\problemFont{UNCSP}}}
\newcommand{\Rdddp}{\ensuremath{R^{\scriptscriptstyle \neq \neq \neq 
    0 1}_{\scriptscriptstyle 1/3}}}
\begin{document}

\title{On the Strength of Uniqueness Quantification in Primitive
  Positive Formulas}

\author[1]{Victor
  Lagerkvist\thanks{victor.lagerkvist@liu.se}}
\author[2]{Gustav Nordh\thanks{gustav.nordh@gmail.com}}

\affil[1]{\small Department of Computer and Information Science, Link\"oping
  University, Link\"oping, Sweden}
\affil[2]{\small Independent researcher}

\date{}

\maketitle


\abstract{
  Uniqueness quantification ($\exists !$) is a quantifier
  in first-order logic where one requires that exactly one element
  exists satisfying a given property. 
  In this paper we investigate the strength of uniqueness
  quantification when it is used in place of existential
  quantification in conjunctive
  formulas over a given set of relations $\Gamma$, so-called {\em primitive positive
    definitions} (pp-definitions). We fully classify the Boolean sets of relations
  where uniqueness quantification has the same strength as existential
  quantification in pp-definitions and give several results valid for arbitrary finite
  domains. 
  We also consider applications of $\exists !$-quantified pp-definitions in
  computer science, which can be used to study the computational
  complexity of problems where the number of solutions is
  important. Using our classification we give a new and
  simplified proof of the trichotomy theorem for the unique
  satisfiability problem, and prove a general result for the unique
  constraint satisfaction problem. 
  Studying these problems in a more rigorous framework also
  turns out to be advantageous in the context of lower bounds, and we
  relate the complexity of these problems to the {\em
    exponential-time hypothesis}.
}
\section{Introduction}
\label{sec:intro}
A {\em primitive positive definition} (pp-definition) over a
relational structure ${\cal A} = (A; R_1, \ldots, R_k)$ is a first-order formula $\exists y_1,
\ldots, y_m \colon \varphi(x_1, \ldots, x_n, y_1, \ldots, y_m)$ with
free variables $x_1, \ldots, x_n$ where
$\varphi(x_1, \ldots, x_n, y_1, \ldots, y_m)$ is a conjunctive
formula. Primitive positive definitions have been
extremely influential in the last decades due to their one-to-one
correspondence with term algebras in universal algebra, making them a
cornerstone in the {\em algebraic approach} for studying
computational complexity~\cite{barto2017,creignou2008b}. 
In short, pp-definitions can be used to obtain
classical ``gadget reductions'' between problems by
replacing constraints by their pp-definitions, which in the process
might introduce fresh variables viewed as being existentially quantified. This approach has
successfully been used to study the complexity of e.g.\ the {\em
constraint satisfaction problem} (CSP) which recently led to a
dichotomy between tractable and NP-complete CSPs~\cite{bulatov2017,zhuk2017}. However, these reductions are
typically not sufficient for optimisation problems and other variants
of satisfiability, where one needs reductions preserving the
number of models, so-called {\em parsimonious} reductions. Despite the
tremendous advances in the algebraic approach there is
currently a lack of methods for studying problems
requiring parsimonious reductions, and in this paper we take the first
step in developing such a framework.
The requirement of parsimonious reductions can be realised by
restricting existential quantification to {\em unique quantification}
($\exists !$), where we explicitly require that the variable in question can be
expressed as a unique combination of other variables. 
That is, ${\cal A} \models \exists ! x_i \colon \varphi(x_1, \ldots, x_i,
\ldots, x_n)$ if and only if there exists a function $f$ such
that $f(a_1, \ldots, a_{i-1}, a_{i+1}, \ldots, a_n) = a_i$, for all $a_1,
\ldots, a_{i-1}, a_i, a_{i+1}, \ldots, a_n \in A$ where ${\cal A}
\models \varphi(a_1, \ldots, a_{i-1}, a_i, a_{i+1}, \ldots,
a_n)$. This notion of unique  quantification is not the
only one possible and we discuss an alternative viewpoint in Section~\ref{sec:conclusions}.
As a first step in understanding the applicability of uniqueness quantification
in complexity classifications we are interested in studying
the expressive power of unique existential quantification when used in
place of existential quantification in pp-definitions, which we call
{\em upp-definitions}. Any variables introduced by the
resulting gadget reductions are then uniquely determined and do
not affect the number of models.

Our main question is then: for
which relational structures $\mathcal{A}$ is it the case that 
for every pp-formula $\varphi(x_1, \ldots, x_n)$ there exists a upp-formula
$\vartheta(x_1, \ldots, x_n)$ such that
$\mathcal{A} \models \varphi(a_1, \ldots, a_n) \Leftrightarrow
\mathcal{A} \models \vartheta(a_1, \ldots, a_n)$ for all $a_1,
\ldots, a_n \in A$? If this holds over ${\mathcal{A}}$ then uniqueness
quantification has the same expressive power as existential
quantification. The practical motivation for studying this is that if
upp-definitions are as powerful as pp-definitions in $\mathcal{A}$,
then, intuitively, any gadget reduction between two problems can be
replaced with a parsimonious reduction.
Given the generality of this question a complete answer for arbitrary
relational structures is well out of reach, and we begin by introducing simplifying concepts.
First, 
pp-definitions can be viewed as a closure operator over
relations, and the resulting closed sets of relations are
known as {\em relational clones}, or {\em co-clones}~\cite{lau2006}. For each
universe $A$ the set of co-clones over $A$ then forms a
lattice when ordered by set inclusion, and given a set of relations
$\Gamma$ we write $\cclone{\Gamma}$ for the smallest co-clone
over $A$ containing $\Gamma$. Similarly, closure under upp-definitions
can also be viewed as a closure operator, and we write $\cclone{\Gamma}_{\exists !}$ for the
smallest set of relations over $A$ containing $\Gamma$ and which is closed under
upp-definitions.
Using these notions the question of the expressive strength of
upp-definitions can be stated as: for which sets of relations
$\Gamma$ is it the case that $\cclone{\Gamma} =
\cclone{\Gamma}_{\exists !}$? The main advantage behind this viewpoint
is that a co-clone $\cclone{\Gamma}$ can be described as the set of
relations {\em invariant} under a set of operations $F$, $\inv(F)$,
such that the operations in  $F$ describe all permissible combinations of tuples
in relations from $\Gamma$. An operation $f \in F$ is also said to be
a {\em polymorphism} of $\Gamma$ and if we let $\pol(\Gamma)$ be the
set of polymorphisms of $\Gamma$ then $\pol(\Gamma)$ is called a {\em
  clone}. This relationship allows us to
characterise the cases that need to be considered by using known properties of
$\pol(\Gamma)$, which is sometimes simpler than working only on the
relational side. This strategy will prove to be particularly useful for
Boolean sets of relations since all Boolean clones and co-clones have
been determined~\cite{pos41}.

\paragraph*{Our Results}
Our main research question is to identify $\Gamma$ such that
$\cclone{\Delta}_{\exists !} = \cclone{\Gamma}$ for each $\Delta$ such
that $\cclone{\Delta} = \cclone{\Gamma}$. If this holds we say that
$\cclone{\Gamma}$ is {\em $\exists !$-covered}. 
The main difficulty for proving this is that it might not be possible to directly transform a pp-definition
into an equivalent upp-definition. To mitigate this we analyse 
relations in co-clones using {\em partial polymorphisms}, which allows us to
analyse their expressibility in a very nuanced way.
In Section~\ref{sec:constructions} we show
how partial polymorphisms can be leveraged to prove that a given co-clone is
$\exists !$-covered. Most notably, we prove
that $\cclone{\Gamma}$ is $\exists !$-covered if $\pol(\Gamma)$
consists only of projections of the form $\pi(x_1,
\ldots, x_i, \ldots, x_n) = x_i$, or of projections and constant operations. As a
consequence, $\Gamma$ pp-defines all relations over $A$ if and only if
$\Gamma$ upp-defines all relations over $A$. One way of interpreting
this result is that if $\Gamma$ is ``sufficiently expressive''
then pp-definitions can always be turned into
upp-definitions. 
However, there also exists $\exists !$-covered co-clones where the
reason is rather that $\Gamma$ is ``sufficiently weak''.
For
example, if $\Gamma$ is invariant under the affine operation $x - y + z
\,(\mathrm{mod} \, |A|)$, then existential quantification does not add any
expressive power over unique existential quantification, since any existentially quantified variable occurring in a
pp-definition can be expressed via a linear equation, and is therefore
uniquely determined by other arguments. 
In Section~\ref{sec:bool} we then turn to the
Boolean domain, and obtain a full classification of the
$\exists !$-covered co-clones. Based on the results in
Section~\ref{sec:constructions} it is reasonable to expect that the covering
property holds for sufficiently expressive languages and sufficiently
weak languages, but that there may exist cases in between where unique
quantification differs from existential quantification.
This is indeed true, and we prove that the Boolean co-clones corresponding to
non-positive Horn
clauses, implicative and positive clauses, and their dual
cases, are not $\exists!$-covered. 
Last, in
Section~\ref{sec:complexity} we demonstrate how the results from
Section~\ref{sec:power} can be used for obtaining complexity
classifications of computational problems. 
One example of a problem requiring parsimonious reductions is the {\em unique satisfiability
  problem} over a Boolean set of relations $\Gamma$ ($\USAT(\Gamma)$) and its multi-valued generalization the {\em
  unique constraint satisfaction problem} ($\UCSP(\Gamma)$), where the goal is
to determine if there exists a unique model of a given conjunctive
$\Gamma$-formula. The complexity of $\USAT(\Gamma)$ was settled by
Juban~\cite{juban99} for finite sets of relations $\Gamma$,
essentially using a large case analysis. Using the results from
Section~\ref{sec:bool} this complexity classification can instead be
proved in a succinct manner, and we are also able to extend
the classification to infinite $\Gamma$ and large classes of
non-Boolean $\Gamma$. This systematic approach is also advantageous for proving lower
bounds, and we relate the complexity of $\USAT(\Gamma)$ to the highly
influential {\em exponential-time hypothesis} (ETH)~\cite{impagliazzo2001}, by
showing that none of the intractable cases of $\USAT(\Gamma)$ admit subexponential
algorithms without violating the ETH. 

\paragraph*{Related Work}
Primitive positive definitions with uniqueness quantification 
appeared in Creignou \& Hermann~\cite{CREIGNOU19961} in the context of
``quasi-equivalent'' logical formulas, and in the textbook by Creignou
et al.~\cite{cks01} under the name of {\em faithful implementations}. Similarly, upp-definitions were
utilised by Kavvadias \& Sideri~\cite{kavvadias98} to study the
complexity of the {\em inverse satisfiability problem}. 
A related topic is {\em frozen quantification}, which can be
viewed as uniqueness quantification restricted to variables that are constant
in any model~\cite{nordh2009}. 

\section{Preliminaries}
\label{sec:prel}

\subsection{Operations and Relations}

In the sequel, let $D \subseteq \mathbb{N}$ be a finite domain of
values. A $k$-ary function $f \colon D^{k} \rightarrow D$ is sometimes
referred to as an {\em operation} over $D$ and we write $\ar(f)$ to
denote the arity $k$. Similarly, a {\em partial operation} over $D$ is
a map $f \colon \domain(f) \rightarrow
D$ where $\domain(f) \subseteq D^k$ is called the {\em domain} of $f$,
and we let $\ar(f) = k$ be the arity of $f$. If $f$ and $g$ are
$k$-ary partial operations such that $\domain(f) \subseteq
\domain(g)$ and $f(t) = g(t)$ for each $t \in \domain(f)$ then $f$ is
said to be a {\em suboperation} of $g$.
For $k \geq 1$ and $1 \leq i \leq k$ we let
$\pi^{k}_i$ be the $i$th {\em projection}, i.e., $\pi^k_i(x_1, \ldots,
x_i, \ldots, x_{k}) = x_i$ for all $x_1, \ldots, x_i, \ldots, x_k \in
D$. We write $\functions{D}$ for the set of all operations over $D$ and
$\pfunctions{D}$ for the set of all partial operations over $D$.
As a notational shorthand we for $k \geq 1$ write $[k]$ for the set $\{1,
\ldots, k\}$. For $d \in D$ we by $\mathbf{d}^n$ denote the
constant $n$-ary tuple $(d, \ldots, d)$. Say that a $k$-ary $f \in
\functions{D}$ is {\em essentially unary} if there exists unary $g \in
\functions{D}$ and $i$ such that $f(x_1, \ldots, x_i, \ldots, x_n) =
g(x_i)$ for all $x_1, \ldots, x_i, \ldots, x_n \in D$.

Given an $n$-ary relation $R \subseteq D^n$ we write
$\ar(R)$ to denote its arity $n$. If $t = (x_1, \ldots, x_n)$ is an
$n$-ary tuple we write $t[i]$ to denote the
$i$th element $x_i$, and $\pro_{i_1, \ldots, i_{n'}}(t) = (t[i_1],
\ldots, t[i_{n'}])$ to denote the {\em
  projection} on the coordinates $i_1, \ldots,
i_{n'} \in \{1, \ldots, n\}$. Similarly, if $R$ is an $n$-ary relation 
we let $\pro_{i_1, \ldots, i_{n'}}(R) = \{\pro_{i_1, \ldots, i_{n'}}(t)
\mid t \in R\}$.
The $i$th argument of a relation $R$
is said to be {\em redundant} if there exists $j \neq i$ such that
$t[i] = t[j]$ for each $t \in R$, and is said to be {\em fictitious}
if for all $t \in R$ and $d \in D$ have $t' \in R$
where $t'[i] = d$ and $\pro_{1, \ldots, i - 1, i+1, \ldots,
  n}(t) = \pro_{1, \ldots, i - 1, i+1, \ldots,
  n}(t')$.

We write $\eq_D$ for the equality
relation $\{(x,x) \mid x \in D\}$ over $D$.  We will often represent
relations by their defining first-order formulas, and if $\varphi(x_1,
\ldots, x_n)$ is a first-order formula with $n$ free variables we
write $R(x_1, \ldots, x_n) \equiv \varphi(x_1, \ldots, x_n)$ to define
the relation $R = \{(f(x_1), \ldots, f(x_n)) \mid f$ is a model of
$\varphi(x_1, \ldots, x_n)\}$. We let $\relations{D}^n$ be the set of
all $n$-ary relations over $D$, $\relations{D}^{\leq n} = \bigcup^{n}_{i = 1} \relations{D}^{n}$, and $\relations{D} =
\bigcup^{\infty}_{i = 1} \relations{D}^i$. A set $\Gamma \subseteq
\relations{D}$ will sometimes be called a {\em constraint language}.
Each $k$-ary operation $f \in \functions{D}$ can be associated with a $(k+1)$-ary
relation $f^{\bullet} = \{(x_1, \ldots, x_k, f(x_1, \ldots, x_k)) \mid
x_1, \ldots, x_k \in D\}$, called the {\em graph} of $f$.

\subsection{Primitive Positive Definitions and Determined Variables}

We say that an $n$-ary relation $R$ has a {\em primitive positive definition}
(pp-definition) over a set of relations $\Gamma$ over a domain $D$ if $R(x_1, \ldots,
x_n) \equiv \exists y_1, \ldots, y_{n'} \colon R_1(\mathbf{x}_1) \land \ldots
\land R_m(\mathbf{x}_m)$ where each $\mathbf{x}_i$ is a tuple of
variables over $x_1, \ldots, x_n, y_1, \ldots, y_{n'}$ of length
$\ar(R_i)$ and each $R_i \in \Gamma \cup \{\eq_D\}$. Hence, $R$ can be
defined as a (potentially) existentially quantified conjunctive
formula over $\Gamma \cup \{\eq_D\}$. We will occasionally be
interested in pp-definitions not making use of existential
quantification, and call pp-definitions of this restricted type {\em
  quantifier-free primitive positive definitions} (qfpp-definitions).


\begin{definition}
  Let $R$ be an $n$-ary relation over a domain $D$. We say that $1 \leq i \leq n$ is
  {\em uniquely determined}, or just {\em determined}, if there exists
  $i_1,
  \ldots, i_k \in [n]$ and a function $h : D^{k} \rightarrow D$ such that $h(t[i_1], \ldots,
  \ldots, t[i_k]) = t[i]$ for each
  $t \in R$.
\end{definition}

When defining relations in terms of logical formulas we will occasionally
also say that the $i$th variable is uniquely determined, rather than
the $i$th index. 

\begin{definition} \label{def:uppdef2}
  An $n$-ary relation $R$ has a {\em unique primitive positive
    definition} (upp-definition) over a set of relations $\Gamma$ if
  there exists a pp-definition  \[R(x_1, \ldots, x_n) \equiv \exists y_1, \ldots, y_{n'} \colon R_1(\mathbf{x}_1) \land \ldots
\land R_m(\mathbf{x}_m)\] of $R$ over $\Gamma$  where each $y_i$ is uniquely determined by
$x_1, \ldots, x_n$.
\end{definition}

We typically write $\exists ! y_1, \ldots, y_{n'}$ for
the existentially quantified variables in a upp-definition. 
Following Nordh \& Zanuttini~\cite{nordh2009} we refer to unique
existential quantification over constant arguments as {\em frozen
existential quantification} ($i \in [\ar(R)]$ is constant if there
exists $d \in D$ such that $t[i] = d$ for each $t \in R$). If $R$ is
upp-definable over $\Gamma$ via a upp-definition only making use of
frozen existential quantification then we say that $R$ is {\em
freezingly pp-definable} (fpp-definable) over $\Gamma$. Let us define
the following closure operators over relations.

\begin{definition}
  Let $\Gamma$ be a set of relations. Then we define
  (1) $\cclone{\Gamma} = \{R \mid R$ has a pp-definition over $\Gamma\}$,
  (2), $\cclone{\Gamma}_{\exists !} = \{R \mid R$ has a upp-definition over
  $\Gamma\}$, 
  (3), $\cclone{\Gamma}_{\mathrm{fr}} = \{R \mid R$ has an fpp-definition over
  $\Gamma\}$, and
(4), $\pcclone{\Gamma}= \{R \mid R$ has a qfpp-definition over
  $\Gamma\}$.
\end{definition}

In all cases $\Gamma$ is called a {\em base}. If $\Gamma = \{R\}$ is singleton then we write $\cclone{R}$ instead of
$\cclone{\Gamma}$, and similarly for the other operators. Sets of relations of the form $\cclone{\Gamma}$ are usually called
{\em relational clones}, or {\em co-clones}, sets of the form
$\pcclone{\Gamma}$ {\em weak systems}, or {\em weak partial
  co-clones}, and sets of the form $\cclone{\Gamma}_{\mathrm{fr}}$ are
known as {\em frozen partial co-clones}. 
Note that $\cclone{\Gamma} \supseteq
\cclone{\Gamma}_{\exists !} \supseteq \cclone{\Gamma}_{\mathrm{fr}}
\supseteq \pcclone{\Gamma}$ for any $\Gamma \subseteq \relations{D}$.

Co-clones and weak systems can be described via algebraic invariants
known as {\em polymorphisms} and {\em partial polymorphism}. More
precisely, if $R \in \relations{D}^n$ and $f \in \functions{D}$ is a
$k$-ary operation, then for $t_1, \ldots, t_k \in R$ we let $f(t_1,
\ldots, t_k) = (f(t_1[1], \ldots, t_k[1]), \ldots, f(t_1[n], \ldots,
t_k[n]))$. We then say that a $k$-ary partial operation $f$ {\em
  preserves} an $n$-ary relation $R$ if $f(t_1, \ldots, t_k) \in R$ or
there exists $i \in [n]$ such that $(t_1[i], \ldots, t_k[i]) \notin
\domain(f)$, for each sequence of tuples $t_1, \ldots, t_k \in R$. If
$f$ preserves $R$ then $R$ is also said to be {\em invariant} under $f$. Note that if $f$ is total then the condition is
simply that $f(t_1, \ldots, t_k) \in R$ for each sequence $t_1,
\ldots, t_k \in R$. We then let $\ppol(R) = \{f \in \pfunctions{D} \mid f \textrm{ preserves }
  R\}$, $\pol(R) = \ppol(R) \cap \functions{D}$, $\ppol(\Gamma) =
  \bigcap_{R \in \Gamma}\ppol(R)$, and $\pol(\Gamma) = \bigcap_{R \in
    \Gamma}\pol(R)$. Similarly, if $F$ is a set of (partial)
  operations we let $\inv(F)$ be the set of relations
invariant under $F$, and write $\inv(f)$ if $F = \{f\}$ is singleton.
It is then known that $\inv(F)$ is a co-clone if $F \subseteq
\functions{D}$ and that $\inv(F)$ is a weak system if $F \subseteq
\pfunctions{D}$. More generally, $\cclone{\Gamma} =
\inv(\pol(\Gamma))$ and $\pcclone{\Gamma} =
\inv(\ppol(\Gamma))$, resulting in the following {\em Galois connections}.

\begin{theorem}[\cite{BKKR69i,BKKR69ii,Gei68,romov1981}]
  \label{theorem:galois}
  Let $\Gamma$ and $\Delta$ be two sets of relations. Then 
   $\Gamma \subseteq \cclone{\Delta}$ if and only if  $\pol(\Delta) \subseteq
    \pol(\Gamma)$ and $\Gamma \subseteq \pcclone{\Delta}$ if and only if  $\ppol(\Delta) \subseteq
    \ppol(\Gamma)$.
\end{theorem}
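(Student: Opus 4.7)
The plan is to prove both biconditionals by the classical indicator construction of Geiger and Bodnarchuk--Kaluzhnin--Kotov--Romov, suitably adapted to partial operations. Each biconditional splits into an easy soundness direction and a harder completeness direction.

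For soundness in both cases, the plan is to argue by structural induction on the defining formulas that preservation is inherited from $\Delta$ to any relation pp-defined (or qfpp-defined) over it: atomic $\Delta$-constraints are preserved by $\pol(\Delta)$ or $\ppol(\Delta)$ by definition; conjunction is immediate since each conjunct is preserved; $\eq_D$ is preserved by any (partial) operation because constant inputs yield constant outputs; and in the total case existential quantification is handled by combining witnesses coordinatewise through $f$. The absence of existential quantification in qfpp-definitions is precisely why the finer invariant $\ppol$ is the right match for $\pcclone{\cdot}$, while $\pol$ is the right match for $\cclone{\cdot}$.

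For completeness in the partial case, fix $R \in \Gamma$ of arity $n$ with tuples $t^1,\ldots,t^m$ and let $\Phi(x_1,\ldots,x_n)$ be the conjunction of all atoms $R'(x_{i_1},\ldots,x_{i_k})$ with $R' \in \Delta$ of arity $k$ and $(t^j[i_1],\ldots,t^j[i_k]) \in R'$ for every $j \in [m]$. Trivially every $t^j$ satisfies $\Phi$. Conversely, any $(a_1,\ldots,a_n) \models \Phi$ outside $R$ determines a partial $m$-ary operation $f$ with $\domain(f) = \{(t^1[i],\ldots,t^m[i]) : i \in [n]\}$ and $f(t^1[i],\ldots,t^m[i]) = a_i$; by construction of $\Phi$ this $f$ preserves every $R' \in \Delta$, yet it fails to preserve $R$ at $(t^1,\ldots,t^m)$, contradicting $\ppol(\Delta) \subseteq \ppol(R)$. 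Hence $\Phi$ is a qfpp-definition of $R$ over $\Delta$.

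The same bare conjunction is not enough in the total case, since a total polymorphism must also be defined at $m$-tuples $b \in D^m$ that are not columns of $R$. The plan is therefore to introduce an existentially quantified variable $y_b$ for every $b \in D^m$, identify the free $x_i$ with $y_{c_i}$ for the $i$th column $c_i = (t^1[i],\ldots,t^m[i])$, and conjoin $R'(y_{b_1},\ldots,y_{b_k})$ whenever the rowwise vectors $(b_1[j],\ldots,b_k[j])$ all lie in $R'$. The witnesses of any satisfying assignment then assemble into a total $m$-ary $f : D^m \to D$ that lies in $\pol(\Delta) \subseteq \pol(R)$ by construction, and sending $(t^1,\ldots,t^m)$ through $f$ forces $(a_1,\ldots,a_n)$ into $R$. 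The main obstacle is this last step: producing the correct pp-formula and verifying that its satisfying tuples coincide exactly with those of $R$ requires threading the existential witnesses into a bona fide $m$-ary polymorphism of $\Delta$. This is the technical heart of the Galois correspondence and the reason the constructed pp-formula is necessarily exponential in $|R|$.
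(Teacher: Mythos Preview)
The paper does not prove Theorem~\ref{theorem:galois}; it is stated with citations to Geiger and Bodnarchuk--Kalu\v{z}nin--Kotov--Romov as a classical result, so there is no in-paper proof to compare against. Your proposal is the standard indicator-relation argument from those sources and is essentially correct.

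One small point to tighten: in the partial (qfpp) completeness step you take $\Phi$ to be the conjunction of all atoms $R'(x_{i_1},\ldots,x_{i_k})$ with $R' \in \Delta$, but you should also include the equality atoms $\eq_D(x_i,x_j)$ whenever columns $i$ and $j$ of $R$ coincide (recall the paper's qfpp-definitions are over $\Delta \cup \{\eq_D\}$). Without these, a satisfying assignment could have $a_i \neq a_j$ for identical columns, and then your partial operation $f$ with $\domain(f) = \{(t^1[i],\ldots,t^m[i]) : i \in [n]\}$ is not well-defined as a function. In the total case this issue disappears because you index variables by tuples $b \in D^m$ rather than by column positions, so identical columns automatically share a variable. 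With that adjustment your argument goes through exactly as in the cited references.
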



Last, we remark that sets of the form $\pol(\Gamma)$ and $\ppol(\Gamma)$ are usually called
{\em clones}, and {\em strong partial clones}, respectively, and form
lattices when ordered by set inclusion. Boolean clones are
particularly well understood and the induced lattice is known as {\em
  Post's lattice}~\cite{pos41}. If $F
\subseteq \functions{D}$ then we write
$\clone{F}$ for the
intersection of all clones over $D$ containing
$F$. Hence, $\clone{F}$ is the smallest clone over $D$ containing $F$.

\subsection{Weak and Plain Bases of Co-Clones}
\label{sec:prel_weak}
In this section we introduce two special types of bases of a co-clone,
that are useful for understanding the expressibility of
upp-definitions.

\begin{definition}[Schnoor \& Schnoor~\cite{schnoor2008a}] \label{def:weak}
  Let $\cclone{\Gamma}$ be a co-clone. A base $\Gamma_w$ of $\cclone{\Gamma}$ with the
  property that $\pcclone{\Gamma_w} \subseteq \pcclone{\Delta}$ for
  every base $\Delta$ of $\cclone{\Gamma}$ is called a {\em weak base}
  of $\cclone{\Gamma}$.
\end{definition}

Although not immediate from Definition~\ref{def:weak}, Schnoor \&
Schnoor~\cite{schnoor2008a} proved that a weak base
exists whenever $\cclone{\Gamma}$ admits a finite base, by 
the following relational construction.

\begin{definition} \label{def:core_construction}
    For $s \geq 1$ we let $\cols^s_D = \{t_1, \ldots, t_s\}$ where $t_1, \ldots, t_s$ is the sequence of $|D|^s$-ary tuples
    such that $(t_1[1], \ldots,
    t_s[1]), \ldots, (t_1[|D|^s], \ldots, t_s[|D|^s])$ is a
    lexicographic enumeration of $D^s$. 
\end{definition}
    Given a relation $R$ and a set of operations $F$ over a domain
    $D$, we let \[F(R)= \bigcap_{R' \in \inv(F), R \subseteq R' \in \relations{D}} R'.\]
We typically write $\cols^s$ instead
of $\cols^s_D$ if the domain $D$ is clear from the context, and
say that a co-clone $\inv(\cloneFont{C})$ has {\em core-size} $s$ if
there exist relations $R, R'$ such that $\pol(R) = \cloneFont{C}$, $R =
\cloneFont{C}(R')$, and $s = |R'|$. Weak bases can then be described
via core-sizes as follows (a clone $\cloneFont{C}$ is finitely related
if there exists a finite base of $\inv(\cloneFont{C})$).

\begin{theorem}[Schnoor \& Schnoor~\cite{schnoor2008a}] \label{thm:schnoor}
  Let $\cloneFont{C}$ be a finitely related clone where
  $\inv(\cloneFont{C})$ has core-size $s$. Then $\cloneFont{C}(\cols^t)$
  is a weak base of $\inv(\cloneFont{C})$ for every $t \geq s$.
\end{theorem}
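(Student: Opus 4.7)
The plan is to verify the two clauses of Definition~\ref{def:weak}: first that $\cloneFont{C}(\cols^t)$ is a base of $\inv(\cloneFont{C})$, and second that $\pcclone{\cloneFont{C}(\cols^t)} \subseteq \pcclone{\Delta}$ for every base $\Delta$ of $\inv(\cloneFont{C})$. I would handle these two parts separately and argue the second via the Galois connection.

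For the first part, $\cloneFont{C}(\cols^t)$ belongs to $\inv(\cloneFont{C})$ by construction, so $\cclone{\cloneFont{C}(\cols^t)} \subseteq \inv(\cloneFont{C})$. For the reverse inclusion I would invoke the core-size hypothesis to pick relations $R, R'$ with $\pol(R) = \cloneFont{C}$, $R = \cloneFont{C}(R')$, and $|R'| = s$; since $\{R\}$ is already a base, it suffices to pp-define $R$ from $\cloneFont{C}(\cols^t)$. Writing $R'$ as a $t \times n$ matrix (after padding the $s$ rows of $R'$ with duplicates, which is possible because $t \geq s$), each of the $n$ columns is an element of $D^t$ and hence coincides with some column of $\cols^t$. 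Projecting $\cols^t$ on the $n$ corresponding coordinates yields exactly the set of rows of this matrix, namely $R'$; applying the same projection to $\cloneFont{C}(\cols^t)$ yields $\cloneFont{C}(R') = R$, and since projections translate into pp-definitions with existentially quantified slack variables, this completes the first part.

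For the second part, by the Galois connection of Theorem~\ref{theorem:galois} it suffices to show that $\ppol(\Delta) \subseteq \ppol(\cloneFont{C}(\cols^t))$. Fix $f \in \ppol(\Delta)$ of arity $k$ and tuples $u_1, \ldots, u_k \in \cloneFont{C}(\cols^t)$ such that $(u_1[j], \ldots, u_k[j]) \in \domain(f)$ for every coordinate $j$. Using the standard description of clone closures, and the fact that $\cloneFont{C}$ contains all projections to normalise arities, I would write each $u_i = g_i(t_1, \ldots, t_t)$ for some $g_i \in \cloneFont{C}$ of arity $t$. Now form the composed partial operation $h(x_1, \ldots, x_t) = f(g_1(x_1, \ldots, x_t), \ldots, g_k(x_1, \ldots, x_t))$: as a composition of a partial polymorphism of $\Delta$ with total polymorphisms of $\Delta$, it is itself a partial polymorphism of $\Delta$. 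Crucially, because the columns of $\cols^t$ enumerate every tuple in $D^t$, the hypothesis that $(u_1[j], \ldots, u_k[j]) \in \domain(f)$ for every $j$ is exactly the assertion that $h$ is defined on all of $D^t$; so $h$ is total and consequently $h \in \pol(\Delta) = \cloneFont{C}$. We conclude $f(u_1, \ldots, u_k) = h(t_1, \ldots, t_t) \in \cloneFont{C}(\cols^t)$, as required.

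The main obstacle is this last step forcing $h$ to be total, and this is where the lexicographic enumeration property of $\cols^t$ is essential: the enumeration must exhaust $D^t$ so that every possible input to $f$ appears as a column of $\cols^t$ and is therefore covered by the domain hypothesis. The core-size bound $t \geq s$ by contrast is used only in the first part, guaranteeing enough rows in $\cols^t$ to realise the generating set $R'$ as a column-selection.
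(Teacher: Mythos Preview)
The paper does not actually prove this theorem: it is stated with attribution to Schnoor \& Schnoor~\cite{schnoor2008a} and no argument is given, so there is nothing in the paper to compare against directly. That said, your proposal is correct and is essentially the standard argument one finds in the cited source.

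A couple of points worth tightening. In the first part you assert that projecting $\cloneFont{C}(\cols^t)$ onto the chosen coordinates yields $\cloneFont{C}(R')$; this is true, but it relies on the description $\cloneFont{C}(\cols^t) = \{f(t_1,\ldots,t_t) \mid f \in \cloneFont{C},\ \ar(f)=t\}$, whereas the paper defines $F(R)$ as an intersection over invariant relations. The equivalence of these two descriptions is standard (it is the content of the Galois connection for total operations), but you should state it explicitly since the paper does not. The padding-with-duplicates step is fine precisely because $\cloneFont{C}$ contains all projections, so any $t$-ary term over the padded list collapses to an $s$-ary term over $R'$ and vice versa.

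In the second part your argument is clean; the key observation that the columns of $\cols^t$ exhaust $D^t$, forcing the composed partial operation $h$ to be total, is exactly the mechanism behind weak bases and is the reason the construction $\cloneFont{C}(\cols^t)$ appears in Definition~\ref{def:core_construction}. Note that closure of $\ppol(\Delta)$ under composition with total polymorphisms (which you invoke) is not proved in the paper either, but it is an elementary fact about strong partial clones and safe to use.
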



\begin{table}
\centering
\caption{Relations.}
\scriptsize
\label{table:relations}
\begin{tabular}{l l}
  \hline
  Relation & Definition \\
  \hline
  {\scriptsize $\F$} &  {\scriptsize $\{(0)\}$} \\
  {\scriptsize $\T$} &  {\scriptsize $\{(1)\}$} \\
  {\scriptsize $\neqq$} &  {\scriptsize $\{(0,1), (1,0)\}$} \\       
 {\scriptsize $\evenn{}{n}$} & {\scriptsize $\{(x_1, \ldots, x_n) \in \{0,1\}^n \mid x_1 + \ldots
 + x_n$ is even$\}$} \\
 {\scriptsize $\evenn{n \neq}{n}$}  & {\scriptsize $\evenn{}{n}(x_1, \ldots, x_n) \land
 \neqq(x_1, x_{n+1}) \land \ldots \land \neqq(x_n, x_{2n})$ }\\
 {\scriptsize $\oddn{}{n}$} & {\scriptsize $\{(x_1, \ldots, x_n) \in \{0,1\}^n \mid x_1 + \ldots + x_n$ is odd$\}$} \\
 {\scriptsize $\oddn{n \neq}{n}$}  & {\scriptsize $\oddn{}{n}(x_1, \ldots, x_n) \land
 \neqq(x_1, x_{n+1}) \land \ldots \land \neqq(x_n, x_{2n})$}\\
  {\scriptsize $\nandn{}{n}$}  & {\scriptsize $\{0,1\}^n \setminus \{(1, \ldots, 1)\}$
  }\\
\hline
\end{tabular}                         
\end{table}

See Table~\ref{table:weak_plain_bases} for a list of
weak bases for the Boolean
co-clones of interest in this paper~\cite{Lagerkvist2014,lagerkvist2016}. Here, and in the
sequel, we use the co-clone terminology developed by Reith \& Wagner~\cite{reith2000}
and B\"ohler et al.~\cite{bcrv03},
where a Boolean co-clone $\inv(\cloneFont{C})$ is typically written as $\cc{IC}{}{}$.
Many relations in Table~\ref{table:weak_plain_bases}
are provided by their defining logical formulas; for example, $x_1
\rightarrow x_2$ is the binary relation $\{(0,0), (0,1), (1,1)\}$. See
Table~\ref{table:relations} for definitions of the remaining
relations. As a convention we use $c_0$ to indicate a variable which
is constant 0 in any model, and $c_1$ for a variable which is constant 1.
On the functional side we use the bases by B\"ohler et al.~\cite{bcrv03} and let $\cc{I}{}{2} = \clone{\pi^1_1}, \cc{I}{}{0} = \clone{0},
\cc{I}{}{1} = \clone{1}, \cc{I}{}{} = \clone{\{0,1\}}$, $\cc{N}{}{2} =
\clone{\overbar{x}}$, $\cc{N}{}{} = \clone{\{\overbar{x}, 0, 1\}}$,
$\cc{E}{}{2} = \clone{\land}$, $\cc{E}{}{0} = \clone{\{\land, 0\}}$,
$\cc{E}{}{1} = \clone{\{\land, 1\}}$, $\cc{E}{}{} = \clone{\{\land, 0,
  1\}}$, $\cc{L}{}{2} =
  \clone{x \oplus y \oplus z}$, and $\cc{S}{}{11} = \clone{\{x \land (y \lor z), 0\}}$, where
$\overbar{x} = 1 - x$ and where $0,1$ are shorthands for the two
constant Boolean operations.
We conclude this section by defining the dual notion of a weak base.

\begin{definition}[Creignou et al.~\cite{creignou2008}]
  Let $\cclone{\Gamma}$ be a co-clone. A base $\Gamma_p$ of $\cclone{\Gamma}$ with the
  property that $\pcclone{\Delta} \subseteq \pcclone{\Gamma_p}$ for
  every base $\Delta$ of $\cclone{\Gamma}$ is called a {\em plain base}
  of $\cclone{\Gamma}$.
\end{definition}

Clearly, every co-clone is a trivial plain base of itself,
but the question remains for which co-clones more succinct plain bases
can be found. For arbitrary finite domains little is known but in the
Boolean domain succinct plain bases have been described~\cite{creignou2008} (see Table~\ref{table:weak_plain_bases}).



\begin{table}  
\centering
\caption{Weak and plain bases of selected Boolean co-clones.}
\tiny
\label{table:weak_plain_bases}
\begin{tabularx}{\textwidth}{l l l}
  \hline
  $\cloneFont{C}$ & Weak base of $\inv(\cloneFont{C})$ & Plain base of $\inv(\cloneFont{C})$\\
  \hline

  {\scriptsize $\cc{S}{n}{1}$} &  {\scriptsize $\{\nandn{}{n}(x_1, \ldots, x_n) \wedge
  \F(c_0)\}$} & {\scriptsize $\{\nandn{}{n}\}$}  \\

  {\scriptsize $\cc{S}{}{1}$} & {\scriptsize $\{\nandn{}{\ra}(x_1, \ldots, x_{\ra}) \wedge
  \F(c_0) \mid \ra \geq 2\}$} & {\scriptsize $\{\nandn{}{\ra} \mid \ra \geq 1\}$}  \\

  {\scriptsize $\cc{S}{n}{12}$} &  {\scriptsize $\{\nandn{}{n}(x_1, \ldots, x_n) \wedge
  \F(c_0) \wedge \T(c_1)\}$} & {\scriptsize $\{\nandn{}{n}, \T(c_1)\}$}  \\

  {\scriptsize $\cc{S}{}{12}$} & {\scriptsize $\{\nandn{}{\ra}(x_1, \ldots, x_{\ra}) \wedge \F(c_0) \wedge
      \T(c_1) \mid \ra \geq 2\}$} & {\scriptsize $\{\nandn{}{\ra} \mid \ra \geq 1 \} \cup  \{\T(c_1)\}$} \\

  {\scriptsize $\cc{S}{n}{11}$} & {\scriptsize $\{\nandn{}{n}(x_1, \ldots, x_n)
  \wedge (\neg x \rightarrow \neg x_1 \cdots \neg x_n)
   \wedge \F(c_0)\}$} & {\scriptsize $\{\nandn{}{n}, (x_1 \rightarrow x_2)\}$}  \\

   {\scriptsize $\cc{S}{}{11}$} & {\scriptsize $\{R_{\cc{S}{\ra}{11}} \mid \ra \geq 2\}$} & {\scriptsize $\{\nandn{}{\ra} \mid \ra \geq 1\} \cup  \{(x_1 \rightarrow x_2)\}$} \\

  {\scriptsize $\cc{S}{n}{10}$} & {\scriptsize
    $\{R_{\cc{S}{n}{11}}(x_1, \ldots, x_{\ra}, c_0) \land \T(c_1)\}$}
  & {\scriptsize $\{\nandn{}{n}, (x_1 \rightarrow x_2), \T(c_1)\}$} \\

   {\scriptsize $\cc{S}{}{10}$} & {\scriptsize $\{R_{\cc{S}{n}{10}} \mid \ra \geq 2\}$} & {\scriptsize $\{\nandn{}{\ra} \mid \ra \geq 1\}  \cup \{(x_1 \rightarrow x_2), \T(c_1)\}$} \\

  {\scriptsize $\cc{D}{}{}$} & {\scriptsize  $\{(x_1 \oplus x_2 =
    1)\}$} & {\scriptsize $\{(x_1 \oplus x_2 = 1)\}$} \\
  {\scriptsize $\cc{D}{}{1}$} & {\scriptsize $\{(x_1 \oplus x_2 = 1)
    \wedge \F(c_0)\} \wedge \T(c_1)$} & {\scriptsize $\{(x_1 \oplus x_2 = 1)\}
  \cup \{\F(c_0), \T(c_1)\}$}  \\
  {\scriptsize $\cc{D}{}{2}$} & {\scriptsize $\{(x_1 \lor x_2) \land \neqq(x_1, x_3)
  \land \neqq(x_2, x_4) \wedge \F(c_0) \wedge \T(c_1)\}$} &
  {\scriptsize $\{(x_1 \vee x_2), (\neg
  x_1 \vee x_2), (\neg x_1 \vee \neg x_2)\}$}  \\

  {\scriptsize $\cc{E}{}{}$} & {\scriptsize $\{(x_1 \leftrightarrow x_2x_3) \wedge (x_2 \vee x_3
\rightarrow x_4)\}$} & {\scriptsize $\{(\neg x_1 \vee \ldots \vee \neg x_k \vee x) \mid k
  \geq 1\}$} \\
  {\scriptsize $\cc{E}{}{0}$} & {\scriptsize $\{(x_1 \leftrightarrow x_2x_3) \wedge (x_2 \vee x_3
\rightarrow x_4) \wedge \F(c_0)\}$} & {\scriptsize $\{\nandn{}{n} \mid n
  \in \mathbb{N}\} \cup \{(\neg x_1 \vee \ldots \vee \neg x_k \vee x)
  \mid k \geq 1\}$}  \\
  {\scriptsize $\cc{E}{}{1}$} & {\scriptsize $\{(x_1 \leftrightarrow
    x_2x_3) \wedge \T(c_1)\}$} & {\scriptsize $\{(\neg x_1 \vee \ldots \vee \neg x_k \vee x) \mid k
  \in \mathbb{N}\}$}  \\
  {\scriptsize $\cc{E}{}{2}$} & {\scriptsize $\{(x_1 \leftrightarrow
    x_2x_3) \wedge \F(c_0) \wedge \T(c_1)\}$} & {\scriptsize $\{\nandn{}{n} \mid n
  \in \mathbb{N}\} \cup \{(\neg x_1 \vee \ldots \vee \neg x_k \vee x) \mid k \in \mathbb{N}\}$}  \\


  \hline

\end{tabularx}
\end{table}

\subsection{Duality}
Many questions concerning Boolean co-clones can be simplified by only
considering parts of Post's lattice. If $f \in \functions{\{0,1\}}$ is
$k$-ary then the {\em dual} of $f$, $\mathrm{dual}(f)$, is the
operation $\mathrm{dual}(f)(x_1, \ldots, x_k) =
\overbar{f(\overbar{x_1}, \ldots, \overbar{x_k})}$, and we let
$\mathrm{dual}(F) = \{\mathrm{dual}(f) \mid f \in F\}$ for a set $F
\subseteq \functions{\{0,1\}}$. Each Boolean clone $\cloneFont{C}$ can
then be associated with a dual clone
$\mathrm{dual}(\cloneFont{C})$. Similarly, for $R \in
\relations{\{0,1\}}$ we let $\mathrm{dual}(R) = \{\overbar{t} \mid t
\in R\}$ and $\mathrm{dual}(\Gamma) = \{\mathrm{dual}(R) \mid R \in
\Gamma\}$ for $\Gamma \subseteq \relations{\{0,1\}}$. It is then known
that $\inv(\mathrm{dual}(\cloneFont{C})) = \mathrm{dual}(\inv(\cloneFont{C}))$.


\section{The Expressive Power of Unique Existential Quantification} 
\label{sec:power}

The main goal of this paper is to understand when the expressive power
of unique existential quantification coincides with existential
quantification in primitive positive formulas. Let us first consider
an example where a pp-definition can be rewritten into a upp-definition.

\begin{example} \label{ex:upp}
  Consider the canonical reduction from $k$-SAT to $(k-1)$-SAT via
  pp-definitions of the form $(x_1 \lor \ldots \lor x_k) \equiv
  \exists y \colon (x_1 \lor \ldots \lor x_{k-2} \vee y) \land (x_{k-1} \lor
  x_k \lor \neg y)$. In this pp-definition the auxiliary variable $y$ is {\em not}
  uniquely determined since, for example, $y = 0$ and $y = 1$ are both
  consistent with 
  $x_1 = 1, \ldots x_{k-2} = 1, x_{k-1} = 1, x_k = 1$. On the other hand, if we instead
  take the pp-definition $(x_1 \lor \ldots \lor x_k) \equiv
  \exists y \colon (x_1 \lor \ldots \lor x_{k-2} \lor y) \land (y
  \leftrightarrow (x_{k-1} \vee x_k))$, which can be expressed by $(k
  - 1)$-SAT, it is easily verified that $y$ is determined by $x_{k-1}$
  and $x_k$.
\end{example}

Using the algebraic terminology from Section~\ref{sec:prel} this 
property can be phrased as follows.

\begin{figure} 
  \centering
  \includegraphics[scale=0.8]{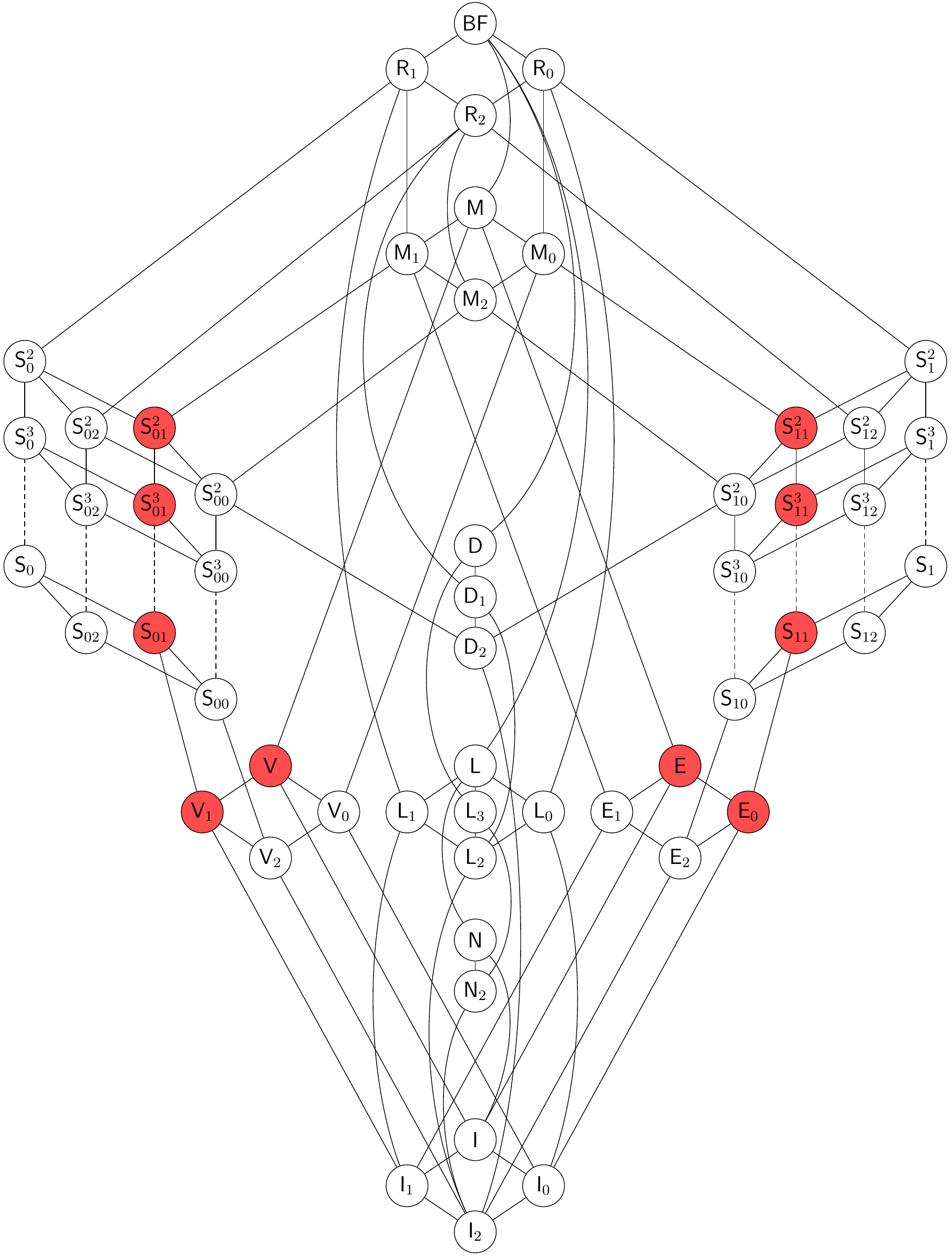}
  \caption{The lattice of Boolean clones. $\inv(\cloneFont{C})$
    is coloured in red if and only if $\inv(\cloneFont{C})$
  is not $\exists !$-covered.}
\label{fig:lattice}
\end{figure}

\begin{definition} \label{def:cover}
A co-clone $\cclone{\Gamma}$ is {\em $\exists !$-covered} if
$\cclone{\Gamma} = \cclone{\Delta}_{\exists !}$ for every base $\Delta$ of $\cclone{\Gamma}$.
\end{definition}

Thus, we are interested in determining the $\exists !$-covered co-clones, and since every constraint language $\Gamma$
belongs to a co-clone, namely $\cclone{\Gamma}$,
Definition~\ref{def:cover} precisely captures the aforementioned
question concerning the expressive strength of uniqueness
quantification in primitive positive formulas. The remainder of this
section will be dedicated to proving covering results of this form, with a
particular focus on proving a full classification for the Boolean
domain. See Figure~\ref{fig:lattice} for a visualisation of 
this dichotomy.
We begin in Section~\ref{sec:constructions}
by outlining some of the main ideas required to prove that a co-clone
is $\exists !$-covered, and consider covering results
applicable for arbitrary finite domains. In Section~\ref{sec:bool} we
turn to the Boolean domain where we prove the classification in
Figure~\ref{fig:lattice}. 

\subsection{General Constructions}
\label{sec:constructions}

Given an arbitrary constraint language $\Gamma$ it can be difficult to
directly reason about the strength of upp-definitions over $\Gamma$. Fortunately, there are methods to mitigate this
difficulty. Recall from Definition~\ref{def:weak} that a weak base of
a co-clone $\cclone{\Gamma}$ is a base which is qfpp-definable by any
other base of $\cclone{\Gamma}$, and that a plain base is a base with
the property that it can qfpp-define every relation in the
co-clone. We then have the following useful lemma.

\begin{lemma} \label{lemma:weak_upp}
  Let $\cclone{\Gamma}$ be a co-clone with a weak base $\Gamma_w$ and a plain
  base $\Gamma_p$. If $\Gamma_p \subseteq \cclone{\Gamma_w}_{\exists !}$ then $\cclone{\Gamma}$ is $\exists !$-covered.
\end{lemma}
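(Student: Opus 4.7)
The only nontrivial inclusion is $\cclone{\Gamma} \subseteq \cclone{\Delta}_{\exists!}$ for an arbitrary base $\Delta$ of $\cclone{\Gamma}$, since $\cclone{\Delta}_{\exists!} \subseteq \cclone{\Delta} = \cclone{\Gamma}$ holds unconditionally. My plan is to prove the inclusion by the chain
\[
\cclone{\Gamma} \;\subseteq\; \pcclone{\Gamma_p} \;\subseteq\; \cclone{\Gamma_p}_{\exists!} \;\subseteq\; \cclone{\Gamma_w}_{\exists!} \;\subseteq\; \cclone{\Delta}_{\exists!},
\]
where each inclusion follows from a distinct ingredient already in hand.

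The first inclusion is immediate from the defining property of a plain base: every $R \in \cclone{\Gamma}$ has a qfpp-definition over $\Gamma_p$. The second inclusion is trivial, since a qfpp-definition is vacuously a upp-definition (there are no quantified variables to determine). The third is the hypothesis $\Gamma_p \subseteq \cclone{\Gamma_w}_{\exists!}$ combined with transitivity of the operator $\cclone{\cdot}_{\exists!}$. For the fourth, I use that $\Gamma_w$ is a weak base, so $\pcclone{\Gamma_w} \subseteq \pcclone{\Delta}$; in particular $\Gamma_w \subseteq \pcclone{\Delta} \subseteq \cclone{\Delta}_{\exists!}$, and transitivity finishes the job.

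Thus the real content of the lemma lies in verifying that $\cclone{\cdot}_{\exists!}$ is transitive, i.e., if $R$ has a upp-definition over $\Gamma$ and every $R' \in \Gamma$ has a upp-definition over $\Delta$, then $R$ has a upp-definition over $\Delta$. The idea is to substitute the inner upp-definitions into the outer one, using a fresh copy of the inner quantified variables for each occurrence. Write the outer formula as
\[
R(x_1,\dots,x_n) \equiv \exists! y_1,\dots,y_{n'} \colon \bigwedge_i R'_i(\mathbf{z}_i),
\]
where each $\mathbf{z}_i$ is a tuple of variables drawn from $\{x_1,\dots,x_n,y_1,\dots,y_{n'}\}$, and each $R'_i$ has a upp-definition $R'_i(\mathbf{z}) \equiv \exists! \mathbf{w}_i \colon \varphi_i(\mathbf{z},\mathbf{w}_i)$ over $\Delta$. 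Substituting produces a pp-definition of $R$ over $\Delta$ with quantified variables $y_1,\dots,y_{n'}$ together with fresh copies of the $\mathbf{w}_i$'s. It remains to check that each is uniquely determined by $x_1,\dots,x_n$. The $y_k$ are determined by hypothesis on the outer formula. For a variable $w$ from $\mathbf{w}_i$, the inner upp-definition supplies a function $h_w$ with $h_w(\mathbf{z}_i) = w$ on all tuples of $R'_i$; since each coordinate of $\mathbf{z}_i$ is either some $x_j$ or some $y_k$, and the latter are themselves uniquely determined by $x_1,\dots,x_n$ via functions $g_k$, the composition $h_w$ with the $g_k$'s produces the required determining function for $w$.

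The main obstacle is purely bookkeeping in this transitivity argument: one must be careful that the unique-determination functions exist on the full projection of the substituted relation (not merely the slice where the outer formula holds), but this is exactly what Definition~\ref{def:uppdef2} guarantees, so the composition goes through cleanly. Once transitivity is in place, the four-step chain above is mechanical.
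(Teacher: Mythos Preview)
Your proof is correct and follows essentially the same approach as the paper: start from a qfpp-definition over $\Gamma_p$, substitute in the upp-definitions over $\Gamma_w$ guaranteed by the hypothesis, and then substitute in the qfpp-definitions over $\Delta$ guaranteed by the weak-base property. The paper carries out this substitution directly on a single relation $R$ rather than phrasing it as your four-step chain of inclusions, and it leaves the transitivity of $\cclone{\cdot}_{\exists !}$ implicit, whereas you spell it out; but the underlying argument is identical.
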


\begin{proof}
  Let $\Delta$ be a base of $\cclone{\Gamma}$ and take an arbitrary $n$-ary
  relation $R \in \cclone{\Gamma}$. First, take a qfpp-definition
  $R(x_1, \ldots, x_n) \equiv \varphi(x_1, \ldots, x_n)$ over
  $\Gamma_p$. By assumption, $\Gamma_w$ can upp-define every relation
  in $\Gamma_p$, and it follows that \[R(x_1, \ldots, x_n) \equiv
  \exists ! y_1, \ldots, y_m \colon \varphi'(x_1, \ldots, x_n, y_1,
  \ldots, y_m)\] for a $\Gamma_w$-formula $\varphi'(x_1, \ldots, x_n, y_1, \ldots, y_m)$
  since each constraint in $\varphi(x_1, \ldots,
  x_n)$ can be replaced by its upp-definition over $\Gamma_w$. Last, since $\Delta$
  can qfpp-define $\Gamma_w$, we can obtain a upp-definition of $R$ by
  replacing each constraint in $\varphi'(x_1, \ldots, x_n, y_1,
  \ldots, y_m)$ by its qfpp-definition over $\Delta$. 
\end{proof}

Although not difficult to prove, Lemma~\ref{lemma:weak_upp} offers the
advantage that it is sufficient to prove that $\Gamma_p \subseteq \cclone{\Gamma_w}_{\exists !}$ for two constraint languages
$\Gamma_w$ and $\Gamma_p$. 
Let us now illustrate some additional techniques for proving that
$\cclone{\Gamma}$ is $\exists !$-covered. Theorem~\ref{thm:schnoor} in Section~\ref{sec:prel_weak} shows that
the relation $\cloneFont{C}(\cols^s)$ is a weak base of
$\inv(\cloneFont{C})$ for $s$ larger than or equal to the core-size of
$\inv(\cloneFont{C})$.  For $s$ smaller than the core-size we have the
following description of $\cloneFont{C}(\cols^s)$.

\begin{theorem} \label{thm:core_size}
  Let $\cloneFont{C}$ be a finitely related clone over a finite domain $D$. Then, for
  every $s \geq 1$,
  $\cloneFont{C}(\cols^s) \in \pcclone{\Gamma}$ for every 
  base $\Gamma$ of $\inv(\cloneFont{C})$.
\end{theorem}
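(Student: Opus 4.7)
The plan is to fix any base $\Gamma$ of $\inv(\cloneFont{C})$, start from an arbitrary pp-definition of $R := \cloneFont{C}(\cols^s)$ over $\Gamma$ (which exists because $R \in \inv(\cloneFont{C}) = \cclone{\Gamma}$), and then eliminate every existential quantifier by substituting an appropriately chosen free variable. The key leverage is the enumeration property of $\cols^s$ from Definition~\ref{def:core_construction}: its $n := |D|^s$ columns traverse every element of $D^s$ exactly once, so every conceivable $s$-tuple of witnesses can be matched against some coordinate of the generating tuples.

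Concretely, write $\cols^s = \{t_1, \ldots, t_s\}$ and take a pp-definition $R(x_1, \ldots, x_n) \equiv \exists y_1, \ldots, y_m \colon \phi(x_1, \ldots, x_n, y_1, \ldots, y_m)$ over $\Gamma$. For each $k \in [s]$, since $t_k \in R$, I fix witnesses $b_1^k, \ldots, b_m^k \in D$ making $\phi$ true when $x_i = t_k[i]$ and $y_j = b_j^k$. For each existential variable $y_j$ the resulting vertical tuple $(b_j^1, \ldots, b_j^s) \in D^s$ appears as some column of $\cols^s$, so I pick $\iota(j) \in [n]$ with $(t_1[\iota(j)], \ldots, t_s[\iota(j)]) = (b_j^1, \ldots, b_j^s)$. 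The candidate qfpp-formula is then $\psi(x_1, \ldots, x_n) := \phi(x_1, \ldots, x_n, x_{\iota(1)}, \ldots, x_{\iota(m)})$, and it defines a relation $R_\psi \in \pcclone{\Gamma}$.

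The remaining step is to verify $R_\psi = R$. The inclusion $R_\psi \subseteq R$ is immediate: instantiating existentially quantified variables by concrete terms can only shrink the defined relation. For the reverse inclusion, the construction forces $t_k \in R_\psi$ for every $k \in [s]$, and since $R_\psi \in \pcclone{\Gamma} \subseteq \cclone{\Gamma} = \inv(\cloneFont{C})$, the relation $R_\psi$ is preserved by every operation of $\cloneFont{C}$. The minimality of $\cloneFont{C}(\cols^s)$ as the intersection of $\cloneFont{C}$-invariant supersets of $\cols^s$ then yields $R \subseteq R_\psi$, so $R = R_\psi \in \pcclone{\Gamma}$ as required.

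The main obstacle I anticipate is the coordination step producing $\iota$: one needs a single substitution that simultaneously realises the required witness of each $y_j$ in all $s$ generating tuples. This is exactly where the specific combinatorial shape of $\cols^s$ enters, since its columns enumerate $D^s$ in full, so every $s$-tuple of witnesses is automatically representable by some $x_{\iota(j)}$. Without this enumeration property the substitution could fail, which explains why the theorem is phrased for $\cols^s$ rather than an arbitrary generating set, and why the weak-base construction of Theorem~\ref{thm:schnoor} relies on taking $t$ at least the core-size only to also force $\cloneFont{C}(\cols^t)$ to be a base.
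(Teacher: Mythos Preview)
Your argument is correct, but it proceeds along a genuinely different line from the paper's. The paper builds an explicit qfpp-definition directly: it uses the \emph{indicator problem} (free-structure) construction, introducing one variable for each element of $D^s$ and, for every $R \in \Gamma$ of arity $n$ and every choice $t_1, \ldots, t_s \in R$, adding the constraint $R$ on the variables indexed by the $n$ columns $(t_1[j], \ldots, t_s[j])$. The resulting conjunction defines precisely the set of $s$-ary operations preserving $\Gamma$, written as $|D|^s$-ary tuples, which is $\cloneFont{C}(\cols^s)$; the paper then defers the verification to Bodnarchuk et al.\ and Dalmau. Your route instead starts from an arbitrary pp-definition of $\cloneFont{C}(\cols^s)$ over $\Gamma$ and eliminates the existential quantifiers by substituting free variables, exploiting that the columns of $\cols^s$ exhaust $D^s$ so that any required witness pattern across the $s$ generators is realised by some $x_{\iota(j)}$; the minimality characterisation of $\cloneFont{C}(\cols^s)$ then closes the sandwich $R \subseteq R_\psi \subseteq R$. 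The paper's construction is the classical one and makes the connection to ``$\cloneFont{C}(\cols^s)$ encodes the $s$-ary term operations of $\cloneFont{C}$'' transparent; your argument is more self-contained, avoids importing the free-structure machinery, and isolates exactly where the enumeration property of $\cols^s$ is used.
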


  \begin{proof}
The intuitive meaning behind the relation $\cloneFont{C}(\cols^s)$ is
that it may be
viewed as a relational representation of the set of all $s$-ary
operations of a clone $\cloneFont{C}$, in the sense that there for each
$s$-ary $f \in \cloneFont{C}$ exists $t_f \in \cloneFont{C}(\cols^s)$
such that $f(t_1, \ldots, t_s) = t_f$, where $\{t_1, \ldots, t_s\} =
\cols^s$. 
Moreover, the operation $f$ preserves  $R \in \Gamma$ if
    and only if $(f(t_1[1], \ldots, t_s[1]), \ldots, f(t_1[n], \ldots,
    t_s[n])) \in R$. In a qfpp-definition of $\cloneFont{C}(\cols^s)$ we
    then associate each variable $x$ with an element of $D^s$, and then
    for each $R \in \Gamma$ and $t_1, \ldots, t_s \in R$ add the
    constraint $R((t_1[1], \ldots, t_s[1]), \ldots, (t_1[n], \ldots,
    t_s[n]))$. For further details, see Theorem 2 in Bodnarchuk et
    al.~\cite{BKKR69ii}, or  Theorem 15 in Dalmau~\cite{dalmau2000}.
\end{proof}

The applications of Theorem~\ref{thm:core_size} in the context of
upp-definitions might not be immediate. However, observe that each
argument $i \in [|D|^s]$ of $\cols^s$ is determined by at most $s$
other arguments, and if $\cloneFont{C}$ is sufficiently simple, this
property can be proved to hold also for $\cloneFont{C}(\cols^s)$. This
intuition can then be formalised into the following general theorem.

\begin{theorem} \label{thm:ess}
  Let $\pol(\Gamma)$ be a clone over a finite domain $D$ such that
  each $f \in \pol(\Gamma)$ is a constant
  operation or a projection. Then
  $\cclone{\Gamma}$ is $\exists !$-covered.
\end{theorem}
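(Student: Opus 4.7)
Plan. My plan is to apply Lemma~\ref{lemma:weak_upp} with the weak base $\Gamma_w = \{\cloneFont{C}(\cols^s)\}$ furnished by Theorem~\ref{thm:schnoor}, where $\cloneFont{C} = \pol(\Gamma)$ and $s$ is taken to be at least the core-size of $\inv(\cloneFont{C})$. By Theorem~\ref{thm:core_size} we have $\cloneFont{C}(\cols^s) \in \pcclone{\Delta} \subseteq \cclone{\Delta}_{\exists !}$ for every base $\Delta$ of $\cclone{\Gamma}$, so any upp-definition of a relation over $\Gamma_w$ automatically lifts to one over $\Delta$. It therefore suffices to exhibit a plain base $\Gamma_p$ of $\cclone{\Gamma}$ with $\Gamma_p \subseteq \cclone{\Gamma_w}_{\exists !}$.

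Under the hypothesis, the weak base is extremely transparent. The set of $s$-ary operations of $\cloneFont{C}$ consists precisely of the projections $\pi^s_1,\ldots,\pi^s_s$ together with the constants $c_d$ for $d \in C := \{d \in D : c_d \in \cloneFont{C}\}$, so by the construction in the proof of Theorem~\ref{thm:core_size} the weak base relation $R_w := \cloneFont{C}(\cols^s)$ has arity $|D|^s$ (its coordinates indexed by $D^s$ in lexicographic order) and contains exactly $s + |C|$ tuples: the ``projection rows'' $\tau_i$ with $\tau_i[(d_1,\ldots,d_s)] = d_i$, and the ``constant rows'' $\sigma_d = (d,\ldots,d)$ for $d \in C$. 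Because any two distinct rows of $R_w$ differ at some coordinate, one can fix an index set $I \subseteq [|D|^s]$ on which the projection of $R_w$ is injective; every coordinate outside $I$ is then a uniquely determined function of the $I$-coordinates. In particular, if $R_w$ appears as a conjunct with its $I$-variables designated as free and the remaining variables as existentially quantified, those existentials are automatically uniquely determined.

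The plan for upp-defining a plain base relation $R$ of arity $n$ over $\{R_w\}$ is to start from an arbitrary pp-definition of $R$ in terms of $R_w$-conjuncts and then reorganise it so that in every conjunct the $I$-positions are occupied by variables among $\{x_1,\ldots,x_n\}$ or by already-determined existentials, while all fresh existentials land in non-$I$ positions. By the uniqueness property above, each such existential is then a function of $x_1,\ldots,x_n$, so the reorganised formula is a upp-definition, and Lemma~\ref{lemma:weak_upp} delivers the theorem. The principal obstacle is exactly this reorganisation: one must show that every existential introduced by a pp-definition of $R$ can be steered into a non-$I$ slot of some $R_w$-instance whose $I$-slots are already pinned down. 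The flexibility to do so comes from the large arity gap — $|I|$ needs only to separate the $s + |C|$ rows of $R_w$, whereas $R_w$ has $|D|^s$ coordinates in total — so each $R_w$-instance exposes many ``uniquely-determined slots'' into which auxiliary variables can be placed via identifications. I would expect the reorganisation to be carried out inductively on the number of existentials, sequentially using fresh $R_w$-instances to pin each one down as a function of earlier variables.
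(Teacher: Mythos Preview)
Your proposal has a genuine gap at the ``reorganisation'' step. You fix $s$ once and for all and then, for an arbitrary plain-base relation $R$, hope to massage some pp-definition of $R$ over $R_w = \cloneFont{C}(\cols^s)$ into a upp-definition by steering each existential into a non-$I$ slot of some conjunct. But nothing forces this to be possible: the fact that the non-$I$ coordinates of $R_w$ are determined by the $I$-coordinates tells you only that a conjunct whose \emph{free} variables occupy exactly $I$ has determined existentials, whereas in a general pp-definition the free variables and previously-determined existentials occupy an arbitrary pattern of positions across the conjuncts. Your closing ``I would expect the reorganisation to be carried out inductively'' is not an argument, and the arity-gap heuristic does not by itself supply one. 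Note too that no useful finite plain base is available here (for $\cclone{\Gamma} = \relations{D}$ one is essentially forced to take $\Gamma_p = \relations{D}$), so step 3 is really the whole theorem.

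The paper sidesteps the reorganisation entirely by \emph{not} fixing $s$. For each target $R \in \cclone{\Gamma}$ it takes $s = |R|$ and works with $\cloneFont{C}(\cols^s)$, which by Theorem~\ref{thm:core_size} lies in $\pcclone{\Delta}$ for every base $\Delta$. Since the columns of $\cols^s$ enumerate all $s$-tuples over $D$, there are indices $i_1,\ldots,i_n$ with $\pro_{i_1,\ldots,i_n}(\cols^s) = R$, and because the rows of $\cols^s$ are precisely the $s$ projection tuples, every other column is automatically a function of columns $i_1,\ldots,i_n$. Hence $R$ is a upp-projection of a \emph{single} atom --- there is no pp-definition to reorganise. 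When $\cloneFont{C}$ also contains constants, the extra constant rows $\mathbf{d}^{|D|^s}$ can spoil determinacy (the constant row and the $\cols^s$-row projecting to $(d,\ldots,d)$ now agree on $i_1,\ldots,i_n$ but differ elsewhere); the paper repairs this by conjoining one equality $\eq(x_{j_1},x_{j_2})$ chosen to be violated exactly by those $\cols^s$-rows whose $(i_1,\ldots,i_n)$-projection is constant, which kills the duplicate witnesses while leaving the projection equal to $R$. The missing idea in your attempt is precisely this: let $s$ grow with $R$ so that $R$ is realised as a projection of one conjunct, rather than trying to control existentials across many conjuncts of a fixed small $R_w$.
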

\begin{proof}
  Let $F$ be a set of operations such that $\clone{F} = \pol(\Gamma)$. We may without loss of
  generality assume that $F = \{f_1, \ldots, f_k\}$ for unary
  operations $f_l$ such that $f_l(x) = d_l$ for some $d_l \in D$.  
  Take an arbitrary $n$-ary relation $R \in
  \cclone{\Gamma}$. Let $s = |R|$ and consider the relation
  $F(\cols^s)$ from Definition~\ref{def:core_construction}. Our aim is to prove
  that $F(\cols^s)$ can upp-define $R$, which is sufficient since
  $F(\cols^s) \in \pcclone{\Gamma}$ via Theorem~\ref{thm:core_size}. Let $i_1, \ldots, i_n \in
  [|D|^s]$ denote the indices satisfying $\pro_{i_1, \ldots,
    i_n}(F(\cols^s)) = R$. 


  If $k = 0$, and $\pol(\Gamma)$ consists only of projections, then
  $F(\cols^s) = \cols^s$, and each argument in $[|D|^s] \setminus
  \{i_1, \ldots, i_n\}$ is already determined by $i_1, \ldots,
  i_n$, and by the preceding remark $R \in
  \cclone{F(\cols^s)}_{\exists !}$. Therefore, assume that $k \geq 1$. For each $f_l \in F$ then observe that $(d_l, \ldots,
  d_l) \in F(\cols^s)$ and that $(d_l, \ldots, d_l) \in \pro_{i_1,
    \ldots, i_n}(\cols^s)$. Choose $j_1, j_2 \in [|D|^s]$ such that  $t[j_1] \neq
  t[j_2]$ for $t \in \cols^s$ if and only if $\pro_{i_1, \ldots, i_n}(t)
  = (d_l, \ldots, d_l)$, for a $d_l$ such that $f_l(x) = d_l$. Thus, we
  choose a pair of indices differing in $\cols^s$ if and only if the
  projection on $i_1, \ldots, i_n$ is constant.  Such a
  choice is always possible since the arguments of $\cols^s$ enumerate
  all $s$-ary tuples over $D$. Then construct
  the relation $R'(x_1, \ldots, x_{|D|^s}) \equiv F(\cols^s)(x_1,
  \ldots, x_{|D|^s}) \land \eq(x_{j_1}, x_{j_2})$. 
  It follows that $\pro_{i_1,
  \ldots, i_n}(R') = R$, and that every argument $l \in
[|D|^s] \setminus \{i_1, \ldots, i_n\}$ is determined by $i_1, \ldots,
i_n$. Hence, $R \in \cclone{F(\cols^s)}_{\exists !}$.
\end{proof}

Theorem~\ref{thm:ess} implies that $\cclone{\Gamma}$ is $\exists
!$-covered if $\Gamma$ is sufficiently powerful, and in particular
implies that $\relations{D}$ is $\exists !$-covered for every finite
$D$. Hence, $\Gamma$ pp-defines every relation if and only if $\Gamma$
upp-defines every relation. However, as we will now illustrate, this is not the only possible
case when a co-clone is $\exists !$-covered.

\begin{lemma} \label{lemma:all_determined}
  Let $F$ be a set of operations over a finite domain $D$. If each
  argument $i \in [\ar(R)]$ is either fictitious or determined for every $R \in \inv(F)$, then $\inv(F)$ is $\exists !$-covered.
\end{lemma}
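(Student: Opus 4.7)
The plan is to fix any base $\Delta$ of $\inv(F)$ and any $R \in \inv(F)$, start from an arbitrary pp-definition $R(\bar{x}) \equiv \exists \bar{y} \colon \varphi(\bar{x}, \bar{y})$ of $R$ over $\Delta$, and sharpen it into a upp-definition by iteratively conjoining equality atoms $\eq_D(y_j, x_1)$. The case $n = 0$ is disposed of separately: then $R \in \{\emptyset, \{()\}\}$, and both cases are trivial (for $R = \{()\}$ the empty conjunction is already a upp-definition; for $R = \emptyset$ any pp-defining relation is itself empty, so its existential variables are vacuously determined by $\bar{x}$). So I will assume $n \geq 1$ and let $R'$ denote the relation defined by $\varphi$; note $R' \in \pcclone{\Delta} \subseteq \inv(F)$.

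The key auxiliary observation is that for each $j \in [m]$ the projection $S_j = \pro_{\bar{x}, y_j}(R')$ again lies in $\inv(F)$, so by hypothesis the $y_j$-column of $S_j$ is either (a)~determined---which here can only mean determined by a subset of the $\bar{x}$-columns, so $y_j$ is already a function of $\bar{x}$ in $R'$---or (b)~fictitious, so that for every $\bar{a} \in R$ and every $d \in D$ there is some $\bar{b}$ with $(\bar{a}, \bar{b}) \in R'$ and $b_j = d$. Applying this dichotomy to $R'$ and to every successive restriction yields the following loop: while some $y_j$ fails to be determined by $\bar{x}$ in the current relation, case~(a) is excluded so case~(b) must hold, and we conjoin $\eq_D(y_j, x_1)$ to the current formula. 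Case~(b) guarantees that $a_1$ is an admissible value for $y_j$ above every $\bar{a} \in R$, so the $\bar{x}$-projection is preserved; the newly pinned $y_j$ is now determined by $x_1$, and every $y_l$ that was previously determined by $\bar{x}$ remains so, because determination is inherited by sub-relations. Each iteration strictly enlarges the set of $y$-variables determined by $\bar{x}$, so the procedure terminates after at most $m$ steps with a upp-definition of $R$ over $\Delta$.

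The step I expect to be the main obstacle is precisely the verification of dichotomy~(a)/(b), because this is where the hypothesis does the real work: after projecting onto $(\bar{x}, y_j)$ the $\bar{x}$-columns are the only remaining candidates for determining $y_j$, so the hypothesis applied to $S_j$ collapses $y_j$ into one of the two extremes and rules out the intermediate situation in which $y_j$ might take only some proper nonempty subset of $D$-values above a given $\bar{a} \in R$. A subsidiary point to respect is that the pinnings must be done sequentially rather than all at once---joint fictitiousness of several $y_j$'s is not implied by their individual fictitiousness, as the relation $\{(0,0,1),(0,1,0)\}$ already shows---so the hypothesis has to be re-applied to the current restricted relation before each new pin.
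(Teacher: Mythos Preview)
Your proof is correct and follows the same overarching strategy as the paper's: process the existential variables one at a time, pinning each one that is not already determined to $x_1$ via an added equality atom, and verify that the $\bar{x}$-projection is preserved throughout. Where you diverge is in how the hypothesis is invoked. The paper applies the fictitious/determined dichotomy directly to the full intermediate relation on coordinates $\bar{x}, y_1, \ldots, y_{m-k}$, so that ``determined'' may mean determined via other $y$-variables; that each $y_j$ ends up determined by $\bar{x}$ alone is then implicit in the transitivity of $\cclone{\cdot}_{\exists !}$. You instead project onto $(\bar{x}, y_j)$ before invoking the hypothesis, which forces ``determined'' to mean ``determined by $\bar{x}$'' and collapses the dichotomy into precisely the two cases you need. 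The price is having to note that $S_j \in \inv(F)$ via closure under projection and, as you correctly flag, having to re-derive the dichotomy for each successive restriction rather than once and for all. Both executions are sound; yours makes the ``determined by $\bar{x}$'' conclusion immediate, while the paper's avoids the auxiliary projections.
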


\begin{proof}
  Let $\Gamma$ be a set of relations such that $\cclone{\Gamma} =
  \inv(F)$, and let $R \in \inv(F)$ be an $n$-ary relation. Let
  $R(x_1, \ldots, x_n) \equiv \exists y_1, \ldots, y_m \colon
  \varphi(x_1, \ldots, x_n, y_1, \ldots, y_m)$ denote a pp-definition
  of $R$ over $\Gamma$. First consider the relation $R^m(x_1, \ldots,
  x_n, y_1, \ldots, y_m) \equiv \varphi(x_1, \ldots, x_n, y_1, \ldots,
  y_m)$ which is upp-definable (indeed, even qfpp-definable) over
  $\Gamma$. Hence, $R^m$ is preserved by $F$, implying that the $(n+m)$th argument is
  either fictitious or determined. In the first case  we construct
  the relation 
  \begin{align*}
  R^{m-1}(x_1,
  \ldots, x_n, y_1, \ldots, y_{m-1}) \equiv \, & \exists ! y_{m} \colon
  \eq_D(x_1, y_m) \land \\ & \varphi(x_1, \ldots, x_n, y_1, \ldots,
  y_m).
  \end{align*}
 In the second case, we can directly upp-define the $(n+m-1)$-ary relation
  $R^{m-1}$ as 
  \begin{align*}
R^{m-1}(x_1,
  \ldots, x_n, y_1, \ldots, y_{m-1}) \equiv \, & \exists ! y_{m}
                                          \colon \\ & \varphi(x_1, \ldots, x_n, y_1, \ldots,
  y_m).
  \end{align*}
 Since $R^{m-1} \in \cclone{\Gamma}_{\exists !} \subseteq
  \inv(F)$, it is clear that this procedure can be repeated until the
  relation $R$ is upp-defined.
\end{proof}

\begin{theorem} \label{thm:affine}
  Let $D$ be a finite domain such that $|D|$ is prime, and let
  $f(x,y,z) = x - y +  z \, (\bmod\, |D|)$. Then, for any constraint
  language $\Gamma$ over $D$ such that $\cclone{\Gamma} \subseteq \inv(f)$, $\cclone{\Gamma}$ is
  $\exists !$-covered.
\end{theorem}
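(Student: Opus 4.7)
The plan is to apply Lemma~\ref{lemma:all_determined} with $F = \pol(\Gamma)$, so that $\inv(F) = \cclone{\Gamma}$. Since $\cclone{\Gamma} \subseteq \inv(f)$, every $R \in \cclone{\Gamma}$ is preserved by $f$, and the task reduces to showing that each coordinate of such an $R$ is either fictitious or determined.

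First, I would invoke the standard structural description of $f$-invariant relations: when $|D| = p$ is prime, so that $D$ carries the field structure of $\mathbb{Z}_p$, a nonempty $n$-ary relation $R$ is preserved by $f(x,y,z) = x - y + z \pmod{p}$ if and only if $R$ is a coset $v + V$ of a linear subspace $V \subseteq \mathbb{Z}_p^n$. The forward direction is a direct computation on cosets; for the converse I would fix any $v \in R$ and set $V = R - v$, then verify closure under addition via $r_1 + r_2 - v = f(r_1, v, r_2) \in R$ and under additive inverses via $2v - r = f(v, r, v) \in R$. This makes $V$ an additive subgroup of $\mathbb{Z}_p^n$, which for prime $p$ automatically coincides with a $\mathbb{Z}_p$-subspace.

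Next, I would analyse a single coordinate $i$ of an affine $R = v + V$, splitting on whether $V$ contains the standard basis vector $e_i$. If $e_i \in V$, then for every $t \in R$ and every $d \in D$ the tuple $t + (d - t[i])\,e_i$ differs from $t$ only at coordinate $i$, takes value $d$ there, and still lies in $v + V$; hence coordinate $i$ is fictitious. If $e_i \notin V$, then coordinate $i$ is determined: any $s, t \in R$ that agree off coordinate $i$ satisfy $s - t \in V$ with support contained in $\{i\}$, so $s - t$ is a scalar multiple of $e_i$; since $V$ is a $\mathbb{Z}_p$-subspace and $e_i \notin V$, the only such multiple in $V$ is $0$, forcing $s[i] = t[i]$. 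The resulting partial map $(t[j])_{j \neq i} \mapsto t[i]$ on $R$ extends arbitrarily to a total $h : D^{n-1} \to D$ witnessing determinacy in the sense of the paper's definition.

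Lemma~\ref{lemma:all_determined} applied to $F = \pol(\Gamma)$ then immediately yields that $\cclone{\Gamma} = \inv(F)$ is $\exists !$-covered. The one genuinely delicate point, and the reason the hypothesis restricts to prime $|D|$, is the identification of additive subgroups of $D^n$ with $\mathbb{Z}_p$-subspaces together with the resulting ``coordinate-supported vectors in $V$ are multiples of $e_i$'' step; both fail over composite moduli, and this is precisely where the primality assumption enters.
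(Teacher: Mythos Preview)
Your proof is correct and follows the same overall strategy as the paper: verify the fictitious-or-determined dichotomy for every $f$-invariant relation via its affine-subspace structure, then invoke Lemma~\ref{lemma:all_determined}. The paper's own proof is terser---it cites a reference for the characterisation of $\inv(f)$ as solution sets of linear systems and applies Lemma~\ref{lemma:all_determined} to $\inv(f)$ directly---whereas you supply a self-contained argument for the coset description and make explicit that the lemma is being applied with $F = \pol(\Gamma)$ rather than $F = \{f\}$, which is the step that actually delivers the conclusion for arbitrary $\cclone{\Gamma} \subseteq \inv(f)$. That clarification is worthwhile, since the paper's phrase ``the preconditions are satisfied for $\inv(f)$, which is sufficient'' leaves the passage from $\inv(f)$ to a smaller $\cclone{\Gamma}$ implicit.
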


\begin{proof}
We will prove that the preconditions of
Lemma~\ref{lemma:all_determined} are satisfied for $\inv(f)$, which is sufficient to
prove the claim. Let $R$ be invariant under $f$. Then it is known that $R$ is the solution space
of a system of linear equations modulo $|D|$~\cite{jeavons1995}, from
which it follows that each argument is either determined, since it can
be written as a unique combination of other arguments, or is
fictitious. 
\end{proof}

\subsection{Boolean Constraint Languages}
\label{sec:bool}

In this section we use the techniques developed so far to prove that
the classification in Figure~\ref{fig:lattice} is correct. Note first
that $\inv(\cloneFont{C})$ is $\exists !$-covered if and only if
$\inv(\mathrm{dual}(\cloneFont{C}))$ is $\exists !$-covered, since a
upp-definition $\exists ! y_1, \ldots, y_{n'} \colon R_1(\mathbf{x}_1)
\land \ldots \land R_m(\mathbf{x}_m)$ of $n$-ary $R \in
\inv(\cloneFont{C})$ immediately yields a upp-definition $\exists ! y_1, \ldots, y_{n'} \colon \mathrm{dual}(R_1)(\mathbf{x}_1)
\land \ldots \land \mathrm{dual}(R_m)(\mathbf{x}_m)$ of
$\mathrm{dual}(R) \in \inv(\mathrm{dual}(\cloneFont{C}))$.
Thus, to simplify
the presentation we omit the case when 
$\cloneFont{C} \supseteq \cc{V}{}{2}$ in Figure~\ref{fig:lattice}.
Let us
begin with the cases following directly from
Section~\ref{sec:constructions} or from existing results (recall that
$\cc{IC}{}{}$ is a shorthand for $\inv(\cloneFont{C})$).

\begin{lemma} \label{lemma:simple}
  Let $\cc{IC}{}{}$ be a Boolean co-clone. Then $\cc{IC}{}{}$ is
  $\exists !$-covered if $\cc{IC}{}{} \subseteq \cc{IM}{}{2}$, $\cc{IC}{}{}
  \subseteq \cc{IL}{}{2}$, $\cc{IC}{}{} \subseteq \cc{IS}{}{12}$,
  $\cc{IC}{}{} = \cc{IS}{}{10}$, $\cc{IC}{}{} = \cc{IS}{n}{10}$ for
  some $n \geq 2$, $\cc{IC}{}{} = \cc{IS}{}{1}$, or $\cc{IC}{}{} =
  \cc{IS}{n}{1}$ for some $n \geq 2$.
\end{lemma}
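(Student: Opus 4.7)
The plan is to dispatch each listed case by invoking the most suitable tool from Section~\ref{sec:constructions}. For $\cc{IC}{}{} \subseteq \cc{IL}{}{2}$, the polymorphism clone $\pol(\Gamma)$ contains $\cc{L}{}{2} = \clone{x \oplus y \oplus z}$, and on the Boolean domain $x \oplus y \oplus z$ coincides with $x - y + z \pmod{2}$. Since $|\{0,1\}| = 2$ is prime, Theorem~\ref{thm:affine} applies directly and yields the desired covering.

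For the cases $\cc{IC}{}{} \subseteq \cc{IS}{}{12}$, $\cc{IC}{}{} = \cc{IS}{}{10}$, $\cc{IC}{}{} = \cc{IS}{n}{10}$, $\cc{IC}{}{} = \cc{IS}{}{1}$, and $\cc{IC}{}{} = \cc{IS}{n}{1}$, I would apply Lemma~\ref{lemma:weak_upp} in conjunction with the explicit weak and plain bases listed in Table~\ref{table:weak_plain_bases}. This reduces the task to showing $\Gamma_p \subseteq \cclone{\Gamma_w}_{\exists !}$ for each pair $(\Gamma_w, \Gamma_p)$. The archetypal instance is $\cc{IS}{n}{1}$, whose plain base is $\{\nandn{}{n}\}$ and weak base is $\{\nandn{}{n}(x_1, \ldots, x_n) \wedge \F(c_0)\}$, giving the upp-definition
\[
\nandn{}{n}(x_1, \ldots, x_n) \equiv \exists ! c_0 \colon \nandn{}{n}(x_1, \ldots, x_n) \wedge \F(c_0),
\]
in which $c_0$ is frozen to $0$ and hence trivially uniquely determined. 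The remaining $\cc{IS}{}{*}$ cases follow the same template: the weak base differs from the plain base only by adjoining frozen variables $c_0$ and/or $c_1$, so introducing such variables as frozen existentially quantified ones is always admissible.

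For $\cc{IC}{}{} \subseteq \cc{IM}{}{2}$, the sub-co-clones of $\cc{IM}{}{2}$ that are also contained in $\cc{IL}{}{2}$ are handled by the previous paragraph, and the remaining ones are the monotone co-clones $\cc{IM}{}{*}$. Relations invariant under $\{\wedge, \vee\}$ admit the classical description as conjunctions of implications $x \rightarrow y$ together with unary literals of the form $x = 0$ and $x = 1$, so once again the weak/plain base strategy via Lemma~\ref{lemma:weak_upp} applies and any auxiliary variable introduced is frozen to one of the two constants.

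The main obstacle is the bookkeeping for the more intricate cases such as $\cc{IS}{n}{10}$, whose plain base has three different kinds of relations ($\nandn{}{n}$, the implication $x_1 \rightarrow x_2$, and $\T(c_1)$) and each of which requires its own upp-definition from the weak base. The recipe is nonetheless uniform: every auxiliary variable one needs to introduce is a frozen copy of $0$ or $1$, whose uniqueness is immediate from the definition of frozen existential quantification, so no nontrivial determinacy argument is required anywhere in the proof.
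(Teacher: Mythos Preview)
Your approach is correct in substance and rests on the same underlying observation as the paper's: in all of these cases the only auxiliary variables one ever needs are frozen constants, so unique quantification is automatic. The route, however, is genuinely different. For $\cc{IC}{}{} \subseteq \cc{IL}{}{2}$ you both invoke Theorem~\ref{thm:affine}. For everything else --- the $\cc{IM}{}{2}$ case and the four $\cc{IS}{}{*}$ families --- the paper does not go through Lemma~\ref{lemma:weak_upp} at all; it simply cites the result of Nordh and Zanuttini~\cite{nordh2009} that each of these co-clones already satisfies $\cc{IC}{}{} = \cclone{\Gamma}_{\mathrm{fr}}$ for every base $\Gamma$, and then uses $\cclone{\Gamma}_{\mathrm{fr}} \subseteq \cclone{\Gamma}_{\exists !}$. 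Your argument is in effect a hands-on rederivation of (the relevant part of) that cited result via the weak/plain base machinery.

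Your version buys self-containment, but it also leaves a couple of loose ends that the paper's one-line citation avoids. First, the claim that ``the weak base differs from the plain base only by adjoining frozen variables $c_0$ and/or $c_1$'' is not literally true for $\cc{IS}{n}{10}$: the weak base there inherits from $R_{\cc{S}{n}{11}}$ an additional variable $x$ satisfying $x_1 \lor \cdots \lor x_n \rightarrow x$, and this $x$ is \emph{not} frozen (when $x_1=\cdots=x_n=0$ it is unconstrained). The repair is easy --- identify $x$ with $c_1$ before projecting --- but it is a step your write-up omits. Second, Table~\ref{table:weak_plain_bases} contains no entries for the $\cc{IM}{}{*}$ co-clones, so your appeal to Lemma~\ref{lemma:weak_upp} there would require you to supply those bases yourself rather than read them off the table.
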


\begin{proof}
  The case when $\cc{IC}{}{} \subseteq \cc{IL}{}{2}$ follows from
  Theorem~\ref{thm:affine} since $\cc{L}{}{2} =
  \clone{x \oplus y \oplus z}$. 
  For each case when 
  $\cc{C}{}{}$ belongs to the infinite chains in Post's lattice, or if
  $\cc{IC}{}{} \subseteq \cc{IM}{}{2}$, it is
  known that $\cc{IC}{}{} = \cclone{\Gamma}_{\mathrm{fr}}$ for any
  base $\Gamma$ of $\cc{IC}{}{}$~\cite{nordh2009}, which is sufficient
  since $\cclone{\Gamma}_{\mathrm{fr}} \subseteq
  \cclone{\Gamma}_{\exists !}$. 
\end{proof}

We now move on to the more interesting cases, and begin with the case
when $\pol(\Gamma)$ is essentially unary, i.e., consists of
essentially unary operations. This covers
$\cc{I}{}{2}, \cc{I}{}{0}, \cc{I}{}{1}, \cc{I}{}{},
\cc{N}{}{2},\cc{N}{}{}$ from Figure~\ref{fig:lattice}.

\begin{theorem} \label{thm:ess_boolean}
  Let $\Gamma$ be a Boolean constraint language such that
  $\pol(\Gamma)$ is essentially unary. Then $\cclone{\Gamma}$ is
  $\exists !$-covered.
\end{theorem}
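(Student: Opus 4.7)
The essentially unary Boolean clones are exactly the six clones $\cc{I}{}{2}, \cc{I}{}{0}, \cc{I}{}{1}, \cc{I}{}{}, \cc{N}{}{2}, \cc{N}{}{}$ from Post's lattice. For $\pol(\Gamma) \in \{\cc{I}{}{2}, \cc{I}{}{0}, \cc{I}{}{1}, \cc{I}{}{}\}$ the clone consists only of projections and constants, so Theorem~\ref{thm:ess} applies directly. The plan is therefore to focus on the two remaining cases $\pol(\Gamma) \in \{\cc{N}{}{2}, \cc{N}{}{}\}$, where negation is a polymorphism and the construction behind Theorem~\ref{thm:ess} needs to be refined.

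Write $\cc{C}{}{} = \pol(\Gamma)$ and fix an $n$-ary $R \in \cclone{\Gamma}$. Since $R = \overbar{R}$ and no Boolean tuple equals its own complement, $R$ partitions into complementary pairs, so $|R|$ is even. I would enumerate a transversal $r_1, \ldots, r_k$ so that $R = \{r_1, \overbar{r_1}, \ldots, r_k, \overbar{r_k}\}$; in the $\cc{C}{}{} = \cc{N}{}{}$ case, where $\mathbf{0}^n$ and $\mathbf{1}^n$ necessarily lie in $R$, I would choose the transversal to avoid $\mathbf{0}^n$ so that the constant pair is handled separately. The small edge case $R = \{\mathbf{0}^n, \mathbf{1}^n\}$ with $\cc{C}{}{} = \cc{N}{}{}$ (giving $k = 0$) can be dispatched immediately by the qfpp-definition $R(x_1, \ldots, x_n) \equiv \bigwedge_{i = 2}^{n} \eq_D(x_1, x_i)$.

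The heart of the argument is to show $R \in \cclone{\cc{C}{}{}(\cols^k)}_{\exists !}$; since $\cc{C}{}{}(\cols^k) \in \pcclone{\Gamma}$ by Theorem~\ref{thm:core_size}, substituting qfpp-definitions into the upp-definition then yields $R \in \cclone{\Gamma}_{\exists !}$. Explicitly, $\cc{N}{}{2}(\cols^k) = \{t_1, \ldots, t_k, \overbar{t_1}, \ldots, \overbar{t_k}\}$ and $\cc{N}{}{}(\cols^k)$ additionally contains $\mathbf{0}^{2^k}$ and $\mathbf{1}^{2^k}$; for $k \geq 1$ all these rows are pairwise distinct, because no $t_l$ is constant and an equality $t_l = \overbar{t_{l'}}$ would force $c_l = 1 - c_{l'}$ for every $c \in \{0,1\}^k$, which is impossible. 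The plan is to choose indices $i_1, \ldots, i_n \in [2^k]$ by setting $c_j = (r_1[j], \ldots, r_k[j]) \in \{0,1\}^k$ and taking $i_j$ so that the $i_j$-th column of $\cols^k$ equals $c_j$, which is possible because these columns lexicographically enumerate $\{0,1\}^k$. A direct check then gives $\pro_{i_1, \ldots, i_n}(t_l) = r_l$, $\pro_{i_1, \ldots, i_n}(\overbar{t_l}) = \overbar{r_l}$, and, in the $\cc{N}{}{}$ case, $\pro_{i_1, \ldots, i_n}(\mathbf{0}^{2^k}) = \mathbf{0}^n$ and $\pro_{i_1, \ldots, i_n}(\mathbf{1}^{2^k}) = \mathbf{1}^n$. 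These images are pairwise distinct (the $r_l$'s being non-constant by the choice of transversal in the $\cc{N}{}{}$ case) and exhaust $R$, so $\pro_{i_1, \ldots, i_n}$ is a bijection from the rows of $\cc{C}{}{}(\cols^k)$ onto $R$. Injectivity of the projection forces every remaining column $\beta \in [2^k] \setminus \{i_1, \ldots, i_n\}$ of $\cc{C}{}{}(\cols^k)$ to be a function of $x_{i_1}, \ldots, x_{i_n}$, which is exactly the uniqueness required for a valid upp-definition.

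The main obstacle, and the reason Theorem~\ref{thm:ess} cannot be applied as a black box, is the doubling of rows introduced by the negation polymorphism: a naive attempt using $\cols^{|R|}$ would yield $\cc{C}{}{}(\cols^{|R|})$, in which each tuple of $R$ appears as the image of two distinct rows that generally disagree on other coordinates, so the auxiliary variables would fail to be uniquely determined. Halving the parameter to $k = |R|/2$ (respectively $(|R|-2)/2$ for $\cc{N}{}{}$ with $k \geq 1$) matches the rows of $\cc{C}{}{}(\cols^k)$ to the elements of $R$ via the complementation symmetry, and this is precisely the adjustment that secures the collision-free bijection needed for uniqueness.
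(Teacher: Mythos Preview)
Your proposal is correct and follows essentially the same approach as the paper: reduce to $\cc{N}{}{2}$ and $\cc{N}{}{}$ via Theorem~\ref{thm:ess}, halve the parameter in the $\cols$-construction to match the complementation symmetry of $R$, and then argue that projecting $\cc{C}{}{}(\cols^k)$ onto the chosen coordinates hits $R$ bijectively so every remaining coordinate is determined. The paper phrases the last step as a contradiction (a non-determined coordinate would force a tuple of $S$ to coincide with one of $\overbar{S}$), while you phrase it as an explicit cardinality/injectivity count, but the substance is identical.
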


\begin{proof}
  From Theorem~\ref{thm:ess} only the two co-clones
  $\cc{IN}{}{}$ and $\cc{IN}{}{2}$ remain, where $\cc{IN}{}{} =
  \inv(\{\overbar{x}, 0, 1\})$ and $\cc{IN}{}{2} =
  \inv(\overbar{x})$. The two cases are similar and we for brevity
  concentrate only on $\cc{IN}{}{2}$. Hence, let  $R \in \cc{IN}{}{2}$ be
  an $n$-ary relation, which we without loss of generality may assume
  has no redundant arguments. Since $\cc{N}{}{2} = \clone{\overbar{x}}$ we
  begin by partitioning $R$ into two disjoint sets $S$ and
  $\overbar{S}$ where $t \in S$ if and only if $\overbar{t} \in
  \overbar{S}$. Let $s = |S| = |\overbar{S}|$, and construct the relation
  $\{\overbar{x}\}(\cols^s)$, which is qfpp-definable over $\Gamma$
  according to Theorem~\ref{thm:core_size}. Let $i_1, \ldots, i_n \in
  \{1, \ldots, n\}$ be the set of indices satisfying $\pro_{i_1,
    \ldots, i_n}(\cols^s) = S$ and $\pro_{i_1,
    \ldots, i_n}(\{\overbar{x}\}(\cols^s) \setminus \cols^s) =
  \overbar{S}$. Assume there exists $i \in [2^{s}] \setminus \{i_1,
  \ldots, i_n\}$ such that $i$ is not determined by $i_1, \ldots,
  i_n$. By construction, $i$ is determined by $i_1, \ldots, i_n$ in
  both $\cols^s$ and $\{\overbar{x}\}(\cols^s) \setminus \cols^s$,
  implying that the only possible outcome is the existence of $t \in
  \cols^s$ and $t' \in \{\overbar{x}\}(\cols^s) \setminus \cols^s$
  where $t[i] \neq t'[i]$, but $\pro_{i_1, \ldots, i_n}(t) =
  \pro_{i_1, \ldots, i_n}(t')$. But then $\overbar{t'} \in \cols^s$,
  and since $\pro_{i_1, \ldots, i_n}(\cols^s) = S$ and $\pro_{i_1,
    \ldots, i_n}(t) = \pro_{i_1, \ldots, i_n}(t')$,
  $\overbar{\pro_{i_1, \ldots, i_n}(t')} \in S$, contradicting the
  partitioning of $R$ into the disjoint sets $S$ and
  $\overbar{S}$. Then, since every argument $i$ distinct from $i_1,
  \ldots, i_n$ is determined by $i_1, \ldots, i_n$, the original
  relation $R$ can be upp-defined by $\{\overbar{x}\}(\cols^s)$ 
  using only unique existential quantification.

  The $\cc{IN}{}{}$ case is similar, and the only difference is that
  we partition the input relation $R$ into $S, \overbar{S}$, and
  $\zeroes, \ones$, and proceed with $s = |S| = |\overbar{S}|$. 
\end{proof}

Next, we consider the co-clone $\cc{ID}{}{2}$, consisting of all
relations pp-definable by binary clauses.

\begin{lemma} \label{lemma:id2}
$\cc{ID}{}{2}$ is $\exists !$-covered.
\end{lemma}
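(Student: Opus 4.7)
The plan is to apply Lemma~\ref{lemma:weak_upp}: it suffices to show that each relation in the plain base of $\cc{ID}{}{2}$ has a upp-definition over the weak base. From Table~\ref{table:weak_plain_bases}, the plain base is $\{(x_1 \vee x_2), (\neg x_1 \vee x_2), (\neg x_1 \vee \neg x_2)\}$ and the weak base is the single relation
\[ R_w(x_1, x_2, x_3, x_4, c_0, c_1) \equiv (x_1 \vee x_2) \wedge \neqq(x_1, x_3) \wedge \neqq(x_2, x_4) \wedge \F(c_0) \wedge \T(c_1). \]
So the task reduces to exhibiting, for each of the three clauses in the plain base, an upp-definition using $R_w$ whose auxiliary variables are all uniquely determined by the free ones.

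The structure of $R_w$ makes this straightforward. In any tuple satisfying $R_w$ we have $x_3 = \overbar{x_1}$, $x_4 = \overbar{x_2}$, $c_0 = 0$, and $c_1 = 1$, so \emph{every} coordinate of $R_w$ is a function of the first two. This means that for any permutation/selection of the arguments of $R_w$ that leaves the two ``clause'' arguments free (possibly routed through the $\neqq$-copies), the remaining variables are forced. Concretely, I would take
\begin{align*}
(x_1 \vee x_2) &\equiv \exists! x_3, x_4, c_0, c_1 \colon R_w(x_1, x_2, x_3, x_4, c_0, c_1), \\
(\neg x_1 \vee x_2) &\equiv \exists! y_1, y_2, c_0, c_1 \colon R_w(y_1, x_2, x_1, y_2, c_0, c_1), \\
(\neg x_1 \vee \neg x_2) &\equiv \exists! y_1, y_2, c_0, c_1 \colon R_w(y_1, y_2, x_1, x_2, c_0, c_1).
\end{align*}
In the first line $x_3, x_4, c_0, c_1$ are forced to $\overbar{x_1}, \overbar{x_2}, 0, 1$. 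In the second and third lines, the $\neqq$-constraints in $R_w$ force $y_1 = \overbar{x_1}$ and $y_2 = \overbar{x_2}$, after which the ``$(y_1 \vee x_2)$'' or ``$(y_1 \vee y_2)$'' part of $R_w$ becomes exactly the desired clause, and $c_0, c_1$ are again fixed.

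I would then verify correctness by a simple case check on the two free variables in each definition, and invoke Lemma~\ref{lemma:weak_upp} to conclude that $\cc{ID}{}{2}$ is $\exists!$-covered. There is no real obstacle here; the weak base was already engineered to carry explicit negations (via $\neqq$) and the two constants $0,1$, and these are precisely what is needed to flip literals in a upp-respecting way. The only thing to double-check is that the $\neqq$ constraints inside $R_w$ indeed determine the auxiliary variables uniquely rather than merely existentially, which is immediate from $\neqq = \{(0,1),(1,0)\}$ being the graph of a function once one coordinate is fixed.
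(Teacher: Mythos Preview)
Your proposal is correct and essentially identical to the paper's proof: both invoke Lemma~\ref{lemma:weak_upp}, read off the same weak base $R_w$ and plain base from Table~\ref{table:weak_plain_bases}, and give the very same three upp-definitions (up to renaming the auxiliary variables $y_1,y_2$ versus $x_3,x_4$). The paper's verification is slightly terser, but the content is the same.
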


\begin{proof}
  We will show the result by using Lemma~\ref{lemma:weak_upp}.
  According to Table~\ref{table:weak_plain_bases} the relation
$R_w(x_1, x_2, x_3, x_4, c_0, c_1) \equiv$ \[(x_1 \lor x_2) \land \neqq(x_1, x_3)
  \land \neqq(x_2, x_4) \land
   \F(c_0) \wedge \T(c_1)\] is a weak base of
$\cc{ID}{}{2}$, and the plain base $\Gamma_p$ is given by $\Gamma_p = \{(x_1 \vee x_2), (\neg
  x_1 \vee x_2), (\neg x_1 \vee \neg x_2)\}$. Hence, we need to prove
  that $R_w$ can upp-define each relation in $\Gamma_p$. Now consider the
  following upp-definitions:
\[(x_1 \lor x_2) \equiv \exists! x_3, x_4, c_0, c_1
\colon R_w(x_1, x_2, x_3, x_4, c_0, c_1),\]
\[(\neg x_1 \lor x_2) \equiv \exists! x_3, x_4, c_0, c_1
\colon R_w(x_3, x_2, x_1, x_4, c_0, c_1),\]
and
\[(\neg x_1 \lor \neg x_2) \equiv \exists! x_3, x_4, c_0, c_1
\colon R_w(x_3, x_4, x_1, x_2, c_0, c_1).\]

In each case it is readily verified that the existentially quantified
variables are indeed uniquely determined. For example, in the
upp-definition of $(\neg x_1 \lor x_2)$ the variable $x_3$ is uniquely
determined by $x_1$ since $t[1] \neq t[3]$ for every $t \in R_w$.
\end{proof}

We now tackle the cases when $\inv(\{\land, 0, 1\}) \subseteq
\cc{IC}{}{} \subseteq \inv(\{\land\})$, which in Figure~\ref{fig:lattice}
corresponds to $\cc{E}{}{}$, $\cc{E}{}{0}$, $\cc{E}{}{1}$, and
$\cc{E}{}{2}$. As a first step we begin by characterising the
determined arguments of relations in $\cc{E}{}{0}$.

\begin{lemma} \label{lemma:upp_ie0}
  Let $R \in \cc{IE}{}{0}$ be an $n$-ary relation. If $i \in [n]$ is determined in $R$ then 
    either (1) there exists $i_1, \ldots, i_k \in [n]$ distinct from $i$ such that $t[i] = t[i_1] \land
    \ldots \land t[i_k]$ for every $t \in R$, or
    (2) $t[i] = 0$ for every $t \in R$.
\end{lemma}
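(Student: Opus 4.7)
My plan rests on the two defining properties of $\cc{IE}{}{0} = \inv(\{\land, 0\})$: every $R \in \cc{IE}{}{0}$ is closed under componentwise AND, and (provided $R \neq \emptyset$) contains the all-zero tuple $\zeroes^n$. I will also use that by definition of ``determined'' we can take the indices $i_1, \dots, i_k$ to be distinct from $i$: collapse any occurrence of $i$ in the list by noting that $t[i]$ is a function of $\pro_{[n]\setminus\{i\}}(t)$, since every tuple of $R$ is uniquely extended at $i$.

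Assume $R$ is nonempty (otherwise both alternatives hold vacuously). If $t[i]=0$ for every $t\in R$, then case (2) holds, so assume $S := \{t\in R : t[i]=1\}$ is nonempty. The first main object I would introduce is the ``meet witness'' $t^* := \bigwedge_{t\in S} t$, which lies in $R$ by closure under $\land$, satisfies $t^*[i] = 1$, and is the pointwise minimum of $S$. Let $J := \{j\in [n]: t^*[j]=1\}$, so $i\in J$. The candidate definition for case (1) will then be
\[
    t[i] \; = \; \bigwedge_{j\in J\setminus\{i\}} t[j] \qquad\text{for all } t\in R.
\]

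Two things must be checked. First, $J\setminus\{i\}$ is nonempty: if it were empty, then $t^*[i_r]=0$ for every determining index $i_r$ (since $i_r\neq i$), while $\zeroes^n \in R$ also has zeros at those positions. Applying the determining function $h$ to both $t^*$ and $\zeroes^n$ would force $1 = t^*[i] = h(0,\dots,0) = \zeroes^n[i] = 0$, a contradiction. Second, the displayed identity holds: if $t[i]=1$ then $t\in S$, hence $t \geq t^*$ pointwise, so $t[j]=1$ for every $j\in J\setminus\{i\}$. For the converse, suppose $t\in R$ has $t[i]=0$ but $t[j]=1$ for all $j\in J\setminus\{i\}$; then the tuple $t \land t^*$ (which belongs to $R$) agrees with $t^*$ on every coordinate outside $\{i\}$ (it matches $t^*$ on $J\setminus\{i\}$ since both are $1$ there, and on $[n]\setminus J$ since $t^*$ is $0$ there), yet differs at coordinate $i$. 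Because the determining indices $i_1,\dots,i_k$ avoid $i$, applying $h$ yields the same input-vector but different outputs for $t\land t^*$ and $t^*$, contradicting functionality of $h$.

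The main obstacle, and the reason I construct $t^*$ before invoking the determined-ness hypothesis, is that the hypothesis alone gives no structure; the $\{\land, 0\}$-invariance is needed both to produce $t^*$ inside $R$ and, via $\zeroes^n\in R$, to rule out the degenerate case $J=\{i\}$. The rest is bookkeeping that the pointwise AND over $J\setminus\{i\}$ does the job, which follows cleanly from the minimality of $t^*$ on $S$ together with the contradiction argument above.
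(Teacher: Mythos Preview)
Your proof is correct and follows essentially the same route as the paper: both take the meet $t^*=\bigwedge\{t\in R:t[i]=1\}$, read off the index set $J\setminus\{i\}$ of its $1$-coordinates, use $\mathbf{0}^n\in R$ to rule out $J\setminus\{i\}=\emptyset$, and derive a contradiction from a hypothetical $t$ with $t[i]=0$ but $t[j]=1$ on $J\setminus\{i\}$ by observing that $t\land t^*$ and $t^*$ agree off $i$ yet differ at $i$. The only cosmetic difference is that you phrase the non-emptiness of $J\setminus\{i\}$ via the determining function $h$, whereas the paper phrases it directly as ``$t^*$ and $\mathbf{0}^n$ would agree outside $i$''; these are the same observation.
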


\begin{proof} 
  Assume that $i \in [n]$
  is determined in $R$. Let $R_1 = \{t_1, \ldots, t_m\} = \{t \in R \mid
  t[i] = 1\}$ and $R_0 = \{s_1, \ldots, s_{m'}\} = \{s \in R \mid
  t[i] = 0\}$. Note first that $R_0 = \emptyset$ cannot happen since
  $R$ is preserved by 0, and if $R_1 = \emptyset$ then we
  end up in case (2). Hence, in the remainder of the proof we assume
  that  $R_0$ and $R_1$ are both non-empty. 


    Consider the tuple $t_1 \land
    \ldots \land t_m = t$ (applied componentwise), and observe that $t
    \in \{t_1, \ldots, t_m\}$ since $R$ is preserved by $\land$, and that
    $t[i] = 1$ since $t_1[i] = \ldots = t_m[i] = 1$. 
    Furthermore, if $t[j] = 1$ for some $j \in [n]$ then it must also
    be the case that $t_1[j] = \ldots = t_m[j] = 1$. Let $i_1, \ldots, i_l \in [n] \setminus \{i\}$ denote
    the set of indices such that $t[i_j ]= 1$. Then $t'[i] = t'[i_1] \land
    \ldots \land t'[i_l]$ for every $t' \in R_1$, and we also
    claim that $s[i] = s[i_1] \land
    \ldots \land s[i_l]$ for every $s \in R_0$, thus ending up in case
    (1). Note that $l > 0$, as
    otherwise every argument distinct from $i$ is constantly 0 in
    $t$, which is not consistent with the fact that $\mathbf{0}^n \in R_0$, since
    it contradicts the assumption that $i$ is determined.
    Assume that there exists $s \in
    R_0$ such that $s[i]  = 0 \neq s[i_1] \land \ldots \land s[i_l]$. Then,
    clearly, $s[i_1] = \ldots = s[i_l] = 1$. But then $t \land s \in R$
    implies that $i$ is not determined, since $\pro_{1, \ldots, i-1,
      i+1, \ldots, n}(t \land s) = \pro_{1, \ldots, i-1, i+1, \ldots,
      n}(t)$ but $(t \land s)[i] \neq t[i]$. Hence, $s[i] = s[i_1] \land
    \ldots \land s[i_l]$ for every $s \in R$, which concludes the proof.
\end{proof}

Lemma~\ref{lemma:upp_ie0} also shows that if $R \in \cc{IE}{}{}$ with
a determined argument $i$ then there exists $i_1, \ldots, i_k \in
[\ar(R)]$ such that $t[i] = t[i_1] \land \ldots \land t[i_k]$ for
every $t \in R$, since the constant relation $\{(0)\} \notin
\cc{IE}{}{}$. Before we use Lemma~\ref{lemma:upp_ie0} to show the
non-covering results for $\cc{IE}{}{}$ and $\cc{IE}{}{0}$, we will
need the following lemma, relating the existence of a
upp-definition to a qfpp-definition of a special form. The proof
essentially follows directly from the statement of the lemma and is therefore omitted.

\begin{lemma} \label{lemma:qfpp_to_upp}
  Let $\Gamma$ be a constraint language. Then an $n$-ary relation $R
  \in \cclone{\Gamma}_{\exists !}$ has a upp-definition $R(x_1, \ldots, x_n)
  \equiv \exists ! y_1, \ldots, y_m \colon \varphi(x_1, \ldots, x_n,
  y_1, \ldots, y_m)$ if and only if there exists an $(n+m)$-ary relation $R' \in
  \pcclone{\Gamma}$ such that $\pro_{1, \ldots, n}(R') = R$ where
  each $n < i \leq n+m$ is determined by $1, \ldots, n$. 
\end{lemma}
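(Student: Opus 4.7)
The plan is to observe that the lemma is essentially an unpacking of the two relevant definitions, with the relation $R'$ playing the role of the ``pre-projection'' witnessing the upp-definition. The proof splits naturally into the two directions, each of which is a one-line translation.

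For the forward direction, I would start from a given upp-definition
\[R(x_1,\ldots,x_n)\equiv \exists! y_1,\ldots,y_m\colon \varphi(x_1,\ldots,x_n,y_1,\ldots,y_m)\]
and simply define $R'$ to be the $(n+m)$-ary relation whose defining formula is the quantifier-free matrix $\varphi(x_1,\ldots,x_n,y_1,\ldots,y_m)$. Since $\varphi$ is a conjunction of atomic $\Gamma$-formulas together with equalities, $R' \in \pcclone{\Gamma}$ by definition of qfpp-definability. The equivalence between the upp-definition and the existentially quantified matrix yields $\pro_{1,\ldots,n}(R') = R$, while the unique existential quantification of each $y_i$ translates directly into the statement that coordinate $n+i$ of $R'$ is determined by coordinates $1,\ldots,n$.

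For the backward direction, suppose we are given an $(n+m)$-ary $R' \in \pcclone{\Gamma}$ with $\pro_{1,\ldots,n}(R') = R$ and with coordinates $n+1,\ldots,n+m$ determined by $1,\ldots,n$. Take any qfpp-definition $R'(x_1,\ldots,x_n,y_1,\ldots,y_m) \equiv \varphi(x_1,\ldots,x_n,y_1,\ldots,y_m)$ over $\Gamma$, which exists by assumption. Then
\[R(x_1,\ldots,x_n) \equiv \exists! y_1,\ldots,y_m\colon \varphi(x_1,\ldots,x_n,y_1,\ldots,y_m),\]
because the projection property yields the correct set of $x$-tuples, and the determinedness condition is exactly what is required by Definition~\ref{def:uppdef2} for the quantification to be unique existential.

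Since both directions reduce to rewriting the defining formula and re-reading Definitions~\ref{def:uppdef2} and of $\pcclone{\cdot}$, there is no real obstacle; the only thing to be careful about is to note that equality constraints introduced inside pp-definitions are permitted in qfpp-definitions as well, so that the matrix of any upp-definition is genuinely a witness of membership in $\pcclone{\Gamma}$, and conversely that any qfpp-formula can serve as the matrix of a upp-formula without further modification.
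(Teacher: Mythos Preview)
Your proposal is correct and matches the paper's approach: the paper explicitly omits the proof, stating that it ``essentially follows directly from the statement of the lemma,'' and your argument is precisely this direct unpacking of Definitions~\ref{def:uppdef2} and of $\pcclone{\cdot}$. Your care in noting that equality constraints are permitted in qfpp-definitions is appropriate and closes the only potential gap.
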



Say that a partial operation $f$ is {\em $\land$-closed} if
$\domain(f)$ is preserved by $\land$ and that it is {\em 0-closed} if
$\mathbf{0}^{\ar(f)} \in \domain(f)$. We may now describe
partial polymorphisms of $\cclone{\Gamma}_{\exists !}$ using
$\land$-closed and $0$-closed partial polymorphisms of $\Gamma$.

\begin{lemma} \label{lemma:land_closed}
  Let $\Gamma$ be a constraint language such that $\cclone{\Gamma} =
  \cc{IE}{}{0}$. If $f \in \ppol(\Gamma)$ is $\land$- and 0-closed then $f \in
  \ppol(\cclone{\Gamma}_{\exists !})$.
\end{lemma}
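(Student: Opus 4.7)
The plan is to fix an arbitrary $n$-ary $R \in \cclone{\Gamma}_{\exists !}$ together with tuples $t_1, \ldots, t_k \in R$ (where $k = \ar(f)$) such that $(t_1[j], \ldots, t_k[j]) \in \domain(f)$ for each $j \in [n]$, and then to show $f(t_1, \ldots, t_k) \in R$. First, I would invoke Lemma~\ref{lemma:qfpp_to_upp} to lift $R$ to a relation $R' \in \pcclone{\Gamma}$ of arity $n+m$ with $\pro_{1, \ldots, n}(R') = R$, where each coordinate $i \in \{n+1, \ldots, n+m\}$ is determined by the coordinates in $[n]$. By the Galois connection in Theorem~\ref{theorem:galois}, $R' \in \pcclone{\Gamma}$ yields $f \in \ppol(\Gamma) \subseteq \ppol(R')$, so $f$ preserves $R'$. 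Each $t_l$ then admits a unique lift $t'_l \in R'$ projecting onto it (uniqueness follows because the aux coordinates are determined by $[n]$). It therefore suffices to verify that $(t'_1[i], \ldots, t'_k[i]) \in \domain(f)$ for every aux coordinate $i > n$: once this is established, preservation gives $f(t'_1, \ldots, t'_k) \in R'$, and projecting onto $[n]$ produces the desired $f(t_1, \ldots, t_k) \in R$.

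The core step is to handle each aux coordinate $i$ by passing to the reduct $R'' = \pro_{1, \ldots, n, i}(R')$, which is an $(n+1)$-ary relation lying in $\cclone{\Gamma} = \cc{IE}{}{0}$ (as co-clones are closed under projection) and whose last coordinate is determined by $[n]$. I would then apply Lemma~\ref{lemma:upp_ie0} to $R''$ at coordinate $n+1$. If the first case of that lemma holds, then $t[n+1] = 0$ for every $t \in R''$, so $t'_l[i] = 0$ for every $l$, whence $(t'_1[i], \ldots, t'_k[i]) = \mathbf{0}^k \in \domain(f)$ by $0$-closedness. In the second case, there must exist $j_1, \ldots, j_r \in [n]$ (the only indices available in the projected relation) with $t'_l[i] = t_l[j_1] \land \ldots \land t_l[j_r]$ for each $l$, making $(t'_1[i], \ldots, t'_k[i])$ the componentwise $\land$ of the tuples $(t_1[j_s], \ldots, t_k[j_s]) \in \domain(f)$; $\land$-closedness of $\domain(f)$ then keeps the conjunction inside $\domain(f)$.

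The main obstacle the proof sidesteps is that a naive application of Lemma~\ref{lemma:upp_ie0} directly to $R'$ would only guarantee that $t[i]$ decomposes as a $\land$ over some indices in $[n+m] \setminus \{i\}$ that may still include unresolved aux coordinates, forcing a potentially thorny induction on the auxiliary coordinates. Projecting to $R''$ before invoking the lemma forces all defining indices into $[n]$ in a single step, and this is really the only genuinely new ingredient; the rest is bookkeeping about the Galois correspondence, the uniqueness of the lift, and the two structural cases of Lemma~\ref{lemma:upp_ie0}.
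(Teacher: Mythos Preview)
Your proof is correct and follows essentially the same route as the paper: lift $R$ to $R' \in \pcclone{\Gamma}$ via Lemma~\ref{lemma:qfpp_to_upp}, invoke Lemma~\ref{lemma:upp_ie0} on each auxiliary coordinate, and use $\land$- and $0$-closedness of $\domain(f)$ to show every column lies in $\domain(f)$. Your explicit passage to the reduct $R'' = \pro_{1,\ldots,n,i}(R')$ before applying Lemma~\ref{lemma:upp_ie0} is a clean way to force the witnessing indices into $[n]$; the paper asserts this directly without spelling out the projection, so your version is in fact slightly more careful on that point.
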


\begin{proof}
  Let $R \in \cclone{\Gamma}_{\exists !}$ be an $n$-ary relation and
  let $R' \in \pcclone{\Gamma}$ be the $(n+m)$-ary relation from
  Lemma~\ref{lemma:qfpp_to_upp} where $\pro_{1, \ldots, n}(R') = R$
  and each $i \in \{n+1, \ldots, n+m\}$ is determined in $R'$. Since
  $f$ preserves each relation in $\Gamma$ it follows that $f$ also
  preserves $R'$ since $R' \in \pcclone{\Gamma}$. Assume, with the aim
  of reaching a contradiction, that there exists $s_1, \ldots, s_k \in
  R$ such that $f(s_1, \ldots, s_k) \notin R$ ($\ar(f) = k$), and let $t_1, \ldots,
  t_n$ be the corresponding tuples in $R'$ --- guaranteed to exist due
  to the construction of $R'$. Now, for each $i \in \{n+1, \ldots,
  n+m\}$, either there exists $i_1, \ldots, i_{\ell} \in [n]$ such that
  $t[i] = t[i_1] \land \ldots \land
  t[i_{\ell}]$ for each $t \in \{t_1, \ldots, t_k\}$, or $t_j[i] = 0$ for each $t_j \in \{t_1, \ldots,
  t_k\}$ (from Lemma~\ref{lemma:upp_ie0}). But since $(t_1[i_1], \ldots, t_k[i_1]), \ldots,
  (t_1[i_{\ell}], \ldots, t_k[i_{\ell}]) \in \domain(f)$ (since
  $f(s_1, \ldots, s_k)$ is defined) it follows that $(t_1[i], \ldots,
  t_k[i]) \in \domain(f)$, too, due to the assumption that $f$ is
  $\land$-closed and 0-closed. However, then
  $f(t_1, \ldots, t_k)$ is also defined and it follows that $f(t_1,
  \ldots, t_k) \notin R'$, contradicting the assumption that $f \in \ppol(\Gamma)$.
\end{proof}

We now have all the technical machinery in place to prove that
$\cc{IE}{}{0}$ and $\cc{IE}{}{}$ are not ${\exists !}$-covered.

\begin{theorem} \label{lemma:ie0}
  Let $R_w$ be the weak base of $\cc{IE}{}{0}$ from Table~\ref{table:weak_plain_bases}. Then
  $\cclone{R_w}_{\exists !}
  \subset \cc{IE}{}{0}$.
\end{theorem}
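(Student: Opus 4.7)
The plan is to apply Lemma~\ref{lemma:land_closed}: I will exhibit a partial polymorphism $f$ of $R_w$ that is $\land$-closed and $0$-closed but fails to preserve some relation $R \in \cc{IE}{}{0}$. Lemma~\ref{lemma:land_closed} then gives $f \in \ppol(\cclone{R_w}_{\exists !})$, so $R \notin \cclone{R_w}_{\exists !}$, and since $R \in \cc{IE}{}{0}$ the inclusion is strict.

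The natural witness relation is $R = \nandn{}{2} = \{(0,0),(0,1),(1,0)\}$: it lies in $\cc{IE}{}{0}$ because it is preserved by $\land$ and the constant~$0$, and it even appears in the plain base of $\cc{IE}{}{0}$ listed in Table~\ref{table:weak_plain_bases}. For $f$ I would take the binary partial operation with $\domain(f) = \{(0,0),(0,1),(1,0)\}$ and $f(0,0) = 0$, $f(0,1) = f(1,0) = 1$, that is, the restriction of $\vee$ to the complement of the diagonal entry $(1,1)$. The domain $\domain(f)$ is closed under coordinatewise $\land$ and contains $(0,0)$, so $f$ is $\land$-closed and $0$-closed. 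Clearly $f((0,1),(1,0)) = (1,1) \notin \nandn{}{2}$, so $f$ does not preserve $\nandn{}{2}$.

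The central task is to check $f \in \ppol(R_w)$. One enumerates the five tuples of $R_w$, namely $t_1 = \mathbf{0}^5$ together with $t_2 = (0,0,0,1,0)$, $t_3 = (0,0,1,1,0)$, $t_4 = (0,1,0,1,0)$, and $t_5 = (1,1,1,1,0)$, and observes that every tuple other than $t_1$ carries a $1$ in the fourth coordinate. Hence any ordered pair of $R_w$-tuples not involving $t_1$ produces $(1,1)$ at position $4$, which is outside $\domain(f)$, so the preservation condition is vacuous in those cases. For the nine ordered pairs involving $t_1$, $f$ outputs precisely the non-$t_1$ operand, since $f(0,a) = f(a,0) = a$ for $a \in \{0,1\}$ (where defined), which is a tuple of $R_w$. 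This completes the preservation check.

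The main obstacle is pinpointing the partial polymorphism: it must accommodate the interaction between the clauses $(x_1 \leftrightarrow x_2 x_3)$ and $(x_2 \vee x_3 \rightarrow x_4)$ in $R_w$ while separating out a relation forced to lie in $\cc{IE}{}{0}$. The structural feature of $R_w$ that makes the above $f$ possible is precisely that the fourth coordinate is $1$ on every nontrivial tuple, which blocks all nontrivial pairings through position~$4$ and makes the $\vee$-like behaviour of $f$ admissible. Once $f$ is fixed, combining the verifications above with Lemma~\ref{lemma:land_closed} yields $\nandn{}{2} \notin \cclone{R_w}_{\exists !}$, so $\cclone{R_w}_{\exists !} \subsetneq \cc{IE}{}{0}$.
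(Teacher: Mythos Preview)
Your proof is correct and takes essentially the same approach as the paper: the same binary partial operation $f$ with $\domain(f)=\{(0,0),(0,1),(1,0)\}$ and $f(0,0)=0$, $f(0,1)=f(1,0)=1$, the same verification that $f$ is $\land$- and $0$-closed and preserves $R_w$ (via the observation that coordinate~4 equals~1 on every nonzero tuple), and the same appeal to Lemma~\ref{lemma:land_closed}. The only difference is the witness relation: the paper uses $R(x_1,x_2,x_3)\equiv x_1\leftrightarrow x_2x_3$ and finishes by showing $f\notin\ppol(R')$ for every determined extension $R'$ of $R$ via Lemma~\ref{lemma:qfpp_to_upp}, whereas you use $\nandn{}{2}$ and conclude directly from $f\in\ppol(\cclone{R_w}_{\exists !})$ together with $f\notin\ppol(\nandn{}{2})$. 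Your route is a bit more streamlined, since once Lemma~\ref{lemma:land_closed} is in hand there is no need to unfold the extension $R'$ again.
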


\begin{proof}
  We prove that the relation $R(x_1, x_2, x_3) \equiv x_1
  \leftrightarrow x_2x_3$ is not upp-definable over $R_w$, which is
  sufficient since $R \in
  \cc{IE}{}{0}$, as evident in Table~\ref{table:weak_plain_bases}.
  Furthermore, using Lemma~\ref{lemma:qfpp_to_upp}, we only have to
  prove that
  any $(3+n)$-ary $R'$ where $\pro_{1,2,3}(R') = R$, and where each other argument is determined
  by the three first, is not included in $\pcclone{R_w}$. 
  Assume, without loss of generality,
  that $R'$ does not contain any redundant arguments. Define the binary partial operation $f$
  such that $f(0,0) = 0$, $f(0,1) = f(1,0) = 1$. By construction,
$f$ is both 0-closed and $\land$-closed, and it is also readily
  verified that $f$ preserves $R_w$,  which via
  Lemma~\ref{lemma:land_closed} then implies that $f \in
  \ppol(\cclone{R_w}_{\exists !})$. 
  To finish the proof we also need to show that $f \notin
  \ppol(R')$, which is sufficient since it implies that $R' \notin
  \cclone{R_w}_{\exists !}$. Take two tuples $s, t \in R'$ such that
  $\pro_{1,2,3}(s) = (0,0,1)$, and $\pro_{1,2,3}(t) = (0,1,0)$. From Lemma~\ref{lemma:upp_ie0},
  for each $3 < i \leq n + 3$,  either $i$ is constant 0 in $R'$ or there exists $i_1, \ldots, i_k \in
  \{1,2,3\}$, $k \leq 3$, such that $t[i] = t[i_1] \land \ldots
  t[i_k]$ for each $t \in R'$. But
  then  $(s[i], t[i]) \in \domain(f)$ for each $3 < i \leq n + 3$, since either $(s[i], t[i]) = (0, 0) \in \domain(f)$ or $(s[i], t[i])$ is a conjunction over
  $(0,0,1)$ and $(0,1,0)$. However, this implies that $f(s,t) = u
  \notin R'$ since $\pro_{1,2,3}(u) = (0,1,1)$. Hence, $f$ does not
  preserve $R'$, and $R' \notin \pcclone{R_w}$ via Theorem~\ref{theorem:galois}.
\end{proof}

The proof for $\cc{IE}{}{}$ uses the same construction and we omit the
details. Surprisingly, as we will now see, $\cc{IE}{}{1}$ and
$\cc{IE}{}{2}$ behave entirely differently and are in fact $\exists
!$-covered.

\begin{lemma} \label{lemma:ie1_and_ie2}
  $\cc{IE}{}{1}$ and $\cc{IE}{}{2}$ are $\exists !$-covered.
\end{lemma}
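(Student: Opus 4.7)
The plan is to invoke Lemma~\ref{lemma:weak_upp}: it is enough to show that every relation in the plain base of $\cc{IE}{}{1}$ (respectively $\cc{IE}{}{2}$) is upp-definable over the corresponding weak base. From Table~\ref{table:weak_plain_bases} the weak bases are the single relations $R_w^1 \equiv (x_1 \leftrightarrow x_2 x_3) \land \T(c_1)$ and $R_w^2 \equiv (x_1 \leftrightarrow x_2 x_3) \land \F(c_0) \land \T(c_1)$, while the plain bases consist of the implicative clauses $(\neg x_1 \vee \ldots \vee \neg x_k \vee x)$ for $k \in \mathbb{N}$, together with the relations $\nandn{}{n}$ in the $\cc{IE}{}{2}$ case.

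The central observation is that the weak base itself is a ``conjunction gadget'': a single atom $R_w^i(y, a, b, \ldots)$ forces $y = a \land b$ (and thereby uniquely determines $y$ from $a, b$) while simultaneously pinning down a witness variable $c_1$ (and, in $\cc{IE}{}{2}$, also $c_0$) to the appropriate constant. Chaining $k-1$ such atoms with fresh outputs $y_2, \ldots, y_k$ and shared witnesses then produces a uniquely determined variable $y_k = x_1 \land \ldots \land x_k$.

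Using this gadget, each plain-base relation admits an immediate upp-definition. The implicative clause $(\neg x_1 \vee \ldots \vee \neg x_k \vee x) \equiv (x_1 \land \ldots \land x_k) \to x$, for $k \geq 1$, is captured by the conjunction chain building $y_k$ together with one further atom $R_w^i(y_k, y_k, x, \ldots)$, which encodes $y_k = y_k \land x$, i.e., $y_k \leq x$; for $k = 0$ the degenerate atom $R_w^1(x, x, x, x)$ already forces $x = 1$. For $\nandn{}{n}(x_1, \ldots, x_n)$ the chain produces $y_n = x_1 \land \ldots \land x_n$, and identifying $y_n$ with the constant-$0$ variable $c_0$ of $R_w^2$ in the final atom yields $x_1 \land \ldots \land x_n = 0$ as required (with obvious collapses for $n \leq 2$).

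No step is particularly delicate. The only thing to verify is that every existentially quantified variable is uniquely determined by the free $x_i$: the intermediates $y_i$ as conjunctions of prefixes of $x_1, \ldots, x_k$, and the witnesses $c_0, c_1$ as the fixed constants $0$ and $1$. This is immediate from the functional character of $\land$ and from $\F$ and $\T$ occurring syntactically in the weak bases. The contrast with the non-covered cases $\cc{IE}{}{}$ and $\cc{IE}{}{0}$ is illuminating: their weak bases additionally carry the disjunctive constraint $(x_2 \vee x_3 \to x_4)$, which is exactly what enables the $\land$- and $0$-closed partial polymorphism used in Theorem~\ref{lemma:ie0} to block such constructions; here that obstruction is absent, so the straightforward gadget approach succeeds.
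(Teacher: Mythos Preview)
Your proposal is correct and follows essentially the same approach as the paper: invoke Lemma~\ref{lemma:weak_upp}, then use the weak-base atom as a conjunction gadget to build $y_k = x_1 \land \ldots \land x_k$, and finish with one further atom encoding $y_k \le x$ (respectively $y_k = 0$). The paper presents the same construction inductively, passing through the auxiliary observation that $(x_1 \ldots x_{k-1} \leftrightarrow y)$ is qfpp-definable from the implicative clauses already obtained, whereas your direct chaining avoids that detour; this is only a presentational difference.
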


\begin{proof}
  We begin with $\cc{IE}{}{1}$. Let $R_{\cc{IE}{}{1}}(x_1, x_2, x_3, x_4) \equiv (x_1 \leftrightarrow x_2x_3) \land
  \T(x_4)$ be the weak base of $\cc{IE}{}{1}$, and $\Gamma_p = \{(\neg
  x_1 \vee \ldots \vee \neg x_k \vee x) \mid k \geq 0\}$ the plain
  base of $\cc{IE}{}{1}$, from Table~\ref{table:weak_plain_bases}. First, note that for $k
  = 0$ the relation $(\neg x_1 \vee \ldots \vee x_k \vee x) \equiv
  \T(x)$ and may be upp-defined by $\T(x) \equiv R_{\cc{IE}{}{1}}(x, x, x, x)$.
  Second, observe that if
  we can qfpp-define $(\neg x_1 \vee \ldots \vee \neg x_k \vee x)
  \equiv (x_1 \ldots x_k \rightarrow x)$
  then we can also qfpp-define $(x_1 \ldots x_k) \leftrightarrow x$,
  since (1) $(x_1 \rightarrow x_2) \equiv (x_1 \ldots x_1) \rightarrow
  x_2$ and (2) $(x_1 \ldots x_k) \leftrightarrow x \equiv x_1 \ldots x_k
  \rightarrow x \land (x \rightarrow x_1) \land \ldots \land (x
  \rightarrow x_k)$. We may then upp-define $(x_1 \ldots x_k
  \rightarrow x)$ for $k \geq 1$ as

\begin{enumerate}
\item
  $x_1 \rightarrow x \equiv \exists ! y \colon R_{\cc{IE}{}{1}}(x_1, x_1, x, y)$,
\item
  $(x_1x_2 \rightarrow x) \equiv \exists ! x_3, x_4 \colon R_{\cc{IE}{}{1}}(x_3,
  x_1, x_2, x_4) \land x_3 \rightarrow x$, and for $k \geq 3$
\item
  $(x_1 \ldots x_k \rightarrow x) \equiv \exists ! y \colon (x_1 \ldots x_{k-1}
  \leftrightarrow y) \land (x_ky) \rightarrow x$,
  \end{enumerate}
using the upp-definable relation $(x_1 \ldots x_{k-1} \rightarrow x)$
at level $k - 1$.

Let us now consider $\cc{IE}{}{2}$. Let $R_{\cc{IE}{}{2}}(x_1, x_2, x_3, x_4, x_5) \equiv (x_1 \leftrightarrow x_2x_3) \land
  \F(x_4) \land \T(x_5)$ be the weak base of $\cc{IE}{}{2}$, and $\Gamma_p = \{\nandn{}{\ra}(x_1, \ldots, x_{\ra}) \mid \ra
  \in \mathbb{N}\} \cup \{(\neg
  x_1 \vee \ldots \vee \neg x_k \vee x) \mid k \geq 0\} $ the plain
  base of $\cc{IE}{}{2}$, from Table~\ref{table:weak_plain_bases}. Since the
  upp-definitions of $\{(\neg x_1 \vee \ldots \vee \neg x_k \vee x)
  \mid k \geq 0\}$ are similar to the $\cc{IE}{}{1}$ case we only
  present the upp-definitions of $\{\nandn{}{\ra}(x_1, \ldots, x_{\ra}) \mid \ra
  \in \mathbb{N}\}$. First, observe that $R_{\cc{IE}{}{1}}(x_1, x_2,
  x_3, x_4) \equiv \exists ! x_5 \colon R_{\cc{IE}{}{2}}(x_1, x_2,
  x_3, x_5, x_4)$, implying that the relation $(x_1 \ldots x_k
  \rightarrow x)$ is upp-definable over $R_{\cc{IE}{}{2}}$. Then
  $\nandn{}{k}(x_1, \ldots, x_k)$ can be upp-defined as
  $\nandn{}{k}(x_1, \ldots, x_k) \equiv \exists ! x,y \colon (x
  \leftrightarrow x_1\ldots x_k) \land R_{\cc{IE}{}{2}}(x, x, x, x, y)$.
\end{proof}



The natural generalisation of the Boolean operations $\land$ and
$\lor$ are so-called {\em semilattice operations}; binary
operations that are idempotent, associative, and commutative. It is then
tempting to conjecture that Lemma~\ref{lemma:upp_ie0} can be
generalized to arbitrary semilattice operations, i.e., that every
determined argument can be described as a semilattice combination of
other arguments, whenever a relation is preserved by a given
semilattice operation. This, however, is not true. For a simple
counterexample define the semilattice operation $s \colon \{0,1,2\}^2
\rightarrow \{0,1,2\}$ as $s(x,x) = x$ and
$s(x,y) = 0$ otherwise. If we then consider the relation
$R = \{(0,0),(1,1), (2,0)\}$ it is easily verified that $s$ preserves
$R$, and that the second argument is uniquely determined
by the first argument but cannot be described via the
operation $s$. 

The only co-clones remaining are $\cc{IS}{}{11}$ and
$\cc{IS}{n}{11}$ (for $n \geq 2$). As we will
see, unique existential quantification is only as powerful as frozen
quantification for these co-clones. We state the following lemma only
for $\cc{IS}{}{11}$ but the same construction is valid also for
$\cc{IS}{n}{11}$.

\begin{lemma} \label{lemma:is1101_frozen}
  Let $\Gamma$ be a constraint language such that $\cclone{\Gamma} =
  \cc{IS}{}{11}$. Then $\cclone{\Gamma}_{\exists !} = \cclone{\Gamma}_{\mathrm{fr}}$.
\end{lemma}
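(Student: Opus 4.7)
My plan is to prove the non-trivial direction $\cclone{\Gamma}_{\exists !} \subseteq \cclone{\Gamma}_{\mathrm{fr}}$, since the reverse inclusion $\cclone{\Gamma}_{\mathrm{fr}} \subseteq \cclone{\Gamma}_{\exists !}$ always holds. The key observation is that in any relation $R' \in \cc{IS}{}{11}$, a uniquely determined coordinate is either constantly $0$ or merely a duplicate of some other coordinate; no genuine non-trivial functional dependency survives. This lets me convert any upp-definition into an fpp-definition by either recognising a quantified variable as frozen or eliminating it outright via substitution.

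First I would establish the following structural claim: if $R' \in \cc{IS}{}{11}$ has a determined coordinate $i$, then either (a) $t[i]=0$ for every $t\in R'$, or (b) there is some $j\neq i$ with $t[i]=t[j]$ for every $t\in R'$. Since $\cc{IS}{}{11} \subseteq \cc{IE}{}{0}$ (because $x \land y = x \land (y \lor y)$ shows $\land \in \cc{S}{}{11}$), Lemma~\ref{lemma:upp_ie0} applies and yields either case (a), or a list $i_1,\ldots,i_l$, which I take to be minimal, such that $t[i]=t[i_1]\land\cdots\land t[i_l]$ for all $t\in R'$, with $l\geq 1$. The case $l=1$ is exactly (b). For $l\geq 2$ I derive a contradiction using the polymorphism $f(x,y,z)=x\land(y\lor z)\in \cc{S}{}{11}$: by minimality there exist, for each $j\in[l]$, tuples $u_j\in R'$ with $u_j[i_j]=0$ and $u_j[i_{j'}]=1$ for all $j'\neq j$ (otherwise $i_j$ could be dropped from the list). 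Picking any $t^*\in R'$ with $t^*[i]=1$ (hence $t^*[i_{j'}]=1$ for all $j'$), the tuple $w=f(t^*,u_1,u_2)=t^*\land(u_1\lor u_2)$ satisfies $w[i]=1\land(0\lor0)=0$ while $w[i_{j'}]=1$ for every $j'\in[l]$, violating the identity. Thus $w\notin R'$, contradicting $R'\in\inv(f)$.

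Next I apply this to a given upp-definition. Let $R\in\cclone{\Gamma}_{\exists !}$ be $n$-ary with upp-definition $R(x_1,\ldots,x_n)\equiv \exists!\,y_1,\ldots,y_m\colon \varphi(x,y)$, and let $R^*$ be the $(n+m)$-ary relation defined by the quantifier-free body $\varphi$. Since $\varphi$ is a conjunction of atoms from $\Gamma\cup\{\eq_D\}$, we have $R^*\in\pcclone{\Gamma\cup\{\eq_D\}}\subseteq \cclone{\Gamma}=\cc{IS}{}{11}$. For each $j\in[m]$ I consider the projection $R^*_j=\pro_{1,\ldots,n,n+j}(R^*)$, which also lies in $\cc{IS}{}{11}$ and in which the coordinate of $y_j$ is still uniquely determined by $x_1,\ldots,x_n$. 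Since the only coordinates of $R^*_j$ other than $y_j$ are $x_1,\ldots,x_n$, the structural claim forces $y_j$ to be either constantly $0$ in $R^*$ (and hence in $R^*_j$) or equal in $R^*$ to some $x_{k(j)}$.

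Finally I rewrite the upp-definition. For every $j$ of the second type I substitute $y_j$ by $x_{k(j)}$ throughout $\varphi$ and drop $y_j$ from the quantifier prefix; the resulting formula still defines $R$ because each eliminated $y_j$ was functionally pinned to $x_{k(j)}$, and substitution preserves the property of being a conjunction of $\Gamma\cup\{\eq_D\}$-atoms. The remaining quantified variables are exactly those $y_j$ that are constantly $0$ in $R^*$, so each of them is constant in every model, i.e., the quantification is frozen. This exhibits $R\in\cclone{\Gamma}_{\mathrm{fr}}$. The main obstacle is the $l\geq 2$ case of the structural claim: one has to invoke exactly the right instance of $f(x,y,z)=x\land(y\lor z)$ with $t^*$ in the first slot, so that the masking $t^*[i_{j'}]=1$ preserves the ones supplied by the minimality witnesses $u_1,u_2$ while the single failure at position $i$ produces the forbidden tuple.
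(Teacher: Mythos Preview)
Your proof is correct and follows essentially the same approach as the paper: both hinge on the structural claim that a determined coordinate in any $R\in\cc{IS}{}{11}$ must be constant $0$ or redundant, both derive this from Lemma~\ref{lemma:upp_ie0} together with the polymorphism $x\land(y\lor z)$, and both then convert the upp-definition to an fpp-definition by substituting away the redundant variables.

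There is one mild technical difference worth noting. The paper takes the conjunction description $t[i]=t[i_1]\land\cdots\land t[i_k]$ directly from Lemma~\ref{lemma:upp_ie0}, obtains for each $i_j$ a witness $s_{i_j}$ with $s_{i_j}[i_j]=1$ and $s_{i_j}[i]=0$ (from the non-redundancy of $i$), and then \emph{iterates} $x\land(y\lor z)$ over all $k$ witnesses to build a tuple that is $1$ on every $i_j$ but $0$ at $i$. You instead first pass to a \emph{minimal} list, which furnishes stronger witnesses $u_j$ (with $u_j[i_j]=0$ and $u_j[i_{j'}]=1$ for $j'\neq j$), so that a \emph{single} application $t^*\land(u_1\lor u_2)$ already produces the contradictory tuple. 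Your version is a little slicker and avoids the nested expression. Your projection trick---working in $R^*_j=\pro_{1,\ldots,n,n+j}(R^*)$ so that the ``other coordinate'' in the redundancy case is forced to be some $x_{k(j)}$---is also a clean way to rule out the possibility that $y_j$ is redundant with another $y_{j'}$; the paper leaves this implicit.
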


\begin{proof}
Let $R$ be an $n$-ary relation in $\cc{IS}{}{11}$. Our aim is to prove
that if an argument $i \in [n]$ of $R$ is determined then $i$ is either redundant
or constant 0. This is sufficient to prove the claim since any
upp-definition over $\Gamma$ can then be transformed into an
fpp-definition over $\Gamma$. 

Hence, assume that $i$ is neither redundant nor constant
0. Recall that $\cc{IS}{}{11} \subset \cc{IE}{}{0}$ and that
$\clone{\{x \land (y \lor z), 0\}} = \cc{S}{}{11}$. Lemma~\ref{lemma:upp_ie0} then
implies that such an $i$ can be described as a conjunction of other
arguments, i.e., that there exists
$i_1, \ldots, i_{k} \in [k]$ distinct from $i$ such that $t[i] = t[i_1] \land \ldots
\land t[i_k]$ for every $t \in R$. Note that $k > 1$ as otherwise $i$
is redundant in $R$. Partition $R$ into two sets $R_0$ and $R_1$ such
that $R_0 = \{t \in R \mid t[i] = 0\}$ and $R_1 = \{t \in R \mid t[i]
= 1\}$; both non-empty by our assumption that $i$ is non-constant. By
the assumption that $i$ is determined by a conjunction of $i_1,
\ldots, i_k$ it first follows that there exists $s \in R_1$ where
$s[i_1] = \ldots = s[i_k] = 1$, and that there for each $i_j \in
\{i_1, \ldots, i_k\}$ exists
a tuple $s_{i_j} \in R_0$ such that $s_{i_j}[i_j] \neq s_{i_j}[i] = 0$, as
otherwise $t[i] = t[i_j]$ for each $t \in R$, and $i$ is
redundant. Now, consider an application of the form $s \land (s_{i_j} \lor
s_{i_l}) = s'$ for $i_j, i_l \in \{i_1, \ldots, i_k\}$. Since
$\pro_{i_1, \ldots, i_k}(s) = (1, \ldots, 1)$, we have that $\pro_{i_1, \ldots,
  i_k}(s') = \pro_{i_1, \ldots, i_k}(s_{i_j}) \lor \pro_{i_1, \ldots, i_k}(s_{i_l})$, and since
$s_{i_j}[i] = s_{i_l}[i] = 0$, we also know that $s'[i] = 0$. If we then
consider the application $s' = s \land (s_{i_1} \lor (s \land (s_{i_2} \lor (\ldots
\lor (s \land (s_{i_k - 1} \lor s_{i_k})) \cdots ))))$ it follows that
$\pro_{i_1, \ldots, i_k}(s') = \pro_{i_1, \ldots, i_k}(s)$ since
$\pro_{i_1, \ldots, i_k}(s') = \pro_{i_1, \ldots, i_k}(s_{i_1}) \lor
\ldots \lor \pro_{i_1, \ldots, i_k}(s_{i_k})$, and, furthermore, that
$s' \in R_0$ since $s'[i] = 0$. This contradicts the assumption that
$i$ is determined by $i_1, \ldots, i_k$, and we conclude that $i$ must
be redundant or constant 0 if it is determined in $R$.
\end{proof}

It thus only remains to
prove that $\cc{IS}{}{11}$ and $\cc{IS}{n}{11}$ do not collapse into a
single frozen co-clone. Here, we state the lemma only for
$\cc{IS}{n}{11}$, but the same argument
works for $\cc{IS}{}{11}$.

\begin{lemma} \label{lemma:11}
  Let $\Gamma_p$ denote the plain base and $\Gamma_w$ the weak base of
  $\cc{IS}{n}{11}$ ($n \geq 2$) from Table~\ref{table:weak_plain_bases}.
  Then $\cclone{\Gamma_w}_{\mathrm{fr}} \subset \cclone{\Gamma_p}_{\mathrm{fr}}$.
\end{lemma}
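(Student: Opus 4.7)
The plan is to exhibit a relation $R \in \cclone{\Gamma_p}_{\mathrm{fr}} \setminus \cclone{\Gamma_w}_{\mathrm{fr}}$, and the natural candidate is $R = \nandn{}{n}$ itself: it lies in $\Gamma_p$ and hence trivially in $\cclone{\Gamma_p}_{\mathrm{fr}}$, so everything reduces to proving $\nandn{}{n} \notin \cclone{\Gamma_w}_{\mathrm{fr}}$. Write $\Gamma_w = \{R_w\}$ where $R_w(x_1, \ldots, x_n, x, c_0) \equiv \nandn{}{n}(x_1, \ldots, x_n) \land (x_1 \lor \cdots \lor x_n \to x) \land \F(c_0)$, and suppose for contradiction that $\nandn{}{n}$ admits an fpp-definition $\exists! y_1, \ldots, y_m \colon \varphi(x_1, \ldots, x_n, y_1, \ldots, y_m)$ over $\Gamma_w$.

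The first step is to pin down the value of each frozen variable: I claim each $y_i$ must equal $0$. Since $R_w$ decomposes into Horn clauses it is preserved by componentwise $\land$, so $\varphi$ is as well, and for each fixed free-variable assignment the set of satisfying extensions to $(y_1, \ldots, y_m)$ has a unique coordinatewise minimum. Applied to $(x_1, \ldots, x_n) = (0, \ldots, 0)$, which lies in $\nandn{}{n}$, the all-zero extension satisfies every $R_w$-application and hence $\varphi$, so the minimum assigns $0$ to each $y_i$; since being frozen means $y_i$ is constant across all models of $\varphi$, we conclude $y_i = 0$ globally.

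The second step is a case analysis on each $R_w$-application. After identifying all frozen variables with $c_0$, such an application has the form $R_w(a_1, \ldots, a_n, b, c_0)$ with $a_1, \ldots, a_n, b \in \{x_1, \ldots, x_n, c_0\}$. If $b = c_0$, the implication forces each $a_j$ that is a free variable $x_k$ to be identically $0$, contradicting that the single-one tuple $(0, \ldots, 1, \ldots, 0)$ lies in $\nandn{}{n}$; hence all $a_j = c_0$ and the application is vacuous. If $b = x_k$, then taking $a_j = x_l$ with $l \neq k$ would impose $x_l \to x_k$, which fails on the tuple with $x_l = 1, x_k = 0$ (valid because $n \geq 2$); so each $a_j \in \{c_0, x_k\}$, and the application either collapses to a tautology (some $a_j = c_0$ makes both conjuncts trivially hold) or reduces to $\neg x_k$ (all $a_j = x_k$), the latter again contradicting $\nandn{}{n}$. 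Every application is therefore trivial or inconsistent with $\nandn{}{n}$, so the surviving formula defines all of $\{0,1\}^n$ rather than $\nandn{}{n}$, yielding the desired contradiction.

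The main obstacle is the first step: it would be tempting to declare frozen variables to be arbitrary constants, but the definition demands constancy across all models of the defining body, which must be genuinely deduced from the Horn structure of $R_w$ together with the observation $(0, \ldots, 0) \in \nandn{}{n}$. After that, the case analysis is essentially routine, and the identical scheme transfers to $\cc{IS}{}{11}$ by applying it to any $R_{\cc{S}{\ra}{11}}$ in the weak base $\{R_{\cc{S}{\ra}{11}} \mid \ra \geq 2\}$ from Table~\ref{table:weak_plain_bases}.
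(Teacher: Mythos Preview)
Your proof is correct in substance but takes a genuinely different route from the paper. The paper invokes the partial-polymorphism characterisation of frozen co-clones (Theorem~12 of Nordh \& Zanuttini~\cite{nordh2009}): it exhibits an $n$-ary partial operation $f$ defined on $\zeroes^n$ and on the $n$ weight-one tuples, sending the former to $0$ and each of the latter to $1$, and then verifies that $f$ preserves $R_w$ but not $\nandn{}{n}$. Your argument is instead a direct syntactic analysis of any putative fpp-definition, using only that $\zeroes \in R_w$ (which forces every frozen variable to $0$) followed by a case split on each $R_w$-atom. Your approach is elementary and self-contained, requiring no external Galois-type machinery; the paper's approach is more algebraic and arguably cleaner to verify, since checking that $f$ preserves a single relation is a finite calculation whereas your case analysis must exhaust all shapes of atoms.

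Two small points you should tighten. First, you tacitly place the last argument of every $R_w$-atom at $c_0$, but a priori it could be a free variable $x_k$; this is ruled out by the same mechanism you use elsewhere (the $\F$-conjunct would force $x_k = 0$, contradicting the weight-one tuple in $\nandn{}{n}$), so just say so explicitly. Second, fpp-definitions may contain $\eq$-constraints, which you do not mention; an equality between two distinct free variables, or between a free variable and a frozen one, is excluded for the same reason, and the remaining equalities are absorbable into your identification step. Neither point threatens the argument. Your remark about the Horn structure in Step~1 is not strictly necessary: the single observation that the all-zero assignment satisfies $\varphi$ already pins every frozen variable to $0$.
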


\begin{proof}
  We will show that there exists a partial operation $f$ such that
  $f(\zeroes^n) = 0$, and such that $f$ preserves $\Gamma_w$ but not $\Gamma_p$,
  which is sufficient to prove the claim according to Theorem 12 in
  Nordh \& Zanuttini~\cite{nordh2009}. Let $f$ be defined on $s_1,
  \ldots, s_n \in \{0,1\}^{n}$ such that the Hamming weight of each
  $s_i$ is equal to 1, and such that $f(s_1) = \ldots  = f(s_n) =
  1$. By definition, $f$ does not preserve $\{0,1\}^{n} \setminus
  \{\ones^n\}$, and does therefore not preserve $\Gamma_p$, either. We
  now claim that $f$ preserves $\Gamma_w$. Indeed, consider an
  application $f(t_1, \ldots, t_n)$ for $t_1, \ldots t_n \in
  R_w$. Then either (1) there exists $i$ such that $(t_1[i], \ldots,
  t_{n}[i])$ has Hamming weight larger than 1, or (2) $\{t_1, \ldots,
  t_n\} = \{\zeroes^{n+2}, t_i\}$ for some $i \in [n]$. To see why the
  second case is true, simply note that $t[n+1] = 1$ for every $t \in
  \Gamma_w \setminus \{\zeroes^{n+2}\}$, and if we insist that the
  Hamming weight of each $(t_1[i], \ldots,
  t_{n}[i])$ is smaller than or equal to 1, then the sequence $t_1,
  \ldots, t_n$ contains $n-1$ or $n$ repeated applications of
  $\zeroes^{n+2}$. We can now wrap up the proof by a simple case
  analysis of the tuple $t_i$. First, assume that $t_i =
  \zeroes^{n+2}$. In this case $f(t_1, \ldots, t_n) = \zeroes^{n+2} \in
  \Gamma_w$. Second, assume that $\pro_{1, \ldots, n}(t_i)$ has
  Hamming weight between 1 and $n - 1$ (the case where it is equal to $n$
  is impossible). In this case there exists $j$ such that $(t_1[j],
  \ldots, t_n[j]) = \zeroes^{n}$, implying that $f(t_1, \ldots, t_n)
  = t$ for a tuple where $\pro_{1, \ldots, n}(t) \neq \ones^{n}$, $t[n+1]
  = 1$, and $t[n+2] = 0$, and hence that $t \in \Gamma_w$.
\end{proof}














Combining the results in this section we can now finally prove our
dichotomy theorem.










\begin{theorem} \label{thm:dic}
  Let $\cclone{\Gamma}$ be a Boolean co-clone. Then $\cclone{\Gamma}$
  is not $\exists !$-covered if and only if

\begin{enumerate}
\item
  $\cclone{\Gamma} \in \{\cc{IE}{}{}, \cc{IE}{}{0}, \cc{IV}{}{},
  \cc{IV}{}{1}\}$, or
\item
  $\cclone{\Gamma} \in \{\cc{IS}{n}{01}, \cc{IS}{n}{11} \mid n \geq
  2\} \cup \{\cc{IS}{}{01}, \cc{IS}{}{11}\}$ (where, in addition,
  $\cclone{\Gamma}_{\exists !} = \cclone{\Gamma}_{\mathrm{fr}}$).
\end{enumerate}

\end{theorem}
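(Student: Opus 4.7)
The plan is to assemble the dichotomy by combining the lemmas of Section~\ref{sec:bool} with duality. Since $\inv(\cloneFont{C})$ is $\exists !$-covered if and only if $\inv(\mathrm{dual}(\cloneFont{C}))$ is, as noted at the start of Section~\ref{sec:bool}, it suffices to classify every Boolean co-clone $\inv(\cloneFont{C})$ with $\cloneFont{C} \not\supseteq \cc{V}{}{2}$; the remaining cases then follow by duality. The proof therefore reduces to a systematic walk through the relevant half of Post's lattice, checking that each co-clone falls under exactly one of the preceding results.

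For the covered direction, Lemma~\ref{lemma:simple} handles every co-clone contained in $\cc{IM}{}{2}$, in $\cc{IL}{}{2}$, or in $\cc{IS}{}{12}$, together with $\cc{IS}{}{10}, \cc{IS}{}{1}$, and $\cc{IS}{n}{10}, \cc{IS}{n}{1}$ for $n \geq 2$. Theorem~\ref{thm:ess_boolean} (resting on Theorem~\ref{thm:ess}) handles the Boolean co-clones whose polymorphism clone is essentially unary. Lemma~\ref{lemma:id2} takes care of $\cc{ID}{}{2}$, and Lemma~\ref{lemma:ie1_and_ie2} of $\cc{IE}{}{1}$ and $\cc{IE}{}{2}$. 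Collectively these results cover every co-clone with $\cloneFont{C} \not\supseteq \cc{V}{}{2}$ that is not listed in items (1) or (2) of the statement.

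For the negative direction, Theorem~\ref{lemma:ie0} (and the analogous argument for $\cc{IE}{}{}$ indicated there) establishes that $\cc{IE}{}{0}$ and $\cc{IE}{}{}$ are not $\exists !$-covered; dualising then yields the same for $\cc{IV}{}{1}$ and $\cc{IV}{}{}$. Lemma~\ref{lemma:11} combined with Lemma~\ref{lemma:is1101_frozen} shows that $\cc{IS}{n}{11}$ and $\cc{IS}{}{11}$ are not $\exists !$-covered and moreover that $\cclone{\Gamma}_{\exists !} = \cclone{\Gamma}_{\mathrm{fr}}$ for every base $\Gamma$ of these co-clones; applying duality again gives the analogous conclusion for $\cc{IS}{n}{01}$ and $\cc{IS}{}{01}$, completing the classification.

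The bulk of the work has already been carried out in the preceding lemmas, so the main obstacle is organisational: one must verify that the cases covered by Lemma~\ref{lemma:simple}, Theorem~\ref{thm:ess_boolean}, Lemma~\ref{lemma:id2}, and Lemma~\ref{lemma:ie1_and_ie2}, together with items (1) and (2), exhaust Post's lattice restricted to $\cloneFont{C} \not\supseteq \cc{V}{}{2}$ with no gaps. One must also check that the frozen-quantification clause in item (2) survives dualisation, which follows since a frozen pp-definition of $R$ over $\Gamma$ immediately yields a frozen pp-definition of $\mathrm{dual}(R)$ over $\mathrm{dual}(\Gamma)$.
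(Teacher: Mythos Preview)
Your proposal is correct and follows essentially the same approach as the paper's own proof, which likewise assembles the dichotomy by citing Lemma~\ref{lemma:simple}, Theorem~\ref{thm:ess_boolean}, Lemma~\ref{lemma:id2}, and Lemma~\ref{lemma:ie1_and_ie2} for the covered cases and Theorem~\ref{lemma:ie0}, Lemma~\ref{lemma:is1101_frozen}, and Lemma~\ref{lemma:11} (plus duality) for the non-covered ones. You have in fact made explicit a couple of points the paper leaves implicit, namely the exhaustiveness check over Post's lattice and the preservation of the frozen-quantification clause under dualisation.
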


\begin{proof}
  Each negative case either follows immediately from Lemma~\ref{lemma:ie0},
  Lemma~\ref{lemma:is1101_frozen}, Lemma~\ref{lemma:11}, or is the dual
  of one of those cases. Each $\exists !$-covered co-clone 
  is proved in Lemma~\ref{lemma:simple},
  Theorem~\ref{thm:ess_boolean}, Lemma~\ref{lemma:id2}, and
  Lemma~\ref{lemma:ie1_and_ie2}. 
\end{proof}

\section{Applications in Complexity}
\label{sec:complexity}
In this section we apply Theorem~\ref{thm:dic} to study the complexity
of computational problems not compatible with pp-definitions. Let us
begin by defining the {\em constraint satisfaction problem} over a
constraint language $\Gamma$ ($\CSP(\Gamma)$).

\smallskip
\noindent
{\sc Instance:} A tuple $(V,C)$ where $V$ is a set of variables and
$C$ a set of constraints of the form $R_i(x_{i_1}, \ldots,
x_{i_{\ar(R)}})$ for $R_i \in \Gamma$.

\noindent
{\sc Question:} Does $(V,C)$ have at least one model? That is, a function $f \colon V \rightarrow D$
such that $f(x_{i_1}, \ldots, x_{i_{\ar(R_i)}}) \in R_i$ for each
$R_i(x_{i_1}, \ldots, x_{i_{\ar(R_i)}}) \in C$?
\smallskip

For Boolean constraint languages $\Gamma$ we write $\SAT(\Gamma)$ instead of
$\CSP(\Gamma)$. If $\Delta
\subseteq \cclone{\Gamma}$ (or, equivalently, $\pol(\Gamma) \subseteq
\pol(\Delta)$) then $\CSP(\Delta)$ is polynomial-time reducible to
$\CSP(\Gamma)$~\cite{jeavons1998}. However, there exist many natural
variants of CSPs not compatible with pp-definitions, but compatible
with more restricted closure operators such as upp-definitions. One
such example is the {\em unique satisfiability problem} over a Boolean
constraint language $\Gamma$ ($\USAT(\Gamma)$).



\noindent
\smallskip
{\sc Instance:} A $\SAT(\Gamma)$ instance $I$.

\noindent
{\sc Question:} Does $I$ have a unique model?
\smallskip

The unrestricted $\USAT$ problem, i.e., the $\USAT$ problem where all
possible constraints are allowed, can be seen as the intersection of
satisfiability (in NP), and the satisfiability problem of checking if
a given instance does not admit two distinct models (in co-NP). Hence,
$\USAT$ is included in the second level of the Boolean hierarchy,
BH$_2$, but is not believed to be complete for this
class~\cite{PAPADIMITRIOU1984244}. This unclear status motivated Blass
and Gurevich~\cite{BLASS198280} to introduce the complexity class {\em
unique polynomial-time}, US, the set of decision problems solvable by
a non-deterministic polynomial-time Turing machine where an instance
is a yes-instance if and only if there exists a unique accepting
path. Blass and Gurevich then quickly observed that $\USAT$ is
US-complete and that US $\subseteq$ BH$_2$.


We will present a simple, algebraic proof of Juban's trichotomy
theorem for $\USAT(\Gamma)$~\cite{juban99}, showing that
$\USAT(\Gamma)$ for finite $\Gamma$ is either tractable,
co-NP-complete, or US-complete. Using our machinery we will also be
able to generalise this result to arbitrary infinite constraint
languages. However, for infinite $\Gamma$ we first need to specify a
method of representation. We assume that the elements $R_1, R_2,
\ldots$ of $\Gamma$ are recursively enumerable by their arity, are
represented as lists of tuples, and that there exists a computable
function $f \colon \mathbb{N} \rightarrow \mathbb{N}$ such that for
every $k \geq 1$ and every $k$-ary relation $R$, $R \in
\cclone{\Gamma}_{\exists !}$ if and only if $R \in \cclone{\Gamma \cap
\relations{\{0,1\}}^{\leq f(k)}}_{\exists !}$. Thus, if a relation is
upp-definable it is always possible to bound the arities of the
required relations in the definition.
The complexity of
$\USAT(\Gamma)$ is then determined by $\cclone{\Gamma}_{\exists !}$ in
the following sense.

\begin{theorem} \label{thm:usat_complexity}
  Let $\Gamma$ and $\Delta$ be Boolean constraint
  languages. If $\Delta \subseteq \cclone{\Gamma}_{\exists !}$ is
  finite then $\USAT(\Delta)$ is polynomial-time many-one reducible to $\USAT(\Gamma)$.
\end{theorem}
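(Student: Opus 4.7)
The natural plan is to lift the upp-definitions of each relation in $\Delta$ to a gadget replacement at the instance level, exactly as in the standard pp-reduction between CSPs, and then exploit the uniqueness of determined variables to argue that the gadget is not just a reduction but a parsimonious one.

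First, since $\Delta$ is finite I can fix, once and for all, a upp-definition
\[
R(x_1, \ldots, x_{\ar(R)}) \equiv \exists!\, y_1^R, \ldots, y_{m_R}^R \colon \varphi_R(x_1, \ldots, x_{\ar(R)}, y_1^R, \ldots, y_{m_R}^R)
\]
for each $R \in \Delta$, where $\varphi_R$ is a conjunction of atoms over $\Gamma \cup \{\eq_{\{0,1\}}\}$ and each $y_j^R$ is uniquely determined by $x_1, \ldots, x_{\ar(R)}$. The sizes of these gadgets are constants depending only on $\Delta$. Given an instance $I = (V,C)$ of $\USAT(\Delta)$, I build $I' = (V',C')$ by replacing each constraint $R(x_{i_1}, \ldots, x_{i_{\ar(R)}}) \in C$ with a fresh copy of $\varphi_R(x_{i_1}, \ldots, x_{i_{\ar(R)}}, z^R_{1}, \ldots, z^R_{m_R})$, where the $z^R_j$ are new variables distinct from all other variables in the construction. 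Finally, any $\eq_{\{0,1\}}$-atoms that appear can be eliminated in the standard way by merging the two variables into a single representative. The whole construction is clearly computable in polynomial time, since $\Delta$ is finite and the gadget for each constraint has constant size.

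The correctness argument is where the upp-definitions earn their keep. Given a model $f$ of $I$, each constraint $R(x_{i_1}, \ldots, x_{i_{\ar(R)}})$ of $I$ is satisfied by $f$; by the upp-definition, there is a \emph{unique} extension of $f$ to the auxiliary variables $z^R_j$ making $\varphi_R$ true, since each $y_j^R$ is uniquely determined by $x_1, \ldots, x_{\ar(R)}$. Taking the union of these unique extensions over all constraints produces a unique extension of $f$ to a model $f'$ of $I'$. Conversely, every model of $I'$ restricts to a model of $I$ by projecting away the auxiliary variables. This yields a bijection between the model sets of $I$ and $I'$, so $I$ has exactly one model iff $I'$ does, giving a parsimonious (hence many-one) reduction.

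The main obstacle I anticipate is the bookkeeping at the boundary between constraints: I must make sure that the auxiliary variables introduced for different constraints are kept disjoint, so that the ``uniquely determined'' property is evaluated locally inside each gadget rather than globally. If two gadgets shared an auxiliary variable, uniqueness within each gadget would not immediately give uniqueness of the joint extension. Beyond this, one minor point is to observe that the presence of $\eq_{\{0,1\}}$ in the upp-definitions is harmless, since identifying variables preserves the number of models and the uniquely-determined-ness of the remaining auxiliary variables. Everything else is direct from the definition of upp-definability and the fact that parsimonious reductions compose and preserve uniqueness of solutions.
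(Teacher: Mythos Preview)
Your proposal is correct and follows essentially the same approach as the paper: precompute a upp-definition for each $R \in \Delta$ (possible since $\Delta$ is finite), perform the standard gadget replacement with fresh auxiliary variables, eliminate equality atoms by identification, and observe that uniqueness of models is preserved because the auxiliary variables are determined. Your argument is in fact more explicit than the paper's about the bijection between model sets and the need to keep auxiliary variables disjoint across gadgets; the only point the paper adds that you leave implicit is invoking the representation assumption on $\Gamma$ (the computable bound $f$) to justify that the finitely many upp-definitions can actually be found and hardcoded when $\Gamma$ is infinite, but this is a minor technicality given that $\Gamma$ and $\Delta$ are fixed.
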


\begin{proof}
  By assumption every $R \in \Delta$ is upp-definable over
  $\Gamma$. First let $k = \max \{f(\ar(R)) \mid R \in \Delta\}$. We
  then begin by computing a upp-definition of $R$
  over $\Gamma \cap \relations{\{0,1\}}^{\leq k}$, and
  store this upp-definition in a table. Since $\Delta$ is finite this
  can be done in constant time. Next, given an instance $I = (V, C)$
  of $\USAT(\Delta)$, we similar to the ordinary $\CSP$ case simply
  replace each constraint in $C$ by its upp-definition over $\Gamma$,
  and identify any potential variables occurring in equality
  constraints. This procedure might introduce additional variables,
  but since they are all determined by $V$, the existence of a unique
  model is preserved.
\end{proof}

\begin{theorem} \label{thm:usat}
  Let $\Gamma$ be a Boolean constraint language. Then $\USAT(\Gamma)$
  is co-NP-complete if $\cclone{\Gamma} \in \{\cc{II}{}{0},
  \cc{II}{}{1}\}$, US-complete if $\cclone{\Gamma} = \cc{II}{}{2}$, and
  is tractable otherwise.
\end{theorem}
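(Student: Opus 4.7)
The strategy is to identify, for each complexity class in the trichotomy, the ``maximal'' co-clones realising that complexity and then to transfer hardness from a convenient base to every base of that co-clone via Theorem~\ref{thm:usat_complexity} together with Theorem~\ref{thm:dic}. Theorem~\ref{thm:ess_boolean} shows that each of $\cc{II}{}{0}$, $\cc{II}{}{1}$, and $\cc{II}{}{2}$ is $\exists !$-covered, so upp-definitions coincide in power with pp-definitions over these three co-clones; hence any reduction implemented via upp-definitions automatically preserves the number of solutions, which is what uniqueness requires.

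For membership I would argue as follows. In general $\USAT(\Gamma)$ lies in US, since the class is defined precisely by the existence of a unique accepting branch of a nondeterministic polynomial-time machine, and one can guess and verify a model. If $\cclone{\Gamma} \subseteq \cc{II}{}{0}$, then every constraint contains $\zeroes$ and the all-zero assignment is always a model; uniqueness thus reduces to the nonexistence of a second, nonzero model, which is a co-NP predicate. The case $\cc{II}{}{1}$ is dual. For any remaining co-clone, $\cclone{\Gamma}$ is contained in one of the Schaefer-tractable classes and is strictly smaller than $\cc{II}{}{0}$, $\cc{II}{}{1}$, and $\cc{II}{}{2}$, and I would invoke known polynomial algorithms to decide uniqueness: affine relations admit Gaussian elimination so the model count follows directly; Horn (resp.\ dual-Horn) relations have a polynomial-time computable minimum (resp.\ maximum) model, and uniqueness is checked by flipping each variable in turn and re-testing consistency; bijunctivity reduces to 2-SAT propagation. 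Finally, when $\cclone{\Gamma} \subseteq \cc{II}{}{}$ both $\zeroes$ and $\ones$ are automatic and distinct models, so uniqueness fails trivially.

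For hardness, US-completeness of $\cc{II}{}{2}$ follows from Blass and Gurevich's theorem that unrestricted $\USAT$ on 3-CNF is US-complete: since $\cc{II}{}{2}$ is $\exists !$-covered, every base $\Gamma$ upp-defines every Boolean relation, and Theorem~\ref{thm:usat_complexity} transports the hardness. For $\cc{II}{}{0}$ I would reduce from $\UNSAT$. Given a 3-CNF $\varphi(x_1, \ldots, x_n)$, introduce a fresh switch variable $b$ and, for each clause $C$ of $\varphi$, install the relation $R_C(\mathbf{x}_C, b)$ that holds precisely when either $b = 0$ and all variables of $C$ take value $0$, or $b = 1$ and $C$ is satisfied. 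Each $R_C$ contains the all-zero tuple and so belongs to $\cc{II}{}{0}$, hence by the covering of $\cc{II}{}{0}$ it is upp-definable over $\Gamma$; substituting these upp-definitions yields an instance whose only $b=0$ model is the all-zero assignment and whose $b=1$ models correspond bijectively to the satisfying assignments of $\varphi$. The instance therefore has a unique model if and only if $\varphi$ is unsatisfiable, establishing co-NP-hardness. The case $\cc{II}{}{1}$ is dual.

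The main obstacle I anticipate is making the membership argument uniform across the many Schaefer-tractable co-clones: each of the Horn, dual-Horn, bijunctive, and affine cases needs its own, admittedly standard, polynomial procedure for deciding uniqueness, and one must also verify that co-clones strictly below $\cc{II}{}{0}$ (such as $\cc{IE}{}{0}$) fall into one of these tractable regimes rather than inheriting the co-NP-hardness of the enclosing class. The hardness direction, by contrast, is comparatively clean once $\exists !$-covering has been established, since Theorem~\ref{thm:usat_complexity} then performs all the bookkeeping needed to lift a single convenient gadget to every base of the target co-clone.
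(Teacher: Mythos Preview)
Your overall strategy matches the paper's: exploit the $\exists !$-covering of $\cc{II}{}{0}$, $\cc{II}{}{1}$, $\cc{II}{}{2}$ to transport hardness via Theorem~\ref{thm:usat_complexity}, and argue tractability for the remaining co-clones directly. The hardness arguments are essentially interchangeable with the paper's (the paper uses a specific $5$-ary relation together with an extra ``constant'' variable rather than your switch variable $b$, but the mechanism is the same reduction from $\UNSAT$).

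There is, however, a genuine gap in your tractability analysis. You assert that every co-clone other than $\cc{II}{}{2}$, $\cc{II}{}{0}$, $\cc{II}{}{1}$, $\cc{II}{}{}$ is contained in one of the Schaefer-tractable classes (Horn, dual Horn, bijunctive, affine). This is false: the complementive co-clones $\cc{IN}{}{2} = \inv(\overbar{x})$ and $\cc{IN}{}{} = \inv(\{\overbar{x},0,1\})$ are not contained in any of these --- indeed $\SAT(\Gamma)$ is NP-complete there (not-all-equal SAT being a typical instance). The paper handles these two co-clones by a separate one-line observation: every relation closed under complement is either empty or has at least two tuples, so any satisfiable instance has at least two models and one may always answer ``no''. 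Your argument needs this extra case.

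On a secondary point, the paper sidesteps the ``main obstacle'' you anticipate (constructing bespoke uniqueness algorithms for Horn, dual Horn, bijunctive, and affine) by citing known polynomial-delay model enumeration for all Schaefer-tractable languages: one simply enumerates models and answers ``no'' as soon as a second one appears. Your case-by-case approach would also work, but the enumeration result renders the tractability argument uniform and short.
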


\begin{proof}
  We begin with the tractable cases and assume that $\cclone{\Gamma}
\notin \{\cc{II}{}{0}, \cc{II}{}{1}, \cc{II}{}{2}\}$. If
$\cclone{\Gamma} = \cc{IN}{}{2}$ or $\cclone{\Gamma} = \cc{IN}{}{}$ then any instance of $\USAT(\Gamma)$
is either unsatisfiable or has at least two models, since any $R \in
\cc{IN}{}{} \subset \cc{IN}{}{2}$ is closed under complement. Then, trivially,
$\USAT(\Gamma)$ can be solved in constant time by always
answering no. Similarly, if $\cclone{\Gamma} = \cc{II}{}{}$ then any
instance of $\USAT(\Gamma)$ has at least two models, and we simply
answer no. Every other case can then be solved efficiently by
enumerating models wih polynomial delay~\cite{schnoor2007}, by 
answering no if more than one model is found.


For the intractable cases assume first that $\cclone{\Gamma} =
\cc{II}{}{0}$. Membership in co-NP is clear since a no-instance can be
verified by any non-constant model. Let $\UNSAT(\Gamma)$ denote the
unsatisfiability problem over a Boolean constraint language $\Gamma$,
and consider the problem $\UNSAT(\{R\})$ where $R = \{(0,0,1,0,1),
(0,1,0,0,1), (1,0,0,0,1)\}$. It is readily seen that $\cclone{R} =
\cc{II}{}{2}$ since $R$ is only preserved by projections, implying that $\UNSAT(\{R\})$ is co-NP-complete, and we
will show co-NP-hardness of $\USAT(\Gamma)$ by a
polynomial-time many-one reduction from $\UNSAT(\{R\})$ to $\USAT(\{R
\cup \{(0,0,0,0,0)\}\})$. Given an instance $(V,C)$ of $\UNSAT(\{R\})$
we begin by introducing one fresh variable $c_1$, and for each
variable $x^5_i$ occurring in a constraint $R(x^1_i, x^2_i, x^3_i, x^4_i,
x^5_i) \in C$ we replace $x^5_i$ with $c_1$. Then, we for each constraint
$R(x^1_i, x^2_i, x^3_i, x^4_i,
c_1)$ replace it by $(R \cup \{(0,0,0,0,0)\})(x^1_i, x^2_i, x^3_i, x^4_i,
c_1)$, and let $I'$ be the resulting instance of $\USAT(\{R \cup
\{(0,0,0,0,0)\}\})$. The fresh variable $c_1$ which occurs in every
constraint ensures that if a
constraint is satisfied by assigning all variables 0, then all other
variables have to be assigned 0 as well. It follows that $I'$ admits a unique model,
namely the model where each variable is assigned 0, if and only if $I$
is unsatisfiable. Theorem~\ref{thm:dic} and Theorem~\ref{thm:usat_complexity}
then gives co-NP-hardness for every other base $\Gamma$ of $\cc{I}{}{0}$ for
free. The case when $\cclone{\Gamma} = \cc{I}{}{1}$ is entirely
analogous and we therefore omit it.

Last, assume that $\cclone{\Gamma} = \cc{II}{}{2}$. Let $\UKSAT{k}$ denote
the $\USAT$ problem restricted to constraints formed by $k$-ary
clauses, and recall that $\USAT$ is US-complete. Following the
succinct upp-definition provided in Example~\ref{ex:upp} we may then
conclude that $\UKSAT{k}$ is also US-complete for every $k \geq 3$. Since
each $k$-ary clause can be represented by a $k$-ary relation,
Theorem~\ref{thm:dic} and Theorem~\ref{thm:usat_complexity} then shows
US-completeness for every base $\Gamma$ of $\cc{II}{}{2}$.
\end{proof}

A complexity classification akin to Theorem~\ref{thm:usat} is useful
since it clearly separates tractable from intractable cases. However,
in the last decade, a significant amount of research has been devoted
to better understanding the ``fine-grained'' complexity of 
intractable problems, with a particular focus on ruling out algorithms
running in $O(c^{|V|})$ time for every $c > 1$, so-called {\em
  subexponential time}. This line of research originates from
Impagliazzo et al.~\cite{impagliazzo2001} who conjectured that 3-SAT
is not solvable in subexponential time; a conjecture known as the {\em
exponential-time hypothesis} (ETH). Lower bounds for $\USAT(\Gamma)$
can then be proven using the ETH and the
results from Section~\ref{sec:power}.



\begin{theorem} \label{thm:usat_lower}
  Let $\Gamma$ be a Boolean constraint language such that
  $\USAT(\Gamma)$ is US-complete or co-NP-complete.
  Then $\USAT(\Gamma)$ is not
  solvable in subexponential time, unless the ETH is false.
\end{theorem}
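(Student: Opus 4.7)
The plan is to reduce, in a variable-linear fashion, from a problem known to require exponential time under ETH. The argument splits along the two intractable regimes identified in Theorem~\ref{thm:usat}, and in both regimes the key leverage is that the upp-reductions built in Theorem~\ref{thm:usat_complexity} are ``local'' in the sense that each constraint is replaced by a fixed-size upp-gadget.

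For the co-NP-complete cases $\cclone{\Gamma}\in\{\cc{II}{}{0},\cc{II}{}{1}\}$, I would start from the fact that ETH implies $\UNSAT$ restricted to 3-CNF inputs is not solvable in subexponential time (otherwise complementing the decision would give a subexponential 3-SAT algorithm). Next, I would observe that the reduction used in Theorem~\ref{thm:usat} from $\UNSAT(\{R\})$ (with $\cclone{R}=\cc{II}{}{2}$) to $\USAT(\{R\cup\{\zeroes^5\}\})$ adds exactly one fresh variable $c_1$ and leaves the constraint structure intact, so it preserves variable counts up to $+1$. Gadget reductions from 3-CNF $\UNSAT$ to $\UNSAT(\{R\})$ use classical pp-definitions of 3-clauses over $R$, and after invoking the sparsification lemma to make the number of 3-clauses linear in $|V|$, the overall chain is variable-linear. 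Finally, I would propagate from this single base to every base of $\cc{II}{}{0}$ (resp.\ $\cc{II}{}{1}$) via Theorem~\ref{thm:usat_complexity}, using the representation assumption on $\Gamma$ to ensure the upp-gadgets have constant size per replaced constraint and applying sparsification once more on the target side.

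For the US-complete case $\cclone{\Gamma}=\cc{II}{}{2}$, I would invoke the result of Calabro, Impagliazzo, Kabanets and Paturi showing that $\UKSAT{k}$ for $k\geq 3$ cannot be solved in subexponential time under (the randomized strengthening of) ETH. This gives the base subexponential lower bound at the relation $R$ encoding a 3-clause, which is itself a base of $\cc{II}{}{2}$. The upp-definition in Example~\ref{ex:upp} then reduces $\UKSAT{k}$ to $\UKSAT{k-1}$ with one fresh, uniquely determined variable per $k$-clause, which composes into a variable-linear reduction down to $\UKSAT{3}$. Propagation to any base $\Gamma$ of $\cc{II}{}{2}$ is again handled by Theorem~\ref{thm:usat_complexity} combined with sparsification, since each 3-clause is replaced by a fixed upp-gadget over $\Gamma\cap\relations{\{0,1\}}^{\leq f(3)}$.

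The main obstacle I expect is the US-complete case: the ETH itself concerns deterministic algorithms for 3-SAT, whereas the cleanest route to a subexponential lower bound for $\UKSAT{3}$ (and hence for US-complete $\USAT(\Gamma)$) goes through randomized isolation. One either cites the Calabro--Impagliazzo--Kabanets--Paturi equivalence and states the conclusion modulo randomized-ETH, or argues more directly by noting that the $\USAT$ instances produced by the standard gadget reductions from 3-SAT can be made ``automatically unique'' by the upp-gadgets themselves, bypassing isolation. A secondary obstacle is the bookkeeping in the variable-linear reductions: one must combine sparsification of the source instance with the constant-size blow-up of upp-gadgets and the identification of variables across equality atoms, and verify that the existentially quantified variables in each gadget---whose count depends on the maximum arity of relations in $\Gamma\cap\relations{\{0,1\}}^{\leq f(k)}$---remain bounded by a constant depending only on $\Gamma$ and not on the input size.
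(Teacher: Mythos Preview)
Your treatment of the co-NP-complete cases is workable but more elaborate than necessary. The paper simply observes that for $\cclone{\Gamma}\in\{\cc{II}{}{0},\cc{II}{}{1}\}$ the problem $\USAT(\Gamma)$ is the complement of deciding whether a non-constant (equivalently, surjective) model exists, and then cites an existing ETH lower bound for $\textsc{SUR-SAT}(\Gamma)$ from Jonsson et al. Your chain through sparsification, $\UNSAT(\{R\})$, and the $+1$-variable reduction from Theorem~\ref{thm:usat} would also arrive at a variable-linear reduction, so nothing is wrong there; it is just a longer detour.

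The US-complete case, however, has a genuine gap, and you have put your finger on it yourself. Invoking Calabro--Impagliazzo--Kabanets--Paturi only yields the conclusion under \emph{randomised} ETH, which is weaker than what the theorem claims. Your proposed escape hatch---that ``the $\USAT$ instances produced by the standard gadget reductions from 3-SAT can be made automatically unique by the upp-gadgets''---does not work: upp-gadgets make the \emph{auxiliary} variables unique relative to the free ones, but they do nothing to collapse the potentially many models of the original 3-SAT instance down to one. A parsimonious reduction from 3-SAT still outputs an instance with exactly as many models as the input, so you cannot land in $\USAT$ this way without isolation.

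The paper avoids isolation entirely by a direct deterministic reduction from $\SAT(\Rdddp)$ (on sparse instances) to $\USAT(\Rdddp)$. The trick is a ``switch'' variable $x$: one replaces each constraint $\Rdddp(\mathbf{x})$ by a constraint over the relation $R_{\lor}$ that is trivially satisfied whenever $x=1$, and adds implications $x\rightarrow x_i$ for every original variable $x_i$ (all of this upp-defined over $\Rdddp$, which is possible since $\cc{II}{}{2}$ is $\exists!$-covered). Then $x=1$ forces the all-ones assignment, which is always a model of the new instance; and $x=0$ recovers exactly the models of the original instance. Hence the new instance has a \emph{unique} model if and only if the original instance is \emph{unsatisfiable}, giving a variable-linear reduction from (co-)$\SAT$ to $\USAT$ with no randomness. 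This is the missing idea in your proposal.
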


\begin{proof}
  We begin with the case when $\cclone{\Gamma} = \cc{II}{}{0}$ or $\cclone{\Gamma} = \cc{II}{}{1}$.  First, observe that $\USAT(\Gamma)$ is 
  the complement of determining whether there exists a
  non-constant model. The latter problem, in turn, can be seen as a
  special case of the problem of determining if there exists a
  surjective model, and is in the literature referred to as
  SUR-SAT$(\Gamma)$ or $\SAT^{*}(\Gamma)$~\cite{creignou2008b}. It is furthermore known
  that if SUR-SAT$(\Gamma)$ for $\cclone{\Gamma} \in \{\cc{II}{}{0}, \cc{II}{}{1}\}$ is solvable in subexponential time then
  the ETH is false~\cite{jonssonetal2014}[Section 5].

  Second, assume that $\cclone{\Gamma} = \cc{II}{}{2}$. Assume that $\USAT(\Gamma)$ is solvable in subexponential
  time. Results from Jonsson et al.~\cite{jonsson2017} then imply that
  $\USAT(\Rdddp)$, where \[\Rdddp = \{(0,0,1,1,1,0,0,1),
  (0,1,0,1,0,1,0,1), (1,0,0,0,1,1,0,1)\},\] is solvable in
  subexponential time, too. It is furthermore known that the ETH is
  false if and only if $\SAT(\Rdddp)$ restricted to instances
  containing at most $2|V|$ constraints, is solvable in subexponential
  time~\cite{jonsson2017}. Hence, it suffices to show that the original assumption
  implies that $\SAT(\Rdddp)$, restricted to instances with at most
  $2|V|$ constraints, is solvable in subexponential time, in
  order to contradict the ETH. 
Define the $9$-ary relation $R_{\lor}$
  as $R_{\lor} = \{(x_1, \ldots, x_8, b) \mid (x_1, \ldots, x_8)
  \in \Rdddp, b \in \{0,1\}\} \cup \{(x_1, \ldots, x_8, 1) \mid (x_1,
  \ldots, x_8) \notin \Rdddp\}$. Let \[R_{\lor}(x_1, \ldots, x_8, x_9)
  \equiv \exists ! y_1, \ldots, y_{D} \colon \varphi_{R_{\lor}}(x_1,
  \ldots, x_8, x_9, y_1, \ldots, y_{D})\] be a upp-definition of
  $R_{\lor}$ over $\Rdddp$; this is possible due to
  Theorem~\ref{thm:dic}. Similarly, let $(x_1 \rightarrow x_2) \equiv
  \exists ! y_1, \ldots, y_{E} \colon \varphi_{x_1 \rightarrow
    x_2}(x_1, x_2, y_1, \ldots, y_E)$ be a upp-definition of $(x_1
  \rightarrow x_2)$ over $\Rdddp$. The reduction then proceeds as follows. Let $I = (V,C)$ be an instance of $\SAT(\Rdddp)$ where $|C| \leq
  2|V|$. 
  We introduce one fresh variable $x$ together with the
  constraints $\bigwedge^{|V|}_{i = 1} \varphi_{\rightarrow}(x, x_i,
  y^1_i, \ldots, y^{E}_i)$, where $y^{1}_1, \ldots, y^E_1, \ldots,
  y^1_n, \ldots, y^E_n$ are fresh variables. For each constraint $c_i = \Rdddp(x^1_i,
  \ldots, x^8_i)$ we then replace it by $\varphi_{R_{\lor}}(x^1_i,
  \ldots, x^8_i, x, z^{1}_{i}, \ldots, z^{D}_i)$, where $z^1_i \ldots,
  z^D_i$ are fresh variables. 

  Let $I' = (V', C')$ be the resulting instance of $\USAT(\Rdddp)$,
  where $V' = V \cup \{x\} \cup Y$ such that $Y$ consists of the 
  variables introduced when replacing constraints in $C$ by their
  upp-definitions over $\Rdddp$. We claim that
  $I$ is unsatisfiable if and only if $I'$ admits a unique
  model. First assume that $I$ is unsatisfiable. In this case any
  model $f$ of $I'$ must satisfy $f(x) = 1$. However, due to the
  constraints $\bigwedge^{|V|}_{i = 1} \varphi_{\rightarrow}(x, x_i,
  y^1_i, \ldots, y^{E}_i)$ this also implies that $f(x_i) = 1$ for
  each $x_i \in V$. But since each $y \in Y$ is determined by a set of
  variables in $V$, it follows that $I'$ has a unique model. For the
  other direction, assume that $I'$ has a unique model $f$. Assume
  first that $f(x) = 1$. Then the constraints $\bigwedge^{|V|}_{i = 1} \varphi_{\rightarrow}(x, x_i,
  y^1_i, \ldots, y^{E}_i)$ force $f(x_i) = 1$ for each $x_i \in V$,
  and it is trivial to verify that $f$ does not satisfy $I$, and that
  the existence a model of $I$ would contradict the uniqueness of
  $f$. Second, assume that $f(x) = 0$. Define $g \colon V' \rightarrow
  \{0,1\}$ such that $g(x) = 1$, $g(x_i) = 1$ for each $x_i \in V'$,
  and $g(y_i)$ for $y_i \in Y$ according to the value prescribed by
  $g$ in the constraint containing $y_i$. This is possible since each
  variable in $Y$ is determined by $V \cup \{x\}$. However, then $g$ is also a model of
  $I'$, contradicting the uniqueness assumption of $f$.

  The above construction can clearly be carried out in polynomial time
  with respect to $|V|$ and $|C|$. For the time complexity, the constraints $\bigwedge^{|V|}_{i = 1} \varphi_{\rightarrow}(x, x_i,
  y^1_i, \ldots, y^{E}_i)$ require $|V| \cdot E$ fresh variables, and
  the constraints $\bigwedge^{|C|}_{i = 1}\varphi_{R_{\lor}}(x^1_i,
  \ldots, x^8_i, x, z^{1}_{i}, \ldots, z^{D}_i)$ require $|C| \cdot D$
  fresh variables, which is bounded by $2|V|$. Hence, $|V'|$ is
  bounded by $|V| + |V|E + 2|V|D$, and 
  since $D$ and $E$ are both constant it follows that $\SAT(\Rdddp)$ is solvable in
  subexponential time, which contradicts the ETH.
\end{proof}

Using our algebraic framework, hardness results can effortlessly be proven
for the $\CSP$ generalisation of $\USAT$, i.e., the problem
$\UCSP(\Gamma)$ of answering yes if and only if the given instance
of $\CSP(\Gamma)$ admits a unique model.

\begin{theorem} \label{thm:ucsp_complexity}
  Let $\Gamma$ be a constraint language over a finite domain $D$. If
  $\cclone{\Gamma} = \relations{D}$ then $\UCSP(\Gamma)$ is
  US-complete, and if $\pol(\Gamma) = \clone{\{f\}}$ for a constant
  operation $f$, then $\UCSP(\Gamma)$ is co-NP-complete.
\end{theorem}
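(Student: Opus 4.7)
The plan is to address the two cases separately, with Theorem~\ref{thm:ess} serving as the unifying tool. In both cases $\pol(\Gamma)$ consists solely of projections (and in the second case also constant operations), so Theorem~\ref{thm:ess} gives that $\cclone{\Gamma}$ is $\exists !$-covered: every relation in $\cclone{\Gamma}$ admits a upp-definition over $\Gamma$, whose auxiliary variables are uniquely determined by the originals and therefore preserve model counts under gadget-style reductions.

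For the US-complete case ($\cclone{\Gamma} = \relations{D}$), membership in US is immediate because a nondeterministic polynomial-time machine can guess an assignment and verify the constraints, making accepting paths biject with models. For US-hardness I would reduce from $\USAT$, which is US-complete by Blass and Gurevich. Fix any two distinct elements $0_D, 1_D \in D$, replace each Boolean relation $R$ appearing in the $\USAT$-instance by its image $\hat R \subseteq D^{\ar(R)}$ under the embedding $0 \mapsto 0_D, 1 \mapsto 1_D$, and attach the unary constraint $\{0_D, 1_D\}$ to every variable. All these relations belong to $\relations{D} = \cclone{\Gamma}$, and by the covering property can be replaced by their upp-definitions over $\Gamma$. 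Because the auxiliary variables introduced are uniquely determined, the number of models is preserved, so the constructed $\UCSP(\Gamma)$-instance has a unique model iff the original $\USAT$-instance does.

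For the co-NP-complete case ($\pol(\Gamma) = \clone{\{f\}}$ with $f(x) = c$), membership follows because every $R \in \cclone{\Gamma}$ contains $\mathbf{c}^{\ar(R)}$, so the all-$c$ assignment is a model of every instance; the complement of $\UCSP(\Gamma)$ thus reduces to existence of a second model, which is in NP by guessing. For co-NP-hardness I would reduce from $\UNCSP(\Delta)$ where $\Delta$ is a finite base of $\relations{D}$, so that $\CSP(\Delta)$ is NP-complete and $\UNCSP(\Delta)$ is co-NP-complete. Given $I = (V, C)$, pick $d \in D \setminus \{c\}$, introduce a fresh flag variable $y$ with unary constraint $\{c, d\}$, attach to each $v \in V$ the binary constraint $\{(a, d) \mid a \in D\} \cup \{(c, c)\}$ on $(v, y)$, and replace each $R(\mathbf{x}) \in C$ by a constraint over the relation $R^* = \{(t, d) \mid t \in R\} \cup \{(\mathbf{c}^{\ar(R)}, c)\}$ applied to $(\mathbf{x}, y)$. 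Every relation used contains the all-$c$ tuple, hence is upp-definable over $\Gamma$ by Theorem~\ref{thm:ess}. Setting $y = c$ forces every $v$ to $c$ via the binary constraints and makes every $R^*$-constraint evaluate at $(\mathbf{c}^{\ar(R)}, c)$, producing the all-$c$ model; setting $y = d$ leaves the $v$-variables free and collapses each $R^*$-constraint to $R$, so such models biject with the solutions of $I$. Therefore the constructed instance admits a unique model iff $I$ is unsatisfiable.

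The main obstacle is designing the flag gadget in the co-NP-hardness direction: every relation used must contain the all-$c$ tuple (to stay inside $\cclone{\Gamma}$ and be upp-definable), while simultaneously the $y = c$ branch must collapse the whole instance to the constant assignment and the $y = d$ branch must faithfully encode the hard CSP. Getting the binary $(v, y)$-gadget to satisfy all three requirements at once is the delicate point; once it is in place, the rest of the argument reduces to a routine application of Theorem~\ref{thm:ess} plus bookkeeping of the uniquely-determined auxiliary variables introduced by the upp-definitions.
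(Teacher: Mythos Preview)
Your proposal is correct and follows essentially the same strategy as the paper: both cases rest on Theorem~\ref{thm:ess}, membership is argued identically, US-hardness is obtained by pulling a Boolean US-complete $\USAT$ problem up to $D$ and replacing constraints by their upp-definitions, and co-NP-hardness is obtained by a parsimonious reduction from an unsatisfiability problem in which one adds the all-$c$ tuple so that the constant assignment is always a model.

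The only noteworthy difference is the concrete gadget for the co-NP-hardness reduction. The paper fixes a single hard relation $R = \{(0,0,1),(0,1,0),(1,0,0)\}\times\{(d_1,\ldots,d_k)\}$ whose last $k$ columns are constant and shared across all constraints; adding the tuple $(d,\ldots,d)$ then makes those shared columns act as the implicit ``flag'' (either they take the value $(d_1,\ldots,d_k)$ and the instance behaves as the original $\UNCSP$, or they collapse to $d$ and everything is forced to $c=d$). You instead introduce an explicit flag variable $y\in\{c,d\}$ together with binary propagator constraints $\{(a,d)\mid a\in D\}\cup\{(c,c)\}$ and relation extensions $R^{*}$. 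Your version is more transparent and works uniformly for any finite base $\Delta$ of $\relations{D}$, whereas the paper's version is terser (it only sketches the construction by reference to the Boolean case in Theorem~\ref{thm:usat}) but avoids the extra binary constraints. Both realise the same idea and both are sound.
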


\begin{proof}
  First, assume that $\cclone{\Gamma} = \relations{D}$. Inclusion in
  US follows directly from the definition of $\UCSP(\Gamma)$. To show hardness we take a Boolean $\Delta$ such that
  $\USAT(\Delta)$ is US-complete, and since $\cclone{\Delta}_{\exists
    !} \subseteq \cclone{\Gamma}_{\exists !}$ by
  Theorem~\ref{thm:ess}, we can perform a standard gadget reduction
  from $\USAT(\Delta)$ by
  replacing constraints by their upp-definitions over $\Gamma$.

  Second, assume that $\pol(\Gamma) = \clone{\{f\}}$ for a constant
  operation $f$, and let $\{d\}$ be the image of $f$. Then a
  no-instance of $\UCSP(\Gamma)$ can be verified by any model distinct
  from the constant model where each variable is assigned $d$, implying that
  $\UCSP(\Gamma)$ is included in co-NP. The hardness argument is
  similar to Theorem~\ref{thm:usat} and we only provide a sketch of
  the construction. Let $d_1, \ldots, d_k$ be an enumeration of $D$
  and let $R = \{(0,0,1), (0,1,0), (1,0,0)\} \times \{(d_1, \ldots,
  d_k)\}$. Then the problem of checking whether a $\CSP(R)$ instance
  does not admit a model, $\UNCSP(R)$, is co-NP-complete, and we 
  then reduce $\UNCSP(R)$ to $\UCSP(R \cup \{(d, \ldots, d)\})$
  by replacing each constraint by the corresponding constraint over $R
  \cup \{(d, \ldots, d)\}$. Since $R \cup \{(d, \ldots, d)\} \in
  \cclone{\Gamma}_{\exists !}$ by Theorem~\ref{thm:ess}, co-NP-hardness carries over to
  $\UCSP(\Gamma)$.
\end{proof}

\section{Concluding Remarks and Future Research}
\label{sec:conclusions}

We have studied unique existential quantification in pp-definitions,
with a particular focus on finding constraint languages where
existential quantification and unique existential quantification 
coincide. In general, this question appears highly challenging, but
we have managed to find several broad classes of languages where this
is true, and established a complete dichotomy theorem in the Boolean
domain. We also demonstrated that upp-definitions can be applied to
obtain complexity theorems for problems in a more systematic manner
than what has earlier been possible. 
Many interesting open question hinge on the possibility of finding an
algebraic characterisation of upp-closed sets of relations. For
example, it would be interesting to determine the cardinality of the
set $\{\cclone{\Gamma}_{\exists !} \mid \Gamma \subseteq
\cc{II}{}{2}\}$, and hopefully describe all such upp-closed sets. By our classification theorem it suffices to
investigate the Boolean co-clones that are not
$\exists!$-covered, but even this question appears difficult to
resolve using only relational tools. Similarly, a continued description
of the $\exists !$-covered co-clones over finite domains would be
greatly simplified by an algebraic characterisation. Thus, given a set
of relations $\Gamma$, what is the correct notion of a
``polymorphism'' of a upp-definable relation over $\Gamma$? This
question also has a strong practical motivation: essentially all
complexity classifications for CSP related problems over non-Boolean
domain require stronger algebraic tools than pp-definitions, and this
is likely the case also for problems
that can be studied with upp-definitions.


  Another interesting topic is the following computational problem
  concerning upp-definability. Fix a constraint language $\Gamma$, and let $R$ be a
  relation. Is it the case that $R$ is upp-definable over $\Gamma$?
  The corresponding problem for pp-definitions is tractable for
  Boolean constraint languages $\Gamma$~\cite{creignou2008} while the corresponding
  problem for qfpp-definitions is
  co-NP-complete~\cite{kavvadias98,lagerkvist207f}. Note that if
  $\cclone{\Gamma}$ is $\exists !$-covered (which can be checked in
  polynomial time) then $R \in \cclone{\Gamma}_{\exists !}$ can be
  answered by checking whether $R \in \cclone{\Gamma}$. Thus, only the
  co-clones that are not $\exists !$-covered would need to be
  investigated in greater detail.

  Last, it is worth remarking that our notion of uniqueness
  quantification in pp-definitions is not the only one
  possible. Assume that we in $\exists ! x_i \colon R(x_1, \ldots,
  x_i, \ldots, x_n)$ over a domain $D$ do not require that $x_i$ is determined by $x_1,
  \ldots, x_{i-1}, x_{i+1}, \ldots, x_n$ but instead simply obtain the
  relation $\{(d_1, \ldots, d_{i-1}, d_{i+1}, \ldots, d_n) \mid \exists ! d_i
  \in D$ such that $(d_1, \ldots, d_{i-1}, d_i, d_{i+1}, \ldots, d_n) \in
  R)\}$. This notion of unique existential quantification is in
  general {\em not} comparable to existential quantification, since if
  we e.g.\ let $R = \{(0,0), (0,1), (1,0)\}$ then $\T(x) \equiv
  \exists ! y \colon R(y,x)$ even though $\T \notin \cclone{R}$, i.e.,
  is not even pp-definable by $R$ (where
  $T = \{(1)\}$). Thus, it would be
  interesting to determine the resulting closed classes of relations
  and see in which respect they differ from the ordinary co-clone lattice.

\subsection*{Acknowledgements}
We thank Andrei Bulatov for helpful discussions concerning
the topic of the paper.

\bibliographystyle{plain}
\bibliography{references}

\end{document}